\title{Algebra, Geometry and Topology of ERK Kinetics}   
\author{Lewis Marsh \and Emilie Dufresne \and Helen M. Byrne  \and Heather A. Harrington }
\renewcommand{\refname}{ALGEBRA, GEOMETRY AND TOPOLOGY OF ERK KINETICS}
\newcommand{\etal}{et al. }
\newtheorem{lemma}{Lemma}
\newtheorem{theorem}[lemma]{Theorem}
\newtheorem{proposition}[lemma]{Proposition}
\newtheorem*{result}{Biological Result}
\newtheorem{problem}[lemma]{Open Problem}
\theoremstyle{definition}
\newtheorem{definition}[lemma]{Definition}
\theoremstyle{remark}
\newtheorem*{remark*}{Remark}
\DeclareMathOperator{\RR}{\mathbb R}
\DeclareMathOperator{\NN}{\mathbb N}
\DeclareMathOperator{\ZZ}{\mathbb Z}
\DeclareMathOperator{\FF}{\mathbb F}
\DeclareMathOperator{\CC}{\mathbb C}
\DeclareMathOperator{\KK}{\mathcal K}
\begin{document}

\maketitle                  

\begin{abstract}
The MEK/ERK signalling pathway is involved in cell division, cell specialisation, survival and cell death \cite{shaul2007mek}. Here we study a polynomial dynamical system describing the dynamics of MEK/ERK proposed by Yeung \etal \cite{Yeung2020} with their experimental setup, data and known biological information. The experimental dataset is a time-course of ERK measurements in different phosphorylation states following activation of either wild-type MEK or MEK mutations associated with cancer or developmental defects. We demonstrate how methods from computational algebraic geometry, differential algebra, Bayesian statistics and computational algebraic topology can inform the model reduction, identification and parameter inference of MEK variants, respectively. 
Throughout, we show how this algebraic viewpoint offers a rigorous and systematic analysis of such models.

\end{abstract}

\color{black}

\section{Introduction}

In systems biology, dynamics play a crucial role in cellular decision making (e.g., whether a cell responds appropriately to a particular signal) \cite{klipp2013systems,voit2017first}. Molecular interactions can be modelled as systems of chemical reactions with a choice of kinetics, such as the law of mass action, which assumes that the rate at which a chemical reaction proceeds is proportional to the product of the concentrations of its reactants. From a finite set of reactions, the mass-action modelling assumption gives rise to a system of polynomial ordinary differential equations (ODEs), which are sums of monomials in which each term includes concentrations of molecular species 
as variables and coefficients  
as rates of reaction. Chemical reaction network theory (CRNT) is a mathematical field developed by Horn and Jackson, and independently by Bykov, Gorban, Volpert and Yablonsky, for analysing such reactions, and the mathematical techniques employed extend beyond dynamical systems theory to include algebraic geometry, differential algebra, algebraic statistics, and discrete mathematics \cite{dickenstein2016biochemical}. 

CRNT often focuses on steady-state analysis through the lens of computational and real algebraic geometry, asking questions about the capability or preclusion of multiple positive real steady-states (i.e., multistationarity) or more complex dynamics, often without requiring specialised parameter values \cite{banaji2009graph,craciun2005multiple,millan2012chemical,angeli2009tutorial,wang2008number,feliu2012preclusion,muller2016sign,conradi2019multistationarity}). Multi-site protein phosphorylation systems, such as the ERK/MEK signalling pathway can be translated into such chemical reactions and their multistationarity, corresponding to different biological cellular decisions, has attracted much attention \cite{thomson2009unlimited,gunawardena2007distributivity,aoki2011processive,takahashi2010spatio,markevich2004signaling}. Algebraic analyses and invariants of multi-site phosphorylation have revealed geometric information of steady-state varieties, informed experimental design, and enabled model the comparison using steady-state data \cite{manrai2008geometry,thomson2009rational,harrington2012parameter,gross2016algebraic,maclean2015parameter}. However, such systems have also been shown to exhibit nontrivial transient dynamics and oscillations \cite{conradi2019dynamics,qiao2007bistability}. In recent years, the fields of systems biology and CRNT have extended the repertoire of techniques to assert other dynamics \cite{banaji2020building,conradi2019dynamics,domijan2009bistability,mincheva2007graph,kay2017role,errami2015detection,angeli2013combinatorial},
reduce models systematically \cite{pantea,reduction,feliu2019quasisteady,sweeney2017conditions,boulier2011model,hubert2013scaling},
and assess identifiability \cite{ljung1994global,ollivier1990probleme,meshkat2009algorithm,Hong2019,DAISY}. Furthermore, combinatorial structures, such as simplicial complexes, and techniques from computational algebraic topology have enabled comparison of chemical reaction network models and their parameters \cite{vittadello2020model,nardini2020topological}. 

A previous algebraic systems biology case study \cite{gross2016algebraic} analysed a chemical reaction network model at steady-state, 
by studying the steady-state ideal, chamber complex, and algebraic matroids of the model. Here we present a sequel of such analysis to study the dynamics of chemical reaction networks with \textit{time-course data}, which relies on studying the QSS variety (Section 3), the model prediction map (Section 4) and the topology of a parameter inference (Section 5).

We perform a detailed mathematical analysis of recently published models and experimental data \cite{Yeung2020}. The \textit{Full ERK model} describes dual phosphorylation of ERK by MEK, two molecular species whose activation regulates cell division, cell specialisation, survival and cell death \cite{shaul2007mek}.
The dynamics of the six ERK/MEK molecular species $x\in\RR^{n=6}$ in the \textit{Full ERK model} are governed by a polynomial dynamical system $\dot{x}(t) = f(x(t),\theta)$, where $\theta\in\RR^{m=6}$ is the vector of parameters and there are two conservation relations between the species. The Full ERK model is presented in Section 2.
Analysing the kinetic parameters of a model
depends on the available data. The accompanying time-course experimental observations include measurements of ERK in 3 different states, at 7 time points following activation by its activated enzyme kinase MEK, which is either wild-type (WT) or mutated MEK. Mutations of MEK are known to be involved in human cancer and embryonic developmental defects; therefore understanding their kinetics and differences between wild-type and 4 mutants (e.g., Y130C, F53S, E203K or SSDD) may increase fundamental biological understanding of the pathway and contribute to the development of potential therapies. The experimental data and relevant biological information are presented in Section~\ref{sec:model}.

Using algebraic approaches first presented by Goeke, Walcher and Zerz in \cite{reduction}, we decrease the number of variables and parameters in the FUll ERK model. We derive two model reductions: the \emph{Rational ERK model} and the \emph{Linear ERK model}.  We show, with known biological information (see Section 2), that the reduction to the {Linear ERK model} by Yeung \etal \cite{Yeung2020} is mathematically sound. We note that the Rational ERK model was not analysed in \cite{Yeung2020}, although it can be derived from the Full ERK model using singular perturbation methods. A natural question is whether a quasi-steady-state approximation is justified given the experimental setup, which equates to solving an algebraic problem \cite{reduction}. We identify algebraic varieties $V_\theta$ that are (analytic) invariant sets of the ODE system and characterise neighbourhoods in parameter space for which the ODE solutions remain close to these varieties.  This systematic analysis allows us to simplify the model equations such that the dynamics of both reduced models are good approximations to the Full ERK model.  Algebraic model reduction and derivation of the reduced ERK models are given in Section~\ref{sec:algred}.


Before estimating the parameters of a model from observations, one must determine its identifiability.  
Identifiability is concerned with asking whether it is possible to recover values of the model parameters given data. A model is \textit{structurally identifiable} if parameter recovery is possible with perfect data. Mathematically, this task is equivalent to asking whether the model prediction map
is injective. The model prediction map, defined precisely in Subsection~\ref{subsec:SI}, is a map that takes a parameter to the corresponding predicted noise-free data point(s) \cite{Dufresne2018}. Real data is often noisy; testing whether parameter recovery is possible with imperfect data is the problem of \textit{practical identifiability} \cite{raue2009structural,Dufresne2018}. 
Mathematically, measurement noise induces a probability distribution in data space. Assuming that the model prediction map is injective (at least generically), practical identifiability can be defined in terms of the boundedness (with respect to a reference metric in parameter space) of the confidence regions of a likelihood test. Under our assumptions, this translates to asking whether the preimages of small bounded regions in data space are bounded in parameter space.
We prove the following:
\begin{theorem} \label{thm:LinearERKIdent}
The Linear ERK/MEK model, with the given experimental setup (number of species, number of replicates, number of measurement time-points and initial conditions), is structurally and practically identifiable.
\end{theorem}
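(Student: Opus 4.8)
The plan is to exploit the linearity of the Linear ERK/MEK model to obtain a closed-form solution, and then to treat the two identifiability questions separately through the resulting model prediction map $\phi$ defined in Subsection~\ref{subsec:SI}. Because the governing system is linear, $\dot{x} = A(\theta)\,x$ subject to the prescribed initial conditions and conservation relations, each coordinate of the solution is an explicit exponential sum
\[
x_i(t;\theta) \;=\; \sum_{j} c_{ij}(\theta)\, e^{\lambda_j(\theta)\,t},
\]
where the rates $\lambda_j(\theta)$ are the eigenvalues of $A(\theta)$ and the coefficients $c_{ij}(\theta)$ are fixed by the initial data. Restricting to the three observed phosphorylation states and evaluating at the seven measurement time-points $t_1,\dots,t_7$ realises $\phi$ as an explicit real-analytic function of $\theta$; since the replicates contribute no new noise-free information, structural identifiability reduces to the injectivity of this finite-dimensional map.

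For structural identifiability I would argue in two stages. First, I would show that the seven time-points suffice to recover the exponential representation of each observed coordinate: the sampled values of $\sum_j c_{ij} e^{\lambda_j t}$ at distinct times form a Prony-type system in the rates $\lambda_j$ and coefficients $c_{ij}$, and these can be recovered uniquely provided the number of active modes is small enough relative to the number of samples, the Vandermonde-type matrix in the $e^{\lambda_j t_k}$ being invertible once the rates are distinct. This step requires establishing distinctness of the rates and counting the modes actually present in the observed outputs. Second, I would show that the map $\theta \mapsto \bigl(\{\lambda_j(\theta)\},\{c_{ij}(\theta)\}\bigr)$ is injective, i.e. that the kinetic parameters can be reconstructed from the spectral data, by inverting the elementary-symmetric relations between the $\lambda_j$ and the entries of $A(\theta)$ together with the coefficient equations. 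Composing the two injective maps yields injectivity of $\phi$.

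For practical identifiability, by the criterion recalled in the Introduction it suffices to prove that $\phi$ has bounded preimages of bounded sets, equivalently that $\phi$ is proper onto its image. The strategy is to examine the behaviour of $\phi(\theta)$ as $\theta$ approaches the boundary of the positive parameter space, that is as individual rate parameters tend to $0$ or to $+\infty$, and to show in each case that the vector of predicted observations leaves every bounded region or degenerates onto a lower-dimensional stratum separated from generic data. Concretely, I would bound $\|\theta\|$ from above in terms of $\|\phi(\theta)\|$ by tracking how the recovered rates $\lambda_j$ and coefficients $c_{ij}$, and hence $\theta$, are controlled by the data, forcing preimages of bounded data regions to be bounded in parameter space.

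The main obstacle I anticipate is the second stage of the structural argument: ruling out the reparametrisation symmetries, such as permutations of spectral modes or compensating rescalings of rates and coefficients, that generically obstruct identifiability in linear compartmental models. Establishing that the particular structure of the Linear ERK/MEK system, namely its specific $A(\theta)$, its initial conditions, and the choice of three observed states, breaks all such symmetries is where the real work lies; the Prony recovery and the properness estimate are comparatively routine once the explicit solution is in hand.
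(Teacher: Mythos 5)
Your structural-identifiability plan follows the same basic route as the paper (Appendix~E): write down the closed-form exponential solution and prove the finite-time-point prediction map is injective. But the step you defer as ``where the real work lies'' is precisely the content of the paper's argument, and your Prony/Vandermonde framing does not quite deliver it. Classical Prony recovery presupposes equispaced nodes; for arbitrary distinct time points the relevant fact is that a nontrivial real linear combination of $k$ distinct exponentials has at most $k-1$ zeros. The paper uses exactly this, in elementary form: equality of the $S_0$-components at one nonzero time forces $\kappa_1=\kappa_1'$, and then the difference of the two candidate $S_1$-trajectories is shown, via one differentiation and Rolle's theorem, to have at most two positive zeros, so agreement at three distinct times forces equality of parameters. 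You also omit the confluent case $\kappa_1=\kappa_2$, where the solution contains a $t e^{-\kappa_1 t}$ term rather than two distinct modes; the paper handles this (and the mixed case $\kappa_2=\kappa_1\neq\kappa_2'$) by an explicit case analysis. Without these two ingredients your structural argument is a plan, not a proof.

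The practical-identifiability half contains a more serious gap. You propose an unconditional properness estimate, bounding $\|\theta\|$ by $\|\phi(\theta)\|$. This cannot work as stated: as $\kappa_2\to\infty$ (or $\pi\to 1$) with $\kappa_1$ fixed, $\phi(\theta)$ stays bounded and converges to a point on the stratum where the predicted $S_1$ vanishes at all measured times, so the preimage of a small ball centred near that stratum is unbounded. Hence boundedness of the confidence region $U_\delta(z^*)=\phi^{-1}(B_\rho(z^*))$ genuinely depends on the data point $z^*$ and on the radius $\rho$ determined by the significance level --- your phrase ``separated from generic data'' is carrying the entire burden and is exactly what must be checked against the actual measurements. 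The paper does not attempt a theoretical properness argument; it constructs the explicit prediction map, computes the MLE and the threshold $\delta$ via Algorithm~1, and verifies numerically (Figure~5) that the five confidence regions for the wild-type and mutant data points are bounded. Any correct proof of the practical half of the theorem must, one way or another, use the experimental data.
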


We provide a definition of practical identifiability that improves
a previous definition \cite{Dufresne2018}, and which is an alternative to that of Raue et al. \cite{raue2009structural}. We also propose a computable algorithm for practical identifiability. We prove Theorem 1 in Section~\ref{sec:ident}. 

We use the differential algebra method to show that the Full ERK model and the Rational ERK model are generically structurally identifiable. This result is guaranteed to be valid if we have at least $2m+1$ generic time points by Sontag's result \cite{sontag2002differential}, but can be valid with fewer generic time points. Indeed, as the Linear ERK model admits analytic solutions, we can prove that 
it is globally structurally identifiable for any choice of three distinct time points. Determining structural identifiability for specific time points in the absence of analytic solutions is an open problem.
We numerically show that the Full ERK model and Rational ERK model are not practically identifiable; however, the source of this practical non-identifiability is not completely clear (see Section 4).

Finally, for a model that is structurally and practically identifiable, one would like to infer parameters, i.e., what parameter values are consistent with the observations? We perform Bayesian inference, as done in \cite{Yeung2020}, and extend this to the Rational ERK model. 
The result of the parameter inference on the Linear ERK model is a sample point cloud of posterior densities of inferred ERK parameter kinetics that are consistent with the data. We obtain five different sample densities corresponding to the five MEK variants.

In Section~\ref{sec:topology}, we compare the geometry of the admissible regions of parameter space of the five MEK variants. We implement a theoretical framework originally proposed by Taylor \etal \cite{Taylor2019} to quantify the shape of the resulting posterior distributions using topological data analysis and facilitate a comparison between mutants. Specifically, we approximate the persistent homology of super-level sets of posterior densities by simplicial complexes. We perform these measurements on the distributions obtained from Bayesian parameter inference for 
the 5 MEK variants and compare them via a topological bottleneck distance.

\begin{result}
The topological data analysis quantifies that the Linear ERK model parameter posteriors are most different between the WT and SSDD mutant data. 
The kinetics of the SSDD mutant, which mimics phosphorylated MEK, has the largest topological distance from all other MEK/ERK mutants. 
\end{result}

This biological result raises the question of whether the SSDD variant is a suitable approximation for wild-type MEK activated by Raf, and suggests further experimental studies are needed. While the previous analysis by Yeung \etal \cite{Yeung2020} compared the variants by the inferred kinetics of each parameter, here we complement that analysis by comparing the three parameters together as a point cloud.

In summary, our aim is to showcase how systematic algebraic, geometric and topological approaches can be applied to a biologically relevant model with state-of-the-art experimental time-course data. Each of these approaches incorporates the structure of the mathematical model, experimental observations, and \textit{experimental setup and observations} (e.g., experimental initial condition, observable species, number of experimental replicates, number of time points collected, etc), as well as known biological information (e.g., published parameter values). The framework is not limited to this case study and may enhance the analysis of models in systems and synthetic biology.

\section{From ERK biochemical reactions to a polynomial dynamical system}\label{sec:model}

Protein phosphorylation alters protein function in signalling pathways and plays a crucial role in cellular decisions and homeostasis. Phosphorylation is the addition of a phosphate group by an enzyme known as a kinase, and dephosphorylation is the removal of a phosphate group by an enzyme known as a phosphatase. Multisite phosphorylation is the process of having multiple possible locations on a protein phosphorylated, which increases the number
of potential ways protein function can be altered.
The algebra, geometry, combinatorics and dynamics of multisite phosphorylation has been a source of interesting mathematical problems \cite{dickenstein2016biochemical,manrai2008geometry,conradi2019multistationarity}.
A protein with $q$ phosphorylation sites has been shown to have $2^q$ phospho-states; the sites on the protein can be phosphorylated in $q!$ possible ways \cite{thomson2009unlimited}. One of the simplest multisite phosphorylation systems is when a protein has two phosphorylation sites. We focus on the sequential dual phosphorylation of the extracellular signal regulated kinase (ERK) by its kinase activated (dually phosphorylated) MEK. The model developed by Yeung et al. \cite{Yeung2020} encodes a mixed phosphorylation mechanism (i.e., distributive and processive) by changes in parameter values rather than separate models (see e.g., \cite{conradi2015global,gunawardena2007distributivity} and references therein). This enabled them to quantify the extent to which a MEK variant is processive or distributive. We remark that the model presented by Yeung et al. does not include dephosphorylation mechanisms, since the experimental setup omitted the addition of phosphatases. 

Next, we introduce the model and the experimental data published by Yeung et al. \cite{Yeung2020}.

\subsection{The Model}

The protein \textit{substrate} ERK, is activated through dual phosphorylation by its activated \textit{enzyme} kinase MEK. As shown in the chemical reaction network (see Figure \ref{fig:crn}), unphosphorylated ERK ($S_0$) binds reversibly with its kinase MEK ($E$) to form an intermediate complex $C_1$. The complex becomes $C_2$ when a phosphate group is added.
Complex $C_2$ can then disassociate to form MEK ($E$) and ERK phosphorylated on the tyrosine site ($S_1$), or a second phosphate group is added to $C_2$, resulting in product reactants $E+S_2$. The six species and six rate constants are given in the following chemical reaction network (Figure~\ref{fig:crn}).

\begin{figure}[h!]
\centering
\begin{tikzpicture}[baseline= (a).base]
\node[scale=1] (a) at (0,0){
\begin{tikzcd}
E+S_0 \arrow[r, "k_{f_1}", shift left=0.6ex] & C_1 \arrow[l, "k_{r_1}", shift left=0.6ex] \arrow[r, "k_{c_1}"] & C_2 \arrow[r, "k_{c_2}"] \arrow[d, "k_{r_2}"', shift right=0.6ex] & E+S_2 \\
                           &                                                  & E+S_1 \arrow[u, "k_{f_2}"', shift right=0.6ex]                      &      
\end{tikzcd}
};
\end{tikzpicture}
\caption{The reaction network associated with dual phosphorylation of ERK by its activated enzyme kinase MEK.}
\label{fig:crn}
\end{figure}
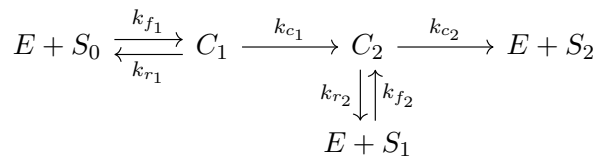
\glsxtrnewsymbol[description={Concentration of species $i$}]{Si}{\ensuremath{S_i}}
\glsxtrnewsymbol[description={Concentration of compound $i$}]{Ci}{\ensuremath{C_i}}
\glsxtrnewsymbol[description={Concentration of free enzyme}]{E}{\ensuremath{E}}
\glsxtrnewsymbol[description={Total substrate concentration}]{Stot}{\ensuremath{S_{tot}}}
\glsxtrnewsymbol[description={Total concentration of free enzyme}]{Etot}{\ensuremath{E_{tot}}}
\glsxtrnewsymbol[description={Reaction rate of reaction step $j$}]{kj}{\ensuremath{k_{j}}}
\glsxtrnewsymbol[description={Tuple of all model parameters}]{theta}{\ensuremath{\theta}}
\glsxtrnewsymbol[description={Tuple of all species concentrations in a model}]{x}{\ensuremath{x}}

We can translate this reaction network into a dynamical system $\dot{x} = f(x,\theta)$. Here, $f$ is a vector-valued function of the vectors of species concentrations $x= \{S_0, C_1, C_2, S_1, S_2,E\}$ and rate constants, referred to as parameters $\theta = \{k_{f_1},k_{r_1},k_{c_1},k_{f_2}, k_{r_2},k_{c_2} \}$. The kinetics assumption for $f$ is a modelling choice; here we assume that the law of mass action holds \cite[\S 2.1.1]{klipp2016systems}, as for the original model \cite{Yeung2020}. The resulting dynamical system of ODEs is given in Equations (\ref{eq:a}).

\begin{subequations}\label{eq:a}
\begin{flalign}
\frac{\mathrm dS_0}{\mathrm dt}&= -k_{f_1}E\cdot S_0+k_{r_1}C_1\label{eq:a_start},\\
\frac{\mathrm dC_1}{\mathrm dt}&= k_{f_1}E\cdot S_0 -(k_{r_1}+k_{c_1})C_1, \label{eq:ab}\\
\frac{\mathrm dC_2}{\mathrm dt}&= k_{c_1}C_1-(k_{r_2}+k_{c_2})C_2 + k_{f_2}E\cdot S_1,\label{eq:ac}\\
\frac{\mathrm dS_1}{\mathrm dt}&= -k_{f_2}E\cdot S_1+k_{r_2}C_2,\label{eq:ad}\\
\frac{\mathrm dS_2}{\mathrm dt}&=k_{c_2}C_2,\label{eq:ae}\\
\frac{\mathrm dE}{\mathrm dt}&= -k_{f_1}E\cdot S_0+k_{r_1}C_1-k_{f_2}E\cdot S_1 +(k_{r_2}+k_{c_2})C_2.\label{eq:a_end}
\end{flalign}
\end{subequations}
We assume that initially all species are zero, except for $S_0(t=0)=S_{tot}$ and $E(t=0)=E_{tot}$. 
 Equations~(\ref{eq:c1})-(\ref{eq:c2}) define two conserved quantities that constitute a basis for the linear space of conservation relations of the model:
\begin{flalign}
S_{tot}&=S_0+S_1+S_2+C_1+C_2,\label{eq:c1}\\
E_{tot}&=E+C_1+C_2,\label{eq:c2}
\end{flalign}
where the total amounts of substrate ERK ($S_{tot}$) and enzyme MEK ($E_{tot}$) are constant and known from the initial conditions. 

We aim to study the relationship between the species $x$, parameters $\theta$, conserved quantities, and available biological information (previous knowledge and experimental observations).
The emphasis in this paper is not to analyse the steady-state variety as in \cite{gross2016algebraic}, rather here we focus on the transient dynamics of the model and algebraic approaches to analyse ERK kinetics in light of the available biological information.

\subsection{The Data}\label{sec:data}
The data is published. Details on measurement techniques and experimental methods can be found in \cite{Yeung2020}. We present the experimental setup for the data we analyse.

\subsubsection{Experimental setup and data}
In all experiments, $0.65\mu M$ free (activated) enzyme MEK $(E)$ was added to $5\mu M$ of unphosphorylated ERK substrate $(S_0)$ along with ATP; therefore, $S_{tot}=5\mu M$ and $E_{tot}=0.65\mu M$.  ERK was measured in three states: unphosphorylated $(S_0+C_1)$, mono-phosphorylated $(S_1+C_2)$, and dually-phosphorylated ERK $(S_2)$, at 7 time points, with $r$ different experimental replicates.
The sample space for each MEK variant is
$X=\RR^{3\times 7\times r}$,
where: for human wild-type MEK, $r=11$; for 
MEK variants with phosphomimetic (SSDD), $r=6$; and for activating mutations, $r=5$. The three activating mutants of MEK are known to be involved in human cancer (E203K) or developmental abnormalities (F53S and Y130C). The ERK observations were collected at seven time points $t=\{0.5, 2, 3.25, 3.75, 5, 10, 20\}$ minutes for all MEK variants except SSDD, which were collected at $t= \{1, 2, 3.25, 5, 10, 20, 40\}$ minutes. 

\glsxtrnewsymbol[description={Tuple of all measurements of species concentrations in a set of experiments}]{X}{\ensuremath{X}}
\glsxtrnewsymbol[description={Time}]{t}{\ensuremath{t}}
\glsxtrnewsymbol[description={Michalis-Menten constant associated to compound $C_i$}]{MM}{\ensuremath{k_{M_i}}}

\subsubsection{Known biological information}
The relationship between some kinetic rate constants is known. When a substrate binds reversibly to an enzyme to form an enzyme-substrate complex, which then reacts irreversibly to form a product and release the enzyme, one can define the Michaelis-Menten constant $k_{M}$. 
In the reaction network given by Equations (\ref{eq:a}), there are two Michaelis-Menten constants
$k_{M_i}=(k_{c_i}+k_{r_i})/k_{f_i}$ for $i=1,2$. Measurements show that in our experimental setup $k_{M_i}\approx 25\mu M$ for $i=1,2$ \cite{Taylor15514}. While the reaction rates $k_{c_i}$ and $k_{r_i}$ for $i=1,2$ cannot be measured directly, they have been shown to be the same order of magnitude \cite{BarEven11}. We will use these insights to assume, henceforth, that $S_0$, $S_1$, and $S_2$ were measured (without added compound variables). We justify this mathematically in Subsection \ref{sec:output_var}.

\section{Algebraic Model Reduction}\label{sec:algred}

The first step to studying most models typically involves model reduction, which reduces the number of dependent variables and constant parameters. For many chemical reactions, there are time scales on which the rate of change of some variables is negligible and their 
dynamics is dominated by those of the remaining variables. 
This observation motivates the 
\emph{Quasi-Steady-State-Approximation} (QSSA).

Classical QSSA dates back to work by Henri, Michaelis-Menten, and Briggs and Haldane who analysed chemical reactions, using heuristic arguments based on fast and slow reactions, and assuming initial enzyme concentrations are much smaller than substrate concentrations. In the 1960s, a mathematical framework for QSS reduction using singular perturbation theory was developed by Heineken et al. \cite{Heineken1967}, 
which enabled rigorous convergence proofs. To determine parameter regions in which QSSA holds, the singular perturbation theory approach inspired the more prevalent ``slow-fast" timescale arguments in the seminal work of Segel and Slemrod \cite{segslem}. Another approach, proposed in 1983 by Schauer and Heinrich \cite{Schauer1983},
justified the Michaelis-Menten procedure mathematically by requiring that relevant trajectories of the full model remain close to the quasi-steady-state trajectories
-- also referred to in the algebraic literature as the ``QSS variety". 

In recent years, algebraic approaches to reduce polynomial ODE models have been extended by Walcher
and other applied algebraic geometers. In 2013, Pantea et al. \cite{pantea} used Galois theory to characterise chemical reaction networks for which no explicit QSSA reduction is possible. Furthermore, they provided computational tools for determining the feasibility of an explicit reduction.
Subsequently, Sweeney \cite{sweeney2017conditions} proved that the nonsolvability of polynomials poses no issue to the CRNs most commonly encountered in practice and derived a more efficient algorithm for determining explicit reducibility by translating algebraic structures into graphs. Goeke and Walcher (2014) \cite{Goeke2014}
provide an explicit formula for obtaining a reduced QSSA model using a subset of an algebraic variety defined by the slow manifold.  Subsequently, Goeke et al. (2017) \cite{reduction} characterised parameter values at which QSSA reduction is accurate using algebraic varieties and bounds on the polynomials governing the ODE system on a bounded parameter- and variable-domain. 
Most recently, 
Feliu et al. \cite{feliu2019quasisteady} derived
necessary and sufficient conditions for purely algebraic reductions of a CRN model to agree with 
model reductions derived via classical singular-perturbation theory \cite{Keener2011, Segel1988}.

In this Section, we briefly review QSSA using classical singular-perturbation theory as well as the algebraic approaches developed by Goeke et al. \cite{reduction, Goeke2014}. We then apply both methods to the full ERK model (Equations (\ref{eq:a})).
We show both approaches can generate the same QSSA-reduction of our ERK model, which we will call the Rational ERK model. Additionally, the algebraic method can yield a linear QSSA-reduction of our ERK model in a single step (which we call the Linear ERK model). By contrast, the singular-perturbation-theory approach requires additional assumptions on parameter values to arrive at the Linear ERK model (see Subsection \ref{sec:algebraic_erk}). 
We show that the Linear ERK model approximates the Full ERK model (Equations (\ref{eq:a})) with similar accuracy as the Rational ERK model in the context of the experimental setup, data and known biological information (see Subsection \ref{sec:algebraic_erk}; Appendix \ref{sec:accuracy} for details).

With the algebraic method, we provide a rigorous mathematical justification of the Linear ERK model presented by Yeung et al \cite{Yeung2020}. By comparing the singular perturbation method with the algebraic method and the two resulting model reductions, we 
illustrate how the algebraic methods form a well-structured
approach for arriving at a QSS reduction and for assessing the accuracy of such reductions systematically.

\subsubsection*{Notation for Model Reduction}
Throughout, we will assume we have an ODE system in variables $x\in\RR^n$ and parameters $\theta\in\RR^m$. 
If the system dynamics are governed by $f$, a vector of polynomials in $\RR[x,\theta]^n$, then our ODE system is given by
\begin{equation}
\frac{\mathrm{d} x}{\mathrm{d}t} = f(x,\theta). \label{eq:ode}
\end{equation}
For $1\leq q<n$, we may define
\begin{flalign}\notag
&\quad x^{[1]}=(x_1,...,x_q),&&\quad f^{[1]}=(f_1,...,f_q),&\\\notag
&\quad x^{[2]}=(x_{q+1},...,x_n),&&\quad f^{[2]}=(f_{q+1},...,f_n).&
\end{flalign}
We wish to retain the variables $x^{[1]}$ in the reduced model and seek to eliminate variables $x^{[2]}$ as part of our model reduction.

For the full ERK model (Equations (\ref{eq:a})), we choose $x^{[1]}:=(S_0, S_1, S_2)$ and $x^{[2]}:=(C_1, C_2)$. Analogously, $f^{[1]}$ are the polynomials governing the rates of change of $S_0$, $S_1$ and $S_2$ (Equations (\ref{eq:a_start}), (\ref{eq:ad}) and (\ref{eq:ae}) and $f^{[2]}$ are the polynomials governing the rates of change of $C_1$ and $C_2$ (Equations (\ref{eq:ab}) and (\ref{eq:ac})).
\glsxtrnewsymbol[description={Rates of change of concentrations given $x$ and $\theta$}]{f}{\ensuremath{f(x,\theta)}}

\begin{remark*}
In the current section, we treat the (non-zero) initial conditions of the ODE systems as parameters (and include them in the parameter count $m$), as they are central to determining the goodness of a model reduction. In Section \ref{sec:ident} (Identifiability) and Section \ref{sec:topology} (Inference \& Comparison), we will not include the initial conditions in the set of  parameters, as they are given by the experimental setup and, as such, do not need to be identified or inferred.
\end{remark*}


\subsection{The Classical Singular-Perturbation Theory Approach}\label{sec:spt}

The fundamental assumption for the application of a QSSA as presented in \cite{segslem} is the existence of a separation of time scales. This means, there exist timescales
$0<t_S<t_L$ (called the short and long timescale, respectively) such that the variables in $x^{[1]}$ exhibit significant variation only when $t>t_L$ while the variables in $x^{[2]}$ exhibit rapid variation when $0<t<t_S$. By contrast, the variation of $x^{[2]}$ is dominated by the variables $x^{[1]}$ when $t>t_L$. 
From a biological perspective, natural choices for these timescales of the Full ERK model are $t_S:= (S_{tot} k_{f_1})^{-1}$ and $t_L:= (E_{tot} k_{f_1})^{-1}$.

The next step is to analyse the model on the long time scale. We introduce dimensionless variables $\tilde{x}^{[1]}:=x^{[1]}/B_1$, $\tilde{x}^{[2]}:=x^{[2]}/B_2$) for constants $B_1$, $B_2$ and rescale time so that $T:=t/t_L$.
Then, the system can be written as
$$\frac{\mathrm d\tilde{x}^{[1]}}{\mathrm dT}=\tilde{f}^{[1]}(\tilde{x},\theta),\qquad \varepsilon\frac{\mathrm d\tilde{x}^{[2]}}{\mathrm dT}=\tilde{f}^{[2]}(\tilde{x},\theta),$$
where $\tilde{f}^{[1]}$ and $\tilde{f}^{[2]}$ are appropriate polynomials, and 
$0 < \varepsilon = t_S/t_L \ll 1$ is the ratio of the timescale on which the variables in $x^{[2]}$ exhibit
rapid variation to the timescale on which the variables in $x^{[1]}$ exhibit
significant variation. 
The reduction is performed by assuming $\varepsilon\to0$, which gives rise to $0=\tilde{f}^{[2]}(\tilde{x},\theta)$. 
This result can be exploited to obtain expressions
for $\tilde{x}^{[2]}$ in terms of $\tilde{x}^{[1]}$.
The reduced system 
has $q<n$ model variables 
and, typically, the dimension of the parameter space is also decreased. 

\glsxtrnewsymbol[description={Non-dimensional concentrations}]{xtilde}{\ensuremath{\tilde{x}}}
\glsxtrnewsymbol[description={Rates of change of non-dim. concentrations given $\tilde{x}$ and $\theta$}]{ftilde}{\ensuremath{\tilde{f}(\tilde{x},\theta)}}

\glsxtrnewsymbol[description={Time variable in fast timescale}]{T}{\ensuremath{T}}

The classical QSSA approach applied to the Full ERK model is presented in Appendix \ref{sec:qssa_mr}, and yields the Rational ERK model.

\subsection{The Algebraic QSSA Approach}\label{sec:alg_qssa}
The algebraic approach to QSSA, as presented by Goeke, Walcher and Zerz in \cite{reduction}, differs from the classical approach in several ways. 
Most notably, an \emph{a priori} separation of time scales is not needed. On the other hand, we require a choice of fast and slow variables (i.e., a choice of which variables we eliminate from, 
and which we retain in 
the reduced model).

\begin{remark*}
To the best of our knowledge, 
all existing algebraic approaches to QSSA, including \cite{Goeke2011,Goeke2014, boulier2011model,reduction}, require a choice, explicit or implicit, of slow and fast variables. In \cite{Goeke2011} the relevant choice is made by expanding $f$ (in \cite{Goeke2011}: $h$) at different orders of $\varepsilon$.
\end{remark*}

First, we characterise points in parameter space, i.e., parameter values, where the fast variables are exactly determined by the slow variables, which yields a reduced model. This set of parameter values is defined as the vanishing set of the polynomials governing the ODEs of the fast variables.
This defines an algebraic variety in the parameter space. Typically, the ODE system will be degenerate at these values. Secondly, we 
characterise neighbourhoods of these values in parameter space, as well as timescales for which the reduction is a good approximation to the original model.

To describe the characterisation from \cite{reduction}, we use $x^{[1]}$, $x^{[2]}$, $f^{[1]}$, and $f^{[2]}$ as before. In addition, we denote the partial derivative with respect to $x^{[i]}$ by $D_i$. For a fixed $\theta^*\in\RR^m$, we let $Y_{\theta^*}$ denote the algebraic variety defined by $f^{[2]}(\,\cdot\,,\theta^*)$.

\begin{definition}\label{def:qssvar}
Let $y\in Y_{\theta^*}$ be such that the $(n-q)\times (n-q)$ matrix $D_2 f^{[2]}$ has full rank at $(y,\theta^*)$. Then we denote by $V_{\theta^*}\subseteq Y_{\theta^*}$ a relatively Zariski-open neighbourhood of $y$ in which this rank is maximal. We call $V_{\theta^*}$ a \emph{quasi-steady-state (QSS) variety} in the sense of \cite{reduction} and may assume without loss of generality that it is irreducible.

If, furthermore, $V_{\theta^*}$ is an invariant set of the ODE system ${\mathrm dx^{[1]}/\mathrm dt}=f(x,\theta^*)$, then we call $\theta^*$ a \emph{QSS parameter value}. Recall that in dynamical systems theory, $V_{\theta^*}$ is an invariant set of $\RR^q$ if whenever the initial condition of an ODE at $t=0$
is in $V_{\theta^*}$, then the corresponding trajectories of the ODE remain in $V_{\theta^*}$ for all $t>0$.
\end{definition}

\begin{remark*}
Note that the steady-state variety (see \cite{gross2016algebraic}) and the QSS variety at a parameter value $\theta^*$ are not as closely related as one may first think. Indeed with our notation, the steady state variety is the zero set in $\RR^n\times\RR^m$ of the ideal $\langle f^{[1]}(x,\theta),f^{[2]}(x,\theta)\rangle$ of $\RR[x,\theta]$, while the QSS variety at $\theta^*$ is contained in the zero set in $\RR^n\times\{\theta^*\}$ of the ideal $\langle f^{[2]}(x,\theta^*)\rangle$ of $\RR[x]$. That is, we have both $\mathcal{V}_{\RR^n\times\RR^m}(f^{[1]}(x,\theta),f^{[2]}(x,\theta))\subset \mathcal{V}_{\RR^n\times\RR^m}(f^{[2]}(x,\theta)$ and $V_{\theta^*}\subseteq Y_{\theta^*}=\mathcal{V}_{\RR^n\times\{\theta^*\}}(f^{[2]}(x,\theta^*))\subset \mathcal{V}_{\RR^n\times\RR^m}(f^{[2]}(x,\theta))$, but the steady-state variety and $V_{\theta^*}$ 
are not contained in one another in general.
\end{remark*}

\glsxtrnewsymbol[description={Parameter / QSSA-variety of $\theta$}]{variety}{\ensuremath{Y_\theta,\, V_\theta}}

To apply the theory of Goeke, Walcher and Zerz in \cite{reduction}, we assume that the initial condition of our ODE system (Eq. (\ref{eq:ode})) lies in $V_{\theta^*}$.
As $D_2f^{[2]}$ has full rank on $V_{\theta^*}$, we have that $x^{[2]}=\Psi\left(x^{[1]}\right)$ for some continuous $\Psi$ by the Implicit Function Theorem. Hence, writing $x=(x^{[1]}, x^{[2]})$, we obtain a \emph{reduced model}:
\begin{equation}\frac{\mathrm dx^{[1]}}{\mathrm dt}=f^{[1]}\left(\left(x^{[1]}, \Psi\left(x^{[1]}\right)\right), {\theta^*}\right)\label{eq:impfunthm}\end{equation}
on some open neighbourhood in $\RR^j$ that naturally includes
$V_{\theta^*}$.
This corresponds to determining the fast variables in terms of the slow variables. We do this by setting their time rates of change equal to zero on the short timescale in classical QSSA, with the addition that on $V_{\theta^*}$ the above yields an exact solution rather than an approximation. 
As a caveat, we note that, in both settings, it may not be possible to find an algebraic expression for $\Psi$; this was pointed out and completely characterised by Pantea et al. in \cite{pantea} in terms of Galois theory. Because of the possible non-solvability issue with Equation (\ref{eq:impfunthm}), we require a more general methodology (Proposition \ref{prop:red}) to study the accuracy of a model reduction (Proposition \ref{prop:accuracy}).

Goeke, Walcher and Zerz
showed that locally, in the variable $x^{[1]}$, the reduced system given by Equation (\ref{eq:impfunthm}) has the same solution as the following ODE system
\begin{equation}
\frac{\mathrm dx^{[1]}}{\mathrm dt}=f^{[1]}\left(x, {\theta^*}\right),\qquad
\frac{\mathrm dx^{[2]}}{\mathrm dt}=-D_2 f^{[2]}(x, {\theta^*})^{-1}D_1f^{[2]}(x, {\theta^*})f^{[1]}(x, {\theta^*}):\label{eq:expred}
\end{equation}

\begin{proposition}[Lemma 1 \& Proposition 1 in \cite{reduction}]\label{prop:red}
Let $V_{\theta^*}$ be a QSS-variety. Then $V_{\theta^*}$ is an invariant set of Equation (\ref{eq:impfunthm}). Moreover, any solution of Equation (\ref{eq:expred}) with initial condition in $V_{\theta^*}$ locally solves Equation (\ref{eq:impfunthm}). Conversely, any solution of Equation (\ref{eq:impfunthm}) with initial condition in $V_{\theta^*}$ locally solves Equation (\ref{eq:expred}). 
In addition, $V_{\theta^*}$ is an invariant set of Equation (\ref{eq:ode}) if and only if the solutions of Equations (\ref{eq:ode}) and (\ref{eq:expred}) are equal for all initial conditions in $V_{\theta^*}$.
\end{proposition}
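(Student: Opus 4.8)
The plan is to reduce everything to one central computation—that the polynomial map $f^{[2]}(\,\cdot\,,\theta^*)$ is a conserved quantity along the flow of Equation (\ref{eq:expred})—together with the fact, supplied by the Implicit Function Theorem and recorded before Equation (\ref{eq:impfunthm}), that $V_{\theta^*}$ is locally the graph $x^{[2]}=\Psi(x^{[1]})$ on which $D_2f^{[2]}$ is invertible. First I would differentiate $f^{[2]}(x(t),\theta^*)$ along an arbitrary solution $x(t)$ of (\ref{eq:expred}) by the chain rule, obtaining $D_1f^{[2]}\,\dot x^{[1]}+D_2f^{[2]}\,\dot x^{[2]}$; substituting the two right-hand sides of (\ref{eq:expred}) and cancelling via the invertibility of $D_2f^{[2]}$ gives $D_1f^{[2]}f^{[1]}-D_2f^{[2]}(D_2f^{[2]})^{-1}D_1f^{[2]}f^{[1]}=0$, so $\frac{\mathrm{d}}{\mathrm{d}t}f^{[2]}(x(t),\theta^*)\equiv 0$. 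Hence a solution of (\ref{eq:expred}) started on $\{f^{[2]}(\,\cdot\,,\theta^*)=0\}=Y_{\theta^*}$ stays on it, and by continuity together with the maximal-rank condition it remains in the relatively open set $V_{\theta^*}$ for short times; this proves $V_{\theta^*}$ is an invariant set of (\ref{eq:expred}). Invariance of $V_{\theta^*}$ under (\ref{eq:impfunthm}) is then immediate by construction, since a trajectory of (\ref{eq:impfunthm}) is reconstructed as $(x^{[1]}(t),\Psi(x^{[1]}(t)))$, which lies on the graph of $\Psi$, i.e.\ in $V_{\theta^*}$, as long as $x^{[1]}(t)$ stays in its (open) projection.

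For the equivalence of (\ref{eq:expred}) and (\ref{eq:impfunthm}) on $V_{\theta^*}$ I would use implicit differentiation of the identity $f^{[2]}(x^{[1]},\Psi(x^{[1]}),\theta^*)\equiv 0$, which yields $D\Psi=-(D_2f^{[2]})^{-1}D_1f^{[2]}$ on $V_{\theta^*}$. In one direction, take a solution of (\ref{eq:expred}) with initial condition in $V_{\theta^*}$; by the invariance just shown its $x^{[2]}$-component equals $\Psi$ of its $x^{[1]}$-component, so its first equation reads $\dot x^{[1]}=f^{[1]}\big((x^{[1]},\Psi(x^{[1]})),\theta^*\big)$, which is exactly (\ref{eq:impfunthm}). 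Conversely, given a solution $x^{[1]}(t)$ of (\ref{eq:impfunthm}) I would set $x^{[2]}(t):=\Psi(x^{[1]}(t))$ and check both equations of (\ref{eq:expred}): the first holds because $f^{[1]}$ agrees on the graph, while the second follows from $\dot x^{[2]}=D\Psi\,\dot x^{[1]}=-(D_2f^{[2]})^{-1}D_1f^{[2]}f^{[1]}$ using the formula for $D\Psi$ above. Since $f$ is polynomial and $\Psi$ is $C^1$ by the Implicit Function Theorem, the relevant vector fields are locally Lipschitz, so Picard–Lindel\"of uniqueness upgrades ``solves'' to ``is the unique solution''.

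For the final equivalence I would argue both directions using the invariance of $V_{\theta^*}$ under (\ref{eq:expred}) established above. If the solutions of (\ref{eq:ode}) and (\ref{eq:expred}) coincide for every initial condition in $V_{\theta^*}$, then the (\ref{eq:expred})-solutions stay in $V_{\theta^*}$, hence so do the identical (\ref{eq:ode})-solutions, giving invariance of $V_{\theta^*}$ under (\ref{eq:ode}). Conversely, assuming $V_{\theta^*}$ invariant under (\ref{eq:ode}), fix $x_0\in V_{\theta^*}$; the (\ref{eq:ode})-solution $x(t)$ stays in $V_{\theta^*}$, so $x^{[2]}(t)=\Psi(x^{[1]}(t))$ and the $x^{[1]}$-component satisfies (\ref{eq:impfunthm}). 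By the converse direction already proved, $x(t)=(x^{[1]}(t),\Psi(x^{[1]}(t)))$ also solves (\ref{eq:expred}), and uniqueness forces the (\ref{eq:ode})- and (\ref{eq:expred})-solutions through $x_0$ to agree.

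I expect the main obstacle to be bookkeeping of locality rather than any deep difficulty: one must ensure that all trajectories remain inside the relatively Zariski-open set $V_{\theta^*}$ on which $D_2f^{[2]}$ is invertible and the representation $x^{[2]}=\Psi(x^{[1]})$ is valid, so that the inverse $(D_2f^{[2]})^{-1}$ and the map $\Psi$ are legitimate throughout. This is exactly the content of the word \emph{locally} in the statement, and it is secured by restricting to a short time interval and a small neighbourhood; the conservation identity $\frac{\mathrm{d}}{\mathrm{d}t}f^{[2]}(\,\cdot\,,\theta^*)=0$ is precisely what prevents trajectories of (\ref{eq:expred}) from leaving the variety transversally.
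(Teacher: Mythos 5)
Your proof is correct, and it follows essentially the same route as the source: the paper does not prove this proposition itself but imports it from Goeke--Walcher--Zerz \cite{reduction}, where the argument is precisely the one you give, namely that $\tfrac{\mathrm{d}}{\mathrm{d}t}f^{[2]}(x(t),\theta^*)=D_1f^{[2]}f^{[1]}-D_2f^{[2]}(D_2f^{[2]})^{-1}D_1f^{[2]}f^{[1]}=0$ along the reduced flow, combined with the Implicit Function Theorem identity $D\Psi=-(D_2f^{[2]})^{-1}D_1f^{[2]}$ and Picard--Lindel\"of uniqueness. Your handling of the locality issues (restricting to the neighbourhood where $D_2f^{[2]}$ is invertible and $Y_{\theta^*}$ coincides with the graph of $\Psi$) is exactly what the word \emph{locally} in the statement requires.
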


This proposition equips us with a method to obtain a solution for $x^{[1]}$ in an algebraic QSSA without explicitly determining $\Psi$. 
In Sections \ref{sec:ident} and \ref{sec:topology},
we will use Equation (\ref{eq:impfunthm}) as our model reduction. 

First, however, we assess the accuracy of Equation (\ref{eq:expred}) as an approximation to the full system, 
for parameter-values $\theta$ in some neighbourhood of $\theta^*$. For convenience, we abbreviate system (\ref{eq:expred}) as ${\mathrm dx/\mathrm dt}=f_\mathrm{red}(x,\theta^*)$.

\begin{proposition}[Outline of Proposition 2 in \cite{reduction}]\label{prop:accuracy}
Let $K^*\subseteq\RR_+^n\times\RR_+^m$ be a compact domain in the product of the variable and parameter spaces which satisfies a number of conditions (we refer the interested reader to Appendix \ref{sec:accuracy} for details). Let ${\theta^*}$ be given such that $V_{\theta^*}\times\{{\theta^*}\}$ has non-empty intersection with $\mathrm{int}\,K^*$, let $(y,{\theta^*})$ be a point in this intersection, and let $V'_{\theta^*}$ be an open neighbourhood of $y$ such that $(V_{\theta^*}\cap V'_{\theta^*})\times\{{\theta^*}\}\subseteq K^*$. Additionally, let $t^*>0$ be such that the solution of Equation (\ref{eq:ode}), with initial condition $y$, remains in $V'_{\theta^*}$ for $t\in[0,t^*]$.

Then there exists a compact neighbourhood $A_{\theta^*}\subseteq V_{\theta^*}$ of $y$ such that:
\begin{enumerate}
\item[(i)] For every $z\in A_{\theta^*}$, the solution of Equation (\ref{eq:ode}) with initial condition $z$ exists and remains in $V'_{\theta^*}$ for $t\in[0, t^*]$.
\item[(ii)] For every $\varepsilon'>0$, there exists a $\delta_1>0$ such that for every $z\in V'_{\theta^*}\cap A_{\theta^*}$ the solution of Equation (\ref{eq:expred}), with initial condition $z$, exists and remains in $V'_{\theta^*}$ for $t\in[0,t^*]$ whenever $\|f-f_\mathrm{red}\|<\delta_1$ on $V_{\theta^*}$.
\item[(iii)] For every $\varepsilon'>0$, there exists a $\delta\in(0,\delta_1]$ such that, for any $z\in V_{\theta^*}\cap A_{\theta^*}$, the difference between the solutions of Equations (\ref{eq:expred}) and 
(\ref{eq:ode}), with initial condition $z$, is at most $\varepsilon'$ for $t\in[0,t^*]$ whenever $\|f-f_\mathrm{red}\|<\delta$ on $V'_{\theta^*}$. Here, $\|\,\cdot\,\|$ denotes the infinity-norm over the interval $[0, t^*]$ for a fixed parameter value.
\end{enumerate}
\end{proposition}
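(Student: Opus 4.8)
The plan is to derive all three parts from two classical facts about ODEs whose right-hand sides are locally Lipschitz: continuous dependence of solutions on the initial condition, and the Gronwall comparison of two flows generated by nearby vector fields. Both $f$ and $f_\mathrm{red}$ qualify: $f$ is polynomial, while $f_\mathrm{red}$ is rational with denominator $\det D_2 f^{[2]}(\,\cdot\,,\theta^*)$, which is nonzero on $V_{\theta^*}$ by the full-rank hypothesis of Definition~\ref{def:qssvar} and hence bounded away from zero on a compact piece. The purpose of the compact domain $K^*$ (with the conditions deferred to Appendix~\ref{sec:accuracy}) is exactly to fix, once and for all on a single compact set, a uniform Lipschitz constant $L$ for both vector fields and a uniform positive lower bound on $|\det D_2 f^{[2]}|$; I would record these at the outset.

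For (i), note that the unperturbed full solution $\phi$ of (\ref{eq:ode}) started at $y$ traces a compact arc $\{\phi(t):t\in[0,t^*]\}$ lying in the open set $V'_{\theta^*}$, hence at a positive distance $\rho$ from its complement. Continuous dependence on the initial condition (Gronwall with constant $L$) yields a radius $\eta>0$ such that every solution of (\ref{eq:ode}) starting within $\eta$ of $y$ stays within $\rho$ of $\phi$, and therefore inside $V'_{\theta^*}$, on $[0,t^*]$. Taking $A_{\theta^*}$ to be the closure in $V_{\theta^*}$ of this $\eta$-ball intersected with $V_{\theta^*}$ --- shrinking $\eta$ so that the closure remains inside both $K^*$ and the trapping ball --- gives the required compact neighbourhood of $y$ in $V_{\theta^*}$.

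For (ii) and (iii), the engine is the Gronwall comparison of the full flow of $f$ with the reduced flow of $f_\mathrm{red}$ from a common initial condition. Proposition~\ref{prop:red} supplies the qualitative backbone: $V_{\theta^*}$ is invariant under the exact reduced system (\ref{eq:impfunthm}), so reduced trajectories launched from $V_{\theta^*}$ remain on the variety, and the measured defect $\|f-f_\mathrm{red}\|$ on $V_{\theta^*}$ is precisely the obstruction to $\theta^*$ being a genuine QSS parameter value. For (ii) I would compare the reduced trajectory to the full trajectory from the same point, which part (i) already confines to $V'_{\theta^*}$; since the two vector fields differ by less than $\delta_1$ (on the relevant region) and $f$ is $L$-Lipschitz, Gronwall keeps the reduced trajectory within $\tfrac{\delta_1}{L}(e^{Lt^*}-1)$ of the full one, hence inside $V'_{\theta^*}$ once $\delta_1$ is small enough. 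For (iii), with both trajectories now trapped in $V'_{\theta^*}$ on $[0,t^*]$, the same comparison with the sharper bound $\|f-f_\mathrm{red}\|<\delta$ gives $\sup_{t\in[0,t^*]}\|u(t)-v(t)\|\le \tfrac{\delta}{L}\bigl(e^{Lt^*}-1\bigr)$, where $u$ and $v$ solve (\ref{eq:ode}) and (\ref{eq:expred}) respectively; choosing $\delta\in(0,\delta_1]$ small makes this at most $\varepsilon'$, which is the asserted infinity-norm estimate.

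The main obstacle is the \emph{trapping} bootstrap that makes the three parts cohere: every estimate above is valid only \emph{while} the relevant trajectory stays inside $V'_{\theta^*}$, the region on which the Lipschitz constant and the defect bound are controlled --- yet remaining in $V'_{\theta^*}$ is itself the conclusion. The standard resolution is a maximal-interval argument: if a trajectory first left $V'_{\theta^*}$ at some $\tau\le t^*$, the tube estimate valid up to $\tau$ would already forbid the exit. This has to be threaded with the quantifiers in the correct nested order --- the neighbourhood $A_{\theta^*}$ fixed first in (i), then $\delta_1$ for the trapping in (ii), then $\delta\le\delta_1$ for the quantitative closeness in (iii) --- each smallness condition being compatible with all previously chosen radii. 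Verifying that the Appendix~\ref{sec:accuracy} conditions on $K^*$ genuinely deliver the uniform $L$ and the uniform lower bound on $|\det D_2 f^{[2]}|$, so that $f_\mathrm{red}$ is Lipschitz throughout, is the other place where care is needed.
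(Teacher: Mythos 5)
The paper does not actually prove Proposition~\ref{prop:accuracy}: it is imported by citation from Goeke--Walcher--Zerz, with only the precise hypotheses on $K^*$ and the full statement recorded as Proposition~\ref{prop:full_prop} in Appendix~\ref{sec:accuracy}. Your reconstruction via a uniform Lipschitz constant on $K^*$, continuous dependence for (i), and a Gronwall comparison of the two flows with a maximal-interval trapping bootstrap for (ii) and (iii) is the standard argument behind that result, and I see no gap in it; in particular you correctly isolate the two roles of $K^*$ (uniform $L$ and a uniform lower bound on $|\det D_2 f^{[2]}|$ so that $f_\mathrm{red}$ is defined and Lipschitz) and the correct nesting of $A_{\theta^*}$, $\delta_1$, $\delta$. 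One point worth making explicit when you write this out: in part (ii) the defect $\|f-f_\mathrm{red}\|$ is only assumed small \emph{on the variety} $V_{\theta^*}$, so in the Gronwall inequality you must evaluate the perturbation term along the \emph{reduced} trajectory --- which stays on $V_{\theta^*}$ by the invariance statement of Proposition~\ref{prop:red} --- and charge the Lipschitz constant to $f$ along the tube; comparing the other way around would require the defect to be small along the full trajectory, which leaves $V_{\theta^*}$ and is only covered by the weaker hypothesis of part (iii). Your sketch invokes the right invariance but leaves this bookkeeping implicit.
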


In summary, given some technical assumptions on the variables 
and the domain $K^*$, we can bound the difference between the solutions of Equations (\ref{eq:ode}) and (\ref{eq:expred}) in terms of $\|f-f_\mathrm{red}\|$ up to some time $t^*>0$. The full statement of this proposition also includes  
lower bounds on this difference. Note that we do not assume that $\theta^*$ is a QSS-parameter value, but the assumptions on $K^*$ (as detailed in Appendix \ref{sec:accuracy}) require it to be close to some QSS-parameter value.

\subsection{Reducing the ERK Model Algebraically}\label{sec:algebraic_erk}
We now apply the theory from Subsection \ref{sec:alg_qssa} to the Full ERK model 
(Equations (\ref{eq:a})) in two different ways, to derive
two reduced models (the Linear ERK and Rational ERK models). The full details of the derivations can be found in Appendix \ref{sec:algebraic_mr}. We also give a brief biological explanation of why both systems explain the phenomena underlying the given experimental data equally well.

\subsubsection{Reduction via conservation laws} We can exploit the conservation laws (\ref{eq:c1}) and (\ref{eq:c2}) to eliminate a variable
before using
the analytic or algebraic QSSA approach. First, we choose to eliminate $E$ and note that there are two choices:
\begin{equation}
E=E_{tot}-C_1-C_2 \label{eq:e1}
\end{equation}
or
\begin{equation}
E=E_{tot}-S_{tot}+S_0+S_1+S_2.\label{eq:e2}
\end{equation}

Subsequently, we choose to eliminate the variables $C_1$ and $C_2$ via (algebraic) QSSA. For the Rational ERK model, 
using (\ref{eq:e1}) to eliminate $E$, we obtain
$$f_\mathrm{rat}^{[2]}=\begin{pmatrix}k_{f_1}(E_{tot}-C_1-C_2)\cdot S_0 -(k_{r_1}+k_{c_1})C_1\\ k_{c_1}C_1 + k_{f_2}(E_{tot}-C_1-C_2)\cdot S_1 -(k_{r_2}+k_{c_2})C_2\end{pmatrix},$$
while for the Linear ERK model, employing substitution (\ref{eq:e2}), we have
$$f^{[2]}_\mathrm{lin}=\begin{pmatrix}k_{f_1}(E_{tot}-S_{tot}+S_0+S_1+S_2)\cdot S_0 -(k_{r_1}+k_{c_1})C_1\\ k_{c_1}C_1 + k_{f_2}(E_{tot}-S_{tot}+S_0+S_1+S_2)\cdot S_1 -(k_{r_2}+k_{c_2})C_2\end{pmatrix}.$$

\subsubsection{Reduction via an algebraic QSSA} To reduce the model further, we apply an algebraic QSSA, as described in Subsection \ref{sec:alg_qssa}. We start by identifying QSS-parameter-values. For 
$f^{[2]}_\mathrm{rat}$, we have
$$D_2f^{[2]}_\mathrm{rat}=\begin{bmatrix}-k_{f_1}S_0-(k_{r_1}+k_{c_1}) & -k_{f_1}S_0 \\ -k_{f_2}S_1+k_{c_1} & -k_{f_2}S_1-(k_{r_2}+k_{c_2})\end{bmatrix},$$
while for $f^{[2]}_\mathrm{lin}$ we have
$$D_2f^{[2]}_\mathrm{lin}=\begin{bmatrix}-(k_{r_1}+k_{c_1}) & 0 \\ k_{c_1} & -(k_{r_2}+k_{c_2})\end{bmatrix}.$$
In both cases, assuming that $(k_{r_i}+k_{c_i})>0$ for $i=1,2$ (otherwise, the reaction network would be degenerate, meaning some or all variables would remain constant), 
and given that $S_0$ and $S_1$ are non-constant, 
we deduce that these matrices are 
invertible.
Hence, both substitutions (\ref{eq:e1}) and (\ref{eq:e2}) are good candidates for an algebraic QSSA reduction.

We note that the assumption $E_{tot}=0$ is required to ensure that the initial condition lies in $V_{\theta^*}$. This is not physically realistic, as the absence of free enzyme makes the reaction rates negligible, however, in parameter space this assumption is close to the experimental setup ($E_{tot}\approx 0.65\mu M$). In fact, unlike the rate parameters, we know the value of $E_{tot}$ and can, therefore, bound the error associated with such an idealisation (cf. Appendix \ref{sec:accuracy}). The assumption that $E_{tot}=0$ is similar to the classical singular-perturbation theory approach, where a typical choice of short timescale is $(t_{S}=E_{tot}k_{f_1})$ and one then subsequently assumes $\varepsilon= E_{tot}/S_{tot}\to0$.

As $E_{tot}=0$ will yield a stationary model and ensure that $V_{\theta^*}$ contains the initial condition, we find that any parameter value $\theta^*$ satisfying $(k_{r_i}+k_{c_i})>0$ for $i=1,2$ and $E_{tot}=0$ is a QSS-parameter-value for both the Rational and Linear ERK model.

For both models, we have
$$Y_{\theta^*}=\left\{x=(S_0,S_1,S_2,C_1,C_2)\in\RR^5\,\vert\,f^{[2]}(x,\theta^*)=0\right\}.$$

For the Linear ERK model, we can show that $Y_{\theta^*}^\mathrm{lin}$ is irreducible (at generic parameter values) and thus its QSS-variety is $V_{\theta^*}^\mathrm{lin}=Y_{\theta^*}^\mathrm{lin}$. For the Rational ERK model, we have that
 $Y_{\theta^*}^\mathrm{rat}$ decomposes as
$$Y_{\theta^*}^\mathrm{rat}=(Y_{\theta^*}^\mathrm{rat}\cap\mathcal{V}(\langle C_1+C_2\rangle))\cup
(Y_{\theta^*}^\mathrm{rat}\cap\mathcal{V}(\langle \lambda(k_{r_2}+k_{c_2})+S_0+\lambda k_{f_2}S_1\rangle))$$
where $\lambda:=-k_{r_1}/(k_{f_1}(k_{c_1}-k_{c_2}-k_{r_1}))$. At generic parameter values, only the first irreducible component will contain the initial condition. Hence, the natural choice for the QSS-variety is $$V^\mathrm{rat}_{\theta^*}=\left\{x=(S_0,S_1,S_2,C_1,C_2)\in\RR^5\,\vert\,C_1=0,\,C_2=0\right\}.$$

The substitution (\ref{eq:e1}) yields the \emph{Rational ERK model} given by
\begin{subequations}\label{eq:q}
\begin{flalign}
\frac{\mathrm dS_0}{\mathrm dt}&= \frac{-\kappa_1S_0}{ \gamma_1 S_0+\gamma_2S_1+1},\label{eq:q1}\\
\frac{\mathrm dS_1}{\mathrm dt}&=\frac{-\kappa_2S_1+(1-\pi)\kappa_1S_0}{ \gamma_1S_0+\gamma_2S_1+1},\\
\frac{\mathrm dS_2}{\mathrm dt}&=\frac{\pi\kappa_1S_0 + \kappa_2S_1}{ \gamma_1S_0+\gamma_2S_1+1},\label{eq:q3}
\end{flalign}
\end{subequations}
while the substitution (\ref{eq:e2}) gives the \emph{Linear ERK model}:
\begin{subequations}\label{eq:l}
\begin{flalign}
\frac{\mathrm dS_0}{\mathrm dt}&= {-\kappa_1S_0},\label{eq:l1}\\
\frac{\mathrm dS_1}{\mathrm dt}&={-\kappa_2S_1+(1-\pi)\kappa_1S_0},\\
\frac{\mathrm dS_2}{\mathrm dt}&= \pi\kappa_1S_0 + \kappa_2S_1.\label{eq:l3}
\end{flalign}
\end{subequations}
Here, for $i=1,2$, we use the newly introduced quantities
\begin{equation}\kappa_i=E_{tot}\frac{k_{f_i}k_{c_i}}{ k_{c_i}+k_{r_i}},\qquad \pi=\frac{k_{c_2}}{ k_{c_2}+k_{r_2}},\qquad \gamma_i=k_{f_i}\frac{k_{c_1}+k_{c_2}}{\left(k_{c_1}+k_{r_1}\right)\left(k_{c_2}+k_{r_2}\right)}.\label{eq:pi1pi2kappa}\end{equation}
\glsxtrnewsymbol[description={Parameters of reduced models}]{kappa}{\ensuremath{\kappa_i, \pi, \gamma_i}}
Both models are reductions obtained via the ODE system (\ref{eq:impfunthm}).
The processivity parameter, which is the probability that
both phosphorylations are carried out by the same
enzyme, is represented by $\pi$ in the reduced models. The $\kappa_i$ represents the kinetic efficiencies of the first and second phosphorylation steps, respectively \cite{Yeung2020}.

It should be noted that the Rational ERK model is the system we would obtain via the classical singular perturbation approach. In Appendix \ref{sec:qssa_mr}, we explain how to arrive at these equations following the analytic approach, while in Appendix \ref{sec:algebraic_mr} we detail how to derive both model reductions algebraically.

\subsubsection{Assessing Accuracy}\label{subsec:qssa-accuracy}

We can use the algebraic framework of Goeke, Walcher and Zerz and, in particular, Proposition \ref{prop:accuracy} to bound the error of the Linear ERK model reduction to the full model.  Given the measurements of the Michaelis-Menten constants $k_{M_i}$, we can derive simple expressions which bound the approximation error (see Appendix \ref{sec:accuracy} for both the Rational \& Linear ERK model). 
Unfortunately, the bound on the approximation error
depends on parameters with unknown values. However, we can compare the bounds derived for the Linear ERK model to those for the Rational ERK model and show that in the regime where $k_{M_i}\approx 25\mu M$, both approximate the full model equally well (see Appendix \ref{sec:accuracy}).

Recall that we can also derive the Rational ERK model via singular perturbation theory.
When using perturbation, it is uncommon to bound the approximation error as explicitly as we do via the algebraic methods of \cite{reduction}. 
However, we can still show that the Linear ERK model is a good approximation of the Rational ERK model when $0\leq\gamma_1,\gamma_2\ll1$. 
Again, we can use knowledge of the Michaelis-Menten constant to show that in our experimental setup, $\gamma_1$ and $\gamma_2$ are small.
Indeed, we can rewrite
$$\gamma_1=\frac{1}{k_{M_1}}\frac{k_{c_1}+k_{c_2}}{ k_{c_2}+k_{r_2}}, \qquad \gamma_2=\frac{1}{ k_{M_2}}\frac{k_{c_1}+k_{c_2}}{ k_{c_1}+k_{r_1}}.$$
Since $k_{M_i}\approx 25\mu M$ and the parameters $k_{c_i}$ and $k_{r_i}$ are of similar magnitude 
(see \cite{BarEven11}),  
we conclude that $\gamma_1\approx 1/25\;(1/\mu M)$.

We reiterate that by employing an algebraic approach, we can derive a reduced model (without taking further limits) that approximates the Full ERK model as well as that obtained via singular perturbation theory, but has several advantages: it has fewer parameters, is interpretable as a chemical reaction network, and  identifiable, as discussed in the next Section.

\subsubsection{Choice of Output Variables}\label{sec:output_var}

Recall from Subsection \ref{sec:data}, the experimental measurements correspond to
the following linear combinations of variables: $S_0+C_1$, $S_1+C_2$, and $S_2$. Here we argue that in the context of available data, $S_0$, $S_1$, and $S_2$ are sufficient approximations of the output variables, which simplifies both the identifiability analysis and the parameter inference.

We argued in Subsection \ref{subsec:qssa-accuracy} that  in the context of experimental data   
the Linear ERK model is as good of an approximation to the Full ERK model as the Rational ERK model. On the long timescale, substitutions for $C_1$ and $C_2$ from the Linear ERK model give approximately
\begin{flalign*}
C_1&=\frac{1}{k_{M_1}}E_{tot}\cdot S_0, \\
C_2&=\frac{1}{k_{M_2}}E_{tot}\cdot S_1 + \frac{k_{c_1} }{ k_{c_2}+k_{r_1}}\frac{1}{k_{M_2}}E_{tot}\cdot S_0.
\end{flalign*}
Recall that $k_{M_i}\approx 25\mu M$ and $E_{tot}=0.65\mu M$.
We then find that the measurements of $S_i + C_{i+1}$ will be dominated by $S_i$. 
Henceforth we will use $S_i$ interchangeably with our measurements $S_i+C_{i+1}$.


\section{Identifiability}\label{sec:ident}
One of the goals of this ERK study is to determine the kinetic parameters of the models given the data. Each model and experimental setup induces a map from the space of model parameters to observable model solutions (here, this is the measurement of the 3 species at the 7 time points over the course of $r$ experimental replicates, i.e., a subset of $\RR^{21r}$). We call this map $\phi_{t_1,\ldots,t_7}\colon\Theta\to \RR^{21r}$ the \emph{model prediction map} (see \cite{Dufresne2018}). Here, the parameter space $\Theta$ is a subset of the positive octant $\RR_{\geq 0}^6$ for the Full ERK model, $\RR_{\geq 0}^5$ for the Rational ERK model, and $\RR_{\geq 0}^3$ for the Linear ERK model. One can think of the data as being a point $z^*$ in the space of observable model solutions, i.e., $\RR^{21r}$, and parameter estimation corresponds to attempting to compute the inverse image $\phi_{t_1,\ldots,t_7}^{-1}(z^*)$ of this map at that point. \textit{Structural identifiability} generally corresponds to the model prediction map $\phi_{t_1,\ldots,t_7}$ being injective. Real-world observations are noisy, hence the data point $z^*$ may not be in the image of the map $\phi_{t_1,\ldots,t_7}$. Thus, when performing parameter estimation, we instead search for parameters yielding model predictions close to the data point $z^*$. \textit{Practical identifiability} broadly corresponds to having the set of parameters with model predictions close to the data point $z^*$ being bounded. In Subsection \ref{subsec:SI} we show that the Linear ERK model is structurally identifiable on its whole parameter space, while the Rational ERK model and the Full ERK model are structurally identifiable on some open dense subset of their parameter space. In Subsection \ref{sec:prac_id} we show that the Linear ERK model is practically identifiable for our experimental data, providing the proof of Theorem \ref{thm:LinearERKIdent}. By contrast, we provide evidence that the Rational ERK model and Full ERK model are not practically identifiable.

\subsection{Structural Identifiability}\label{subsec:SI}

First, we study the structural identifiability of our ODE models, that is whether the model prediction map $\phi_{t_1,\ldots,t_7}\colon\Theta\to \RR^{21r}$ is one-to-one, or at least locally one-to-one. We start by providing a formal definition of structural identifiability for models given by ODE systems with specific time points. Suppose we have a rational ODE system in variables $x\in\RR^n$ and parameters $\theta\in\RR^m$, given by
\begin{equation}
\frac{\mathrm{d} x}{\mathrm{d}t} = f(x,\theta), \label{eq:odeRat}
\end{equation} 
where $f$ is a vector of rational functions in $\RR(x,\theta)^n$. We assume that the measurable output is $y=g(x,\theta)$ where $g$ is also a vector of rational functions. Let $\hat{x}(\theta,t)$ be a solution of (\ref{eq:odeRat}) for the parameter value $\theta\in\Theta$ and then let $\hat{y}(\theta, t)=g(\hat{x}(\theta,t),\theta)$ be the observable solution for the same parameter value. Then, supposing that there are $r$ replicates of the experiment, for the specific time points $t_1,\ldots,t_l$ the model prediction map is given by \[\phi_{t_1,\ldots,t_l}(\theta)=\underbrace{(\hat{y}(\theta,t_1),\ldots,\hat{y}(\theta,t_l),\dots,\hat{y}(\theta,t_1),\ldots,\hat{y}(\theta,t_l))}_{r \text{ times}}.\] The model prediction map then induces an equivalence relation $\sim_{t_1,\ldots,t_l}$ on the parameter space $\Theta$ via 
\[\theta \sim_{t_1,\ldots,t_l}\theta' \text{ if and only if } \phi_{t_1,\ldots, t_l}(\theta)=\phi_{t_1,\ldots, t_l}(\theta'),\]
for any $\theta,\theta'\in\Theta$. 

\begin{definition}[c.f. Definition 2.8 in \cite{Dufresne2018}]
Suppose we have a model given by a system of rational ODEs (as above) with parameter space $\Theta$ and model prediction map $\phi_{t_1,\ldots,t_l}$. We say a model is:
\begin{itemize}
    \item  \emph{globally identifiable} if every equivalence class of $\sim_{t_1,\ldots,t_l}$ on $\Theta$ has size exactly 1.
    \item  \emph{generically identifiable} if for almost all $\theta\in\Theta$ the equivalence class of $\theta$ has size exactly 1.
    \item  \emph{locally identifiable} if for almost all $\theta\in\Theta$ the equivalence class of $\theta$ is finite.
    \item  \emph{generically non-identifiable} if for almost all $\theta\in\Theta$ the equivalence class of $\theta$ is infinite.
\end{itemize}
Here ``almost all'' means everywhere except possibly in a closed subvariety (i.e. the set of common zeroes of some polynomials).
\end{definition}

There are several approaches to assess  structural identifiability. All identifiability methods involve a certain number of assumptions of genericity, but not always explicitly (see for example discussions in \cite{Ovchinikov2021,Hong2019,Joubert2021,Saccomani2003,VILLAVERDE2018,Villaverde2019}). First, all methods assume that one has access to the whole trajectory of the observable output, and so are looking at the size of the equivalence classes of the equivalence relation $\sim_\infty$ on $\Theta$ defined as 
\[\theta\sim_\infty\theta' \text{ if and only if } \hat{y}(\theta,t)= \hat{y}(\theta',t) \text{ for all } t\geq 0.\]
 For rational ODE models with time series data as considered here, a result of Sontag \cite{sontag2002differential} proves if at least $2m+1$ generic time points are observed, where $m$ is the dimension of the parameter space, then the equivalence relation $\sim_{t_1,\ldots,t_{2m+1}}$ coincides with the equivalence relation $\sim_\infty$. If there are fewer time points or they are not generic, it could be that 
 almost all equivalence classes of $\sim_\infty$ have size 1 but those of $\sim_{t_1,\ldots,t_l}$ are larger. For the Linear ERK model, the parameter space has dimension 3, so we have enough time points, although we do not know a priori if they are generic. In fact, this model admits analytic solutions (See Subsection \ref{sec:prac_id}), so we can build the model prediction map explicitly and determine its identifiability directly. By a straightforward computation, we can show that for any choice of three distinct non-zero time points, the model prediction map $\phi_{t_1,t_2,t_3}$ of the Linear ERK model is injective and so the model is globally structurally identifiable (see Appendix \ref{ap:LinearSIdirect} for details). In particular, it follows that any choice of three distinct time points is generic. For the Rational ERK model and the Full ERK model, the parameter space has dimensions 5 and 6, respectively, hence we may not have enough time points, and we cannot determine the validity of any structural identifiability results for these specific model prediction maps. Indeed, these two models are non-linear and do not admit analytic solutions that would allow us to make the same argument as for the Linear ERK model. This is an instance of a more general open problem:
 
 \begin{problem}
 Find a criterion to determine structural identifiability for specific time points.
 \end{problem}

Methods to assess the structural identifiability of ODE models include the classical approach via Taylor series \cite{Pohjanpalo1978} and generating series \cite{grewal1976identifiability}, and, more recently, approaches based on differential algebra \cite{Audoly2001, Saccomani2003, Hong2019}. 
In this paper, we use SIAN \cite{SIAN}, an approach based on differential algebra implemented in Maple \cite{maple}.

Similar to other methods based on differential algebra (for example, the method implemented in DAISY \cite{DAISY}), SIAN is based on the differential Nullstellensatz \cite[Chapter 1]{Ritt-book} or \cite[Section 4]{seidenberg52}.
For a differentially closed field $\mathbb{K}$, this theorem establishes a correspondence between radical differential ideals and differentially closed subsets of $\mathbb{K}^n$. In the context of an ODE system, this implies that the solutions of the ODE system are completely determined by a prime differential ideal in a differential ring (see below). Criteria for identifiability can then be extracted from the ideal (or the quotient ring). The requirement that $\mathbb{K}$ is differentially closed then means that the solutions in question are possibly complex-valued, and the identifiability results will be about complex parameters, whether this is stated explicitly or not. For this reason, Hong et al. \cite{Hong2019} state their definition for complex parameters.

\begin{remark*}
As mentioned above, the first difference between our definition of identifiability and Hong et al.'s is that their parameter space is a subset of $\CC^n$ instead of $\RR^n$. 
A second difference to note is that what Hong et al. \cite{Hong2019} call ``globally identifiable'' corresponds to what we call generically identifiable. Finally, Hong et al.'s \cite{Hong2019} definition is written for components of the parameters and makes the notion of ``almost all'' more precise.
\end{remark*}

 The starting point is an ODE system of the same form as in (\ref{eq:odeRat}) together with the initial condition $x(0)=x_0$. Let $Q$ be the least common multiple of all the polynomials appearing in the denominators in $f$ and $g$, then we have $f=F/Q$ and $g=G/Q$ where $F$ and $G$ are polynomial functions. Note that SIAN usually views the initial conditions as additional unknown components of the parameter that one may want to identify.
The differential ring of interest is the differential ring $\CC(\theta)\{x,y\}$. We can think of this ring as a polynomial ring in infinitely many indeterminates: $\theta$, $x$, $y$, and the infinitely many higher derivatives of $x$ and $y$ (i.e., $x^{(i)}$ and $y^{(i)}$ for $i\geq 1$).  We are interested in  differential ideal $I_{\Sigma}$ of $\CC(\theta)\{x,y\}$ given by
\begin{align}
    I_{\Sigma}:=((Q\dot{x_i}-F_i)^{(j)},(Q\dot{y_k}-G_k)^{(j)}\mid 1\leq i\leq n,~\mid 1\leq i\leq m,j\geq 0):Q^\infty,
\end{align}
where for non-empty subsets $T,S$ of a ring $R$, the set $T:S^\infty$ is defined as follows: $$T:S^\infty:=\{r\in R\mid \text{there exist } s\in S, \,n\in\ZZ_{\geq 0} \text{ such that } s^nr\in T\}.$$
Note that for polynomial systems like the Full ERK model and the Linear ERK model, we have $Q=1$, and so the column operation is not needed and the ideal $I_{\Sigma}$ is simply the differential ideal generated by the equations defining the ODE system and their derivatives. The ideal $I_{\Sigma}$ is the ideal of all differential polynomials in $\CC(\theta)\{x,y\}$ that vanish on the solutions of the system of ODE system (\ref{eq:odeRat})  \cite{Saccomani2003, Hong2019}.

The ideal $I_{\Sigma}$ is prime \cite{Hong2019} and so the quotient ring $\CC(\theta)\{x,y\}/I_{\Sigma}$ is an integral domain. Let $\mathbb{K}:=Q(\CC[\theta]\{x,y\}/I_{\Sigma})$ be the field of fractions of the domain $\CC(\theta)\{x,y\}/I_{\Sigma}$, and let $\Bbbk$ be the subfield of $\mathbb{K}$ generated by the image of $\CC\{y\}$, that is, the subfield generated by the elements of the form $y_i+I_{\Sigma}$. We can now state the non-constructive algebraic criterion for structural identifiability:

\begin{proposition}[c.f. Proposition 3.4 in \cite{Hong2019}]
Suppose we have a model given by a system of rational ODEs as described above.
\begin{itemize}
    \item If the fields $\Bbbk$ and $\Bbbk(\theta)$ coincide, then the model is generically identifiable. 
    \item If the field extension $\Bbbk\subseteq \Bbbk(\theta)$ is algebraic, then the model is locally identifiable.
\end{itemize}
\end{proposition}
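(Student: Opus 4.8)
The plan is to translate both field-theoretic hypotheses into statements about the fibres of the model prediction map, working first with the full-trajectory equivalence relation $\sim_\infty$ and then transferring to finitely many time points. The organising principle I would use is that $\Bbbk$ is precisely the field of quantities reconstructible from the observed output: since $I_\Sigma$ is prime, $\mathbb{K}$ is genuinely a field and the relevant rational and algebraic representations exist; and since, by the differential Nullstellensatz, $I_\Sigma$ consists of exactly the differential polynomials vanishing on every solution of the system, any relation holding in $\mathbb{K}$ descends to an honest identity along each solution trajectory wherever the relevant denominators are defined. Thus the position of $\theta$ relative to $\Bbbk$ inside $\Bbbk(\theta)$ should control how many parameter values can produce the same output.

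To handle the first bullet, I would assume $\Bbbk=\Bbbk(\theta)$, so that each coordinate $\theta_j$ lies in $\Bbbk$ and is therefore represented by a ratio of differential polynomials in the outputs: there exist $p_j,q_j\in\CC\{y\}$ with $q_j\notin I_\Sigma$ and $q_j\theta_j-p_j\in I_\Sigma$. Since membership in $I_\Sigma$ means vanishing on all solutions, for a parameter value $\theta^*$ the corresponding solution $(\hat x(\theta^*,t),\hat y(\theta^*,t))$ satisfies $q_j(\hat y)\,\theta_j^*=p_j(\hat y)$ as functions of $t$; for every $\theta^*$ outside the closed set on which $q_j(\hat y(\theta^*,\cdot))\equiv 0$ this recovers $\theta_j^*$ from the output alone. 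Hence $\theta^*\sim_\infty\theta'$ forces $\theta^*=\theta'$ off a proper closed subvariety of $\Theta$, which is exactly generic identifiability.

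For the second bullet, I would instead assume that the extension $\Bbbk\subseteq\Bbbk(\theta)$ is algebraic, so that each $\theta_j$ satisfies a minimal polynomial $\sum_{k=0}^{d_j}c_{jk}Z^k$ with coefficients $c_{jk}\in\Bbbk$ and $c_{jd_j}\neq0$ in $\mathbb{K}$. Clearing denominators yields a differential-polynomial relation in $y$ and $\theta_j$ lying in $I_\Sigma$, hence valid along every solution. Evaluating along the solution for a generic $\theta^*$, the coefficients $c_{jk}(\hat y)$ are determined by the output and the leading one is nonzero, so $\theta_j^*$ is a root of a fixed nonzero univariate polynomial of degree $d_j$ whose coefficients depend only on $\hat y$. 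Only finitely many parameter values can then share a given output, off the proper closed locus where a leading coefficient or denominator degenerates, which is exactly local identifiability.

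The hard part will be the specialisation step: justifying that the generic identities living in the function field $\mathbb{K}$ of the prime differential ideal genuinely descend to concrete solution trajectories for \emph{almost all} parameter values, and that the exceptional sets (where a denominator or a leading coefficient vanishes identically along the solution) lie in a proper Zariski-closed subset rather than accidentally filling $\Theta$. Primality of $I_\Sigma$ is what makes $\mathbb{K}$ a field and supplies the representations, while the differential Nullstellensatz is what licenses reading membership in $I_\Sigma$ as a trajectory identity; pinning down the bad loci is the delicate part and is the source of the genericity in both conclusions. Finally, because the hypotheses concern the full output field $\Bbbk$ and hence $\sim_\infty$, I would close by invoking Sontag's result to replace $\sim_\infty$ by $\sim_{t_1,\ldots,t_l}$ for at least $2m+1$ generic time points, giving the stated identifiability for the model prediction map itself.
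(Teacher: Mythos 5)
Your argument is essentially the standard direct proof of the ``if'' directions of this criterion, but it is not what the paper does: the paper supplies no proof of its own, deriving the statement from Proposition 3.4 of \cite{Hong2019} together with the accompanying remark that its notions of generic and local identifiability are strictly weaker than Hong et al.'s (a one-way implication rather than an equivalence, and real rather than complex parameters). Your route --- writing each $\theta_j$ as a ratio $p_j/q_j$ of differential polynomials in the outputs when $\Bbbk=\Bbbk(\theta)$, respectively as a root of a minimal polynomial with coefficients in $\Bbbk$ when the extension is algebraic, and then reading membership of the cleared relation in $I_\Sigma$ as an identity along every solution trajectory --- is exactly the mechanism underlying the cited result, and your closing appeal to Sontag's theorem \cite{sontag2002differential} to pass from $\sim_\infty$ to finitely many generic time points matches the paper's own discussion of why the criterion applies to $\phi_{t_1,\ldots,t_l}$. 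What the direct route buys is transparency about where each field-theoretic hypothesis is used; what the citation buys is that the hard part is already done.

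That hard part is the step you flag but do not carry out, and it is a genuine gap rather than a routine verification. To conclude that the locus of $\theta^*$ on which $q_j(\hat y(\theta^*,\cdot))\equiv 0$ (or on which a leading coefficient of the minimal polynomial degenerates along the trajectory) is contained in a \emph{proper} Zariski-closed subset, you need to know that $q_j\notin I_\Sigma$ forces some Taylor coefficient of $q_j$ evaluated along the solution --- a rational function of the parameters and initial conditions --- to be not identically zero. That is precisely the assertion that the formal power-series solution with indeterminate parameters and initial conditions is a generic point of the prime differential ideal $I_\Sigma$; it is the main technical content of \cite{Hong2019} and does not follow from primality of $I_\Sigma$ and the differential Nullstellensatz alone. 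Two smaller points you should make explicit: a proper closed subvariety of $\CC^m$ meets the real positive octant $\Theta$ in a proper closed subset (since $\RR^m$ is Zariski-dense in $\CC^m$), which is what licenses transferring Hong et al.'s complex genericity to the paper's real parameter space; and the genericity obtained is over parameters \emph{and} initial conditions, so specialising to the fixed experimental initial conditions requires the separate density argument of Appendix \ref{an:InitialConditions}.
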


\begin{remark*}
Note that Proposition 3.4 in \cite{Hong2019} implies that $\Bbbk$ and $\Bbbk(\theta)$ coincide (respectively the field extension $\Bbbk\subseteq \Bbbk(\theta)$ is algebraic) if and only if the model is globally identifiable (respectively, locally identifiable) in the sense of \cite{Hong2019}. We are interested in something weaker, we only wish to identify parameters in the parameter space $\Theta$, which is a subset of the real positive octant.
\end{remark*}

The criterion provided by the proposition above is not constructive, as it involves the field of rational functions of an infinitely generated $\CC$-algebra. Hong et al. \cite{Hong2019} go on to provide a constructive version of the criterion \cite[Section 3]{Hong2019}. The software SIAN \cite{SIAN}, which we use here, is in turn based on a probabilistic version of the criterion \cite[Section 4]{Hong2019}. Note that local identifiability is determined via the Taylor series approach.

We now consider the issue of initial conditions. As mentioned above, by default, SIAN considers the initial conditions as parameters that one may wish to identify. Other methods, like the differential algebra method as implemented in DAISY \cite{DAISY}, do not explicitly address initial conditions.  Ovchinnikov et al. show in  \cite[Theorem 19]{IdentifiableFunctions} that input-output identifiability corresponds to what they call multiple experiment identifiability, that is, identifiability from sufficiently many generic initial conditions. DAISY and COMBOS verify input-output identifiability \cite{meshkat2014finding}.

Using SIAN \cite{SIAN}, we verify that all three models are generically identifiable. Recall that this result is valid under the assumption that we have measurements at sufficiently many generic time-points, and for generic initial conditions. Inspired by the discussion in \cite{Saccomani2003}, in Appendix \ref{an:InitialConditions} we show that the set of differential polynomials in $\CC(\theta)\{x,y\}$ vanishing on those solutions of the system \ref{eq:odeRat} with initial conditions $S_0(0)=5\mu M$ and $S_1(0)=S_2(0)=0\mu M$ for all three models, as well as $C_1(0)=C_2(0)=0\mu M$ and $E(0)=0.65\mu M$ for the Full ERK model coincides with the ideal $I_\Sigma$. This means that the set of solutions with initial conditions corresponding to our experimental setup is dense in the set of all solutions for the Kolchin topology (induced by the differential ideals of $\CC(\theta)\{x,y\}$. We can therefore conclude that the initial conditions specific to the experimental setup are indeed generic.

\begin{remark*}
Using SIAN we can show that the Full ERK model is also generically identifiable with measurable outputs $S_0+C_1$, $S_1+C_2$ and $S_2$ which is what was actually measured experimentally (see Subsection ~\ref{sec:output_var}).
\end{remark*}


\subsection{Practical Identifiability}\label{sec:prac_id}

Suppose a model is generically identifiable, then, generically, distinct parameters produce distinct data points. However, if there are parameter values that are arbitrarily far from one another but produce data points close to each other, parameter estimation would not be meaningful in practice. 
Practical (non-)identifiability aims to categorise models exhibiting such undesirable behaviour.
For example, sloppiness \cite{gutenkunst2007universally}, uncertainty quantification \cite{smith2013uncertainty} and filtering problems \cite{shi1999kalman} study mathematical models with a similar aim.
We use a definition of practical identifiability introduced in \cite{Dufresne2018}, which was adapted from the definition given in \cite{raue2009structural}.

Practical identifiability depends on more than the defining equations and specification of input and output of the model. Practical identifiability will be influenced by the precise choice of time points, the method used for parameter estimation, the assumption on measurement noise of the data, and the way we measure distances in parameter space. It may also vary on the area in the data space. A data point $z^*$ is an experimental observation in the form of an $N$-dimensional vector whose entries are the observed values of the measured variables at each of the specific time points for each replicate of the experiment. We focus on practical identifiability for maximum likelihood estimation (MLE), one of the most widely used methods for parameter estimation (see, for example \cite{ljung1987theory}). Accordingly, in the remainder of this section, we consider models $(\mathcal{M},\phi_{t_1,\ldots,t_s},\psi,d_\Theta)$ with a precise choice of model prediction map $\phi_{t_1,\ldots,t_s}$ with specific time points ${t_1,\ldots,t_s}$, a specific assumption for the probability distribution $\psi$ of measurement noise and a choice of reference metric $d_\Theta$ on parameter space $\Theta$. We will also assume that the model considered is at least generically identifiable, so that MLE exist and are unique for generic data (see \cite[Proposition 4.15]{Dufresne2018}). 
We write $\hat{\theta}(z^*)$ to denote the MLE for $z^*$, that is, $\hat{\theta}(z^*):=\operatorname{max}_{\theta\in \Theta}\psi(\theta,z^*)$.

We define an \emph{$\delta$-confidence region ${U}_\delta(z^*)$} as follows:
\begin{align*}
U_{\delta}(z^*) := \{\theta \in \Theta \mid - \log \psi(\theta,z^*) < \delta \}.
\end{align*}
The set $U_{\delta}(z^*)$, often known as a likelihood-based confidence region \cite{Vajda1989, gc-rlbstatsBook}, is intimately connected with the likelihood ratio test. Specifically, suppose we had a null hypothesis $\mathbf{H}_0$ that data point $z^*$ has true parameter $\theta^*$, and we wished to test the alternative hypothesis $\mathbf{H}_1$ that $z^*$'s true parameter is something else. By definition, a likelihood ratio test would reject the null hypothesis when
\begin{align*}
\Lambda(\theta^*,z^*):= \frac{\psi(\theta^*,z^*)}{\psi(\hat{\theta}(z^*),z^*)} \leq k^*,
\end{align*}
where $k^*$ is a critical value, with the significance level $\alpha$ equal to the probability $\text{Pr}(\Lambda(z^*)\leq k^* | \mathbf{H}_0)$ of rejecting the null hypothesis when it is in fact true. The set of parameters such that the null hypothesis is not rejected at significance level $\alpha$ is
\[ \{ \theta' \in \Theta \mid -\log\psi(\theta',z^*)<-\log\psi(\hat{\theta}(z^*),z^*)-\log k^*\},\]
that is, $U_\delta(z^*)$, where $\delta=-\log\psi(\hat{\theta}(z^*),z^*)-\log k^*$.

\glsxtrnewsymbol[description={A $\delta$-confidence region}]{Ud}{\ensuremath{U_\delta}}
\glsxtrnewsymbol[description={Likelihood ratio}]{Lambda}{\ensuremath{\Lambda}}

\begin{definition}[{\cite[Definiton 4.17]{Dufresne2018}}]
The model $(M,\phi_{t_1,\ldots,t_s},\psi,d_{\Theta})$ is practically identifiable for a data point $z^*\in \RR^N$ at significance level $\alpha$ if and only if the confidence region $U_{\delta}(z^*)$ is bounded with respect to  $d_\Theta$, where 
\[\delta=-\log \psi(\hat{\theta}(z^*),z^*)-\log k^*\]
and
\begin{equation}
\alpha=\operatorname{Pr}\left(\frac{\psi(\hat{\theta}(z^*),\hat{z})}{\operatorname{max}_{\theta\in \Theta}\psi(\theta,\hat{z})}<k^* \mid \hat{z} \text{ is data with true parameter }\hat{\theta}(z^*)\right).\label{eq:k_star}
\end{equation}
\end{definition}

\glsxtrnewsymbol[description={A significance level}]{alpha}{\ensuremath{\alpha}}

\glsxtrnewsymbol[description={Number of replicates}]{r}{\ensuremath{r}}
\glsxtrnewsymbol[description={Minimal log-likelihood of a point in parameter space to be be included in $\alpha$-confidence region}]{delta}{\ensuremath{\delta}}
\glsxtrnewsymbol[description={Standard deviation of a distribution}]{sigma}{\ensuremath{\sigma}}
For our analysis, we make the common assumption that the measurement noise is additive Gaussian with covariance matrix equal to a multiple of the identity matrix. The assumption is implicit when performing a least-squares fit computation for MLE. In our setup, we are measuring 3 substances at 7 time-points and there were $r$  replicates, so our assumption on the measurement noise means that the probability distribution of the data is given by
\[\psi(\theta,z)=(2\pi \sigma^2)^{\frac{-21r}{2}}e^{-\frac{1}{2\sigma^2}\|z-\phi_{t_1,\ldots,t_7}(\theta)\|_2^2},\]
where $\sigma^2 I_{21}$ is the covariance. It then follows that 
\begin{align*}
\delta&=-\log \psi(\hat{\theta}(z^*),z^*)-\log k^*\\
         &=\frac{21r}{2}\log(2\pi \sigma^2)+\frac{1}{2\sigma^2}\|z^*-\phi_{t_1,\ldots,t_7}(\hat{\theta}(z^*))\|_2^2-\log k^*,
\end{align*}
and
\begin{align*}
-\log \psi(\theta',z^*)=\frac{21r}{2}\log(2\pi \sigma^2)+\frac{1}{2\sigma^2}\|z^*-\phi_{t_1,\ldots,t_7}(\theta')\|_2^2.
\end{align*}
Therefore, we have that
\begin{align*}
U_\delta(z^*)&=\{\theta'\in\Theta\mid  \|z^*-\phi_{t_1,\ldots,t_7}(\theta')\|_2^2 <\|z^*-\phi_{t_1,\ldots,t_7}(\hat{\theta}(z^*))\|_2^2-2\sigma^2\log k^*\}\\
                      &=\phi_{t_1,\ldots,t_7}^{-1}({B}_{\rho}(z^*)),
\end{align*}
where ${B}_{\rho}(z^*)$ is the Euclidean open ball of radius $\rho:=\sqrt{(\|z^*-\phi_{t_1,\ldots,t_7}(\hat{\theta}(z^*))\|_2^2-2\sigma^2\log k^*)}$ around the data point $z^*$. It follows that under our assumptions, determining whether the various models we study are practically identifiable corresponds to determining whether the preimages under the model prediction map of small open balls around data points are bounded in parameter space. The size of the balls will depend on the data point and the significance level $\alpha$ (or equivalently the critical value $k^*$).


\subsection{The practical identifiability of the Rational ERK model and the Full ERK model} \label{sec:PractIdentRat}

The Rational ERK model and the Full ERK model do not admit analytic solutions, hence we do not have access to an explicit model prediction map $\phi_{t_1,\ldots,t_l}$. Therefore, we must approximate $\phi_{t_1,\ldots,t_l}$ and thus also $U_\delta$ using numerical methods and repeated sampling.

First, we assume that our measurements have been corrupted with some Gaussian noise with mean 0 and variance $\sigma^2$. This variance is identical across measurement quantities, time points, and trials. The noise distributions are independent across measurements.

As we have assumed that measurement noise is additive Gaussian with covariance matrix equal to a multiple of the identity matrix, we can obtain an MLE, given some data $z^*$, by solving a least squares problem. This gives us $\hat{\theta}(z^*)$. We use this parameter to calculate the sample variance, assuming that the mean of each quantity is the model trajectory at each time point. This gives us an estimate of the covariance $\sigma^2$.

Recall that $\delta$ is defined to be $-\log \psi(\hat{\theta}(z^*),z^*)-\log k^*$. The log-likelihood is easy to compute, as we already know $z^*$ and $\hat{\theta}(z^*)$, and can estimate $\phi_{t_1,\ldots, t_7}$ using a numerical solution to the ODE system.  We use the following procedure to approximate $-\log k^*$:

\vspace{.05in}

\begin{algorithm}[H] \label{algo:-logk*}
 \caption{\texttt{Computing approximate $-\log k^*$}}
 \KwData{$z^*$, $\hat{p}(z^*)$, $\sigma^2$, $\alpha$, $n_{\mathrm{iter}}$ (number of iterations)}
 $\{z'_i\}_{i=1,..,n_{\mathrm{iter}}} \leftarrow$ $n_{\mathrm{iter}}$ corruptions of $z^*$ by adding i.i.d. random samples from $\mathcal N(0,\sigma^2)$ to each measurement\\
 $LogLik\leftarrow[]$ (create an empty list)
 
 \For{$i=1,...,n_{\mathrm{iter}}$}{
 	Calculate $\hat{p}(z'_i)$ by least-squares solving\\
 	Append $\log\psi(\hat{p}(z'_i), z'_i)-\log\psi(\hat{p}(z^*), z'_i)$ to $LogLik$
 }
 \Return $\lfloor n_{\mathrm{iter}} \cdot(1-\alpha)\rfloor$-th largest element of $LogLik$
 
\end{algorithm}

\vspace{.05in}

\noindent This simply follows the definition of $k^*$ in Equation (\ref{eq:k_star}), and approximates $-\log k^*$ by repeatedly sampling likelihood-ratios under our given noise assumptions and then taking a $(1-\alpha)$-quantile (as $-\log(\,\cdot\,)$ is a monotonically decreasing function).

\begin{remark*}
In a situation where the number of replicates $r$ is large, an approximate $\delta$ can be computed from $\alpha$ that depends primarily on the distance between the data point $z^*$ and the predicted data point $\phi_{t_1,\ldots,t_7}(\hat{\theta}({z^*}))$ corresponding to the MLE.

From the definition, we have $\delta=-\log \psi(\hat{p}(z^*),z^*)-\log k^*$, meaning that $k^*=1/e^{\delta}\psi(\hat{p}(z^*),z^*)$, and so we can describe $\alpha$ in terms of $\delta$ directly:
\[\alpha=\operatorname{Pr}\left(\frac{\psi(\hat{\theta}(z^*),\hat{z})}{\operatorname{max}_{\theta\in\Theta}\psi(\theta,\hat{z})}<1/e^{\delta}\psi(\hat{\theta}(z^*),z^*) \mid \hat{z} \text{ is data with true parameter }\hat{\theta}(z^*)\right).\] 
This is equivalent to
\[\alpha=\operatorname{Pr}\left(-\log\left(\frac{\psi(\hat{\theta}(z^*),\hat{z})}{\operatorname{max}_{\theta\in \Theta}\psi(\theta,\hat{z})}\right)>-\log(1/e^{\delta}\psi(\hat{\theta}(z^*),z^*) \mid \hat{z} \text{ has true parameter }\hat{\theta}(z^*)\right),\]
and so
\[\alpha=\operatorname{Pr}\left(-\log \psi(\hat{\theta}(z^*),\hat{z})+\log{\operatorname{max}_{\theta\in\Theta}\psi(\theta,\hat{z})}>\delta+\log \psi(\hat{\theta}(z^*),z^*) \mid \hat{z} \text{ has true parameter }\hat{\theta}(z^*)\right).\]
Note that for each value of $\hat{z}$, the MLE $\hat{\theta}(\hat{z})$ maximises $\psi(\theta,\hat{z})$. It follows that
\[\alpha=\operatorname{Pr}\left(2(\log \psi(\hat{\theta}(\hat{z}),\hat{z})-\log \psi(\hat{\theta}(z^*),\hat{z}))>2\delta+2\log \psi(\hat{\theta}(z^*),z^*) \mid \hat{z} \text{ has true parameter }\hat{\theta}(z^*)\right).\]
Wilk's theorem \cite{Wilks} implies that $2(\log \psi(\hat{\theta}(\hat{z}),\hat{z})-\log \psi(\hat{\theta}(z^*),\hat{z}))$ is asymptotically $\chi^2$ with three degrees of freedom. 
 If $F(\hat{z})$ is the asymptotic cumulative distribution function of $2(\log \psi(\hat{\theta}z),\hat{z})-\log \psi(\hat{\theta}(z^*),\hat{z}))$, then $\alpha$ is approximately equal to
\begin{align*}
       \alpha     &=1-\operatorname{Pr}\left(2(\log \psi(\hat{\theta}(\hat{z}),\hat{z})-\log \psi(\hat{\theta}(z^*),\hat{z}))<2\delta+2\log \psi(\hat{\theta}(z^*),z^*) \mid \hat{z} \text{ has true param. }\hat{\theta}(z^*)\right)\\
            &\approx 1-F(2\delta+2\log \psi(\hat{\theta}(z^*),z^*)).\end{align*}
 Therefore, asymptotically we have that
 \[\delta=F^{-1}(1-\alpha)/2-\log \psi(\hat{\theta}(z^*),z^*).\]
Unfortunately, this is not applicable here, as the number of experiments here is 5, 6 or 11, which are not large numbers. Indeed, the $\delta$ obtained by applying Wilks' Theorem and the $\delta$ obtained via Algorithm \ref{algo:-logk*} are notably different. For example, for the wild-type and the Linear model, we approximate $-\log k^*$ as $0.477$ while Wilks' theorem approximates it as $3.907$.
\end{remark*}

In order to demonstrate practical non-identifiability for the Full and Rational ERK models, we pick two parameters from each model, based on which we can illustrate non-identifiability well by presenting confidence areas marginalised to these two parameters. This choice of parameters is informed by performing a (ill-posed) Bayesian parameter inference first (see next section). This procedure is described here for the Rational ERK model, but works similarly for the full model:

\vspace{.05in}

\begin{algorithm}[H]\label{alg:mca}
 \caption{\texttt{Computing marginalised confidence regions}}
 \KwData{$\delta$, $(\kappa_1^\mathrm{min}, \kappa_1^\mathrm{max})$, $(\gamma_1^\mathrm{min}, \gamma_1^\mathrm{max})$ (bounds on the parameters), $n_{\mathrm{iter}}$ (number of evaluations per parameter)}
 $CA\leftarrow[]$ (create an empty list)\\
 $\Delta\kappa_1 \leftarrow(\kappa_1^\mathrm{max}-\kappa_1^\mathrm{min})/n_{\mathrm{iter}}$\\
 $\Delta\gamma_1 \leftarrow(\gamma_1^\mathrm{max}-\gamma_1^\mathrm{min})/n_{\mathrm{iter}}$\\
 \For{$i=0,...,n_{\mathrm{iter}}-1$}{
 	$\hat{\kappa}_1 \leftarrow \kappa_1^\mathrm{min} + i\cdot\Delta\kappa_1$\\
 	\For{$j=0,...,n_{\mathrm{iter}}-1$}{
 		$\hat{\gamma}_1 \leftarrow \gamma_1^\mathrm{min} + j\cdot\Delta\gamma_1$\\
 		Find parameters $(\kappa_2',\pi', \gamma_2')$ minimising $\psi((\hat{\kappa}_1, \kappa_2',\pi',\hat{\gamma}_1,\gamma_2'),z^*)$ by least-squares solving\\
 		\If{$-\log\psi((\hat{\kappa}_1, \kappa_2',\pi',\hat{\gamma}_1,\gamma_2'),z^*)<\delta$}{Append $(\hat{\kappa}_1, \hat{\gamma}_1)$ to $CA$}
 	}
 }
 \Return $CA$
 
\end{algorithm}

\vspace{0.05in}

While we do not know the values of $\kappa_1$ and $\gamma_1$, previous experimental work has provided bounds for $\kappa_1$ and $\gamma_1$, which we pass to the algorithm above. The list returned by the algorithm is a discrete approximation of the confidence area, marginalised to the pair of parameters $\kappa_1$ and $\gamma_1$. We plot these points for visual inspections, which can be seen in Figure \ref{fig:qssa_ca}. The blue area reaching the upper and leftmost boundary of the plot indicates that the confidence region is very unlikely to be bounded and that this model is very unlikely to be practically identifiable.
\begin{figure}
    \centering
    \includegraphics[width=0.8\textwidth]{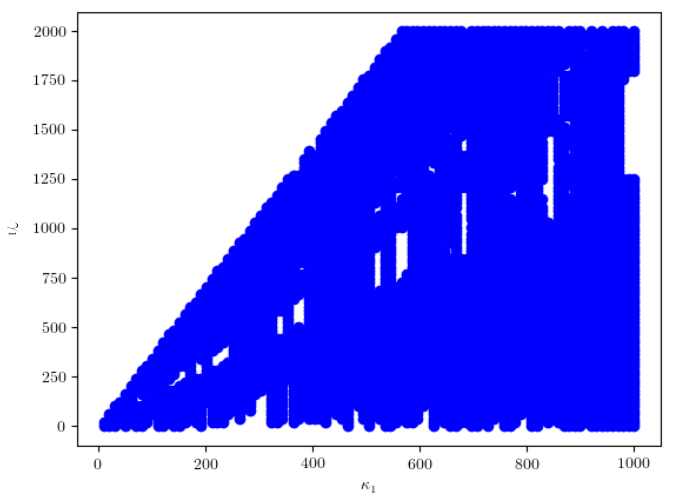}
    \caption{Marginalised confidence area following Algorithm \ref{alg:mca} at significance level 0.05  for the Rational ERK model for the wild-type data point  $z^*$, with $\kappa_1^{\mathrm{min}}=0\,(1/min)$, $\kappa_1^{\mathrm{max}}=1000\,(1/min)$, $\gamma_1^{\mathrm{min}}=0\,(1/\mu M)$, and $\gamma_1^{\mathrm{max}}=1000\,(1/\mu M)$.}
    \label{fig:qssa_ca}
\end{figure}

The source of this practical non-identifiability of the Full ERK model and the Rational ERK model is not completely clear. One possible source of non-identifiability could be the choice of time points. Indeed, as mentioned in Subsection \ref{subsec:SI}, in both cases we do not know if the time points are sufficiently generic. There are reasons to believe that not all practical non-identifiability can be explained by having an insufficient number of time points. Indeed, as part of earlier work during the preparation of \cite{Yeung2020}, additional time point data was simulated for the Full ERK model, but confidence regions to still appeared unbounded.  Another possible source of non-identifiability could be that for the given experimental data there is a valid quasi-steady-state approximation resulting in a smaller dimensional parameter space. At quasi-steady-state parameter values, the reduction is exact and so for these parameters, the equivalence class of $\sim_{t_1,\ldots,t_7}$ is positive dimensional. Intuitively, since the solutions of the Full ERK model and the Rational ERK model are close to those of the Linear ERK model near quasi-steady-state parameter values, the confidence regions should contain the equivalence class of the nearby quasi-steady-state parameter value, which in this case, was unbounded. This might be an example of more widespread phenomena.


\subsection{The practical identifiability of the Linear ERK model}\label{subsec:PILinear}

We now consider the practical identifiability of the Linear ERK model. What distinguishes the Linear ERK model from the Full ERK model and the Rational ERK model is that an analytic solution to the ODE system is available and so we can construct an explicit model prediction map.
The solution to the ODE system (\ref{eq:l}) with
initial conditions $S_0(0)=5\mu M$ and $S_1(0)=S_2(0)=0$ is given by:

\begin{align*}
 {S}_0(t)= &5e^{-\kappa_1t}\\
{S}_1(t)= & 5\kappa_1(1-\pi)t e^{-\kappa_1t}                                                            & \text{if } \kappa_1=\kappa_2\\
                    & 5\kappa_1(1-\pi)(e^{-\kappa_2t}-e^{-\kappa_1t})/(\kappa_1-\kappa_2) & \text{otherwise }\\
{S}_2(t)=& 5-{S}_0(t)-{S}_1(t).
\end{align*}

As we did for the Rational ERK model in Subsection \ref{sec:PractIdentRat}, for a given data point $z^*$, we obtain an MLE $\hat{\theta}(z^*)$ by solving a least-squares problem. We then use Algorithm \ref{algo:-logk*} to approximate $-\log k^*$, and then $\delta$, using the explicit model prediction map we construct based on the analytic solutions. In Figure \ref{fig:potatoes} we plot the boundary of the confidence regions at significance level $\alpha=0.05$ for the data points corresponding to the wild-type and each mutant. All five confidence regions are seen to be bounded, and we conclude that the model is practically identifiable for those data points.

\begin{figure} 
    \centering
    \includegraphics[width=0.5\textwidth]{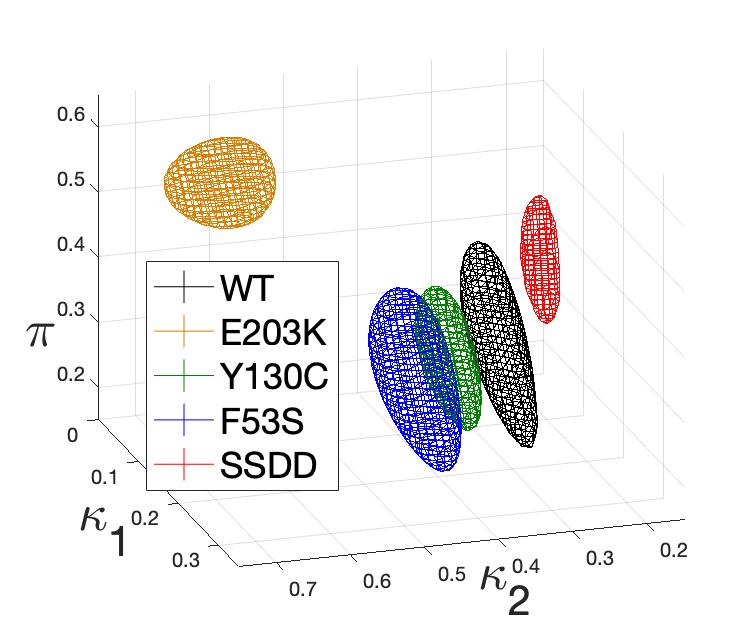}
    \caption{Boundary of the confidence regions for wild-type and each mutant at significance level 0.05 for the Linear ERK model.}
    \label{fig:potatoes}
\end{figure}




\section{Parameter Inference \& Topological Analysis}\label{sec:topology}

Having established that the Linear ERK model is practically identifiable, we now infer the parameters of this model using data from wild-type and mutant experiments.  First, we briefly review the Bayesian approach for inferring parameters of the Linear ERK model,
as already computed by Yeung et al \cite{Yeung2020}. 
We then introduce topological data analysis (TDA), analyse the point cloud of parameters sampled from the posteriors of the wild-type and four mutants, and compare their topological distances.

\subsection{Bayesian Inference}\label{sec:inference}

Given experimental data and a mathematical model, 
we seek to infer parameters for which the model accurately fits the data. We choose to do this via Bayesian inference. The theory of Bayesian statistics captures how our belief in the true value of these parameters changes when we make observations (in this case: measurements) in the language of probability theory. Most importantly, Bayesian inference does not infer a single value for each parameter, as would  a frequentist approach; rather, it infers a probability distribution of parameter values expressing how strongly we believe a certain set of parameter values is correct.

Formally, we are given a parameter space $\Theta$ and observations $x$ from some sample space $\mathcal X$. Combining the mathematical model with noise assumptions on available measurements, we obtain an expression for $p(x|\theta)$, the \emph{likelihood} of observing $x$ assuming that the parameter of the model is $\theta\in\Theta$. In addition, we need to specify a measure of belief in the parameter values before we observe any data, expressed through a probability density $p(\theta)$, called the \emph{prior distribution}. Theoretically, we want to inform a Bayesian inference only through observations. Consequently, we do not want to inform the inference by placing strong prior beliefs on certain parameter values. In practice, however, a trade-off between neutral prior beliefs (which should only account for substantive prior knowledge and possibly scientific conjectures), analytical convenience, and computational tractability is commonplace \cite[11-12]{gelman}.

Having selected a mathematical model and a prior distribution, our formal belief in parameter values becomes
$$p(\theta|x)\propto p(x|\theta)\cdot p(\theta)$$
by making observations $x\in\mathcal X$. The probability density $p(\theta|x)$ is called the \emph{posterior distribution}. The proportionality in the above equation indicates that we omitted a normalisation which is independent of $\theta$. As one can approximately sample from $p(\theta|x)$ without normalising, the normalisation factor is not necessary for our application.

For the Linear ERK model (Equations (\ref{eq:l})), the parameter is $\theta=(\kappa_1, \kappa_2, \pi, \sigma)\in\RR^4=\Theta$. Here, the first three components come from the parameter of the Linear ERK model while $\sigma$, the variance of the distribution of the data, which must be inferred in order to construct a Bayesian model, and will be subsequently marginalised (i.e., integrated out). The observations are measurements of $S_0$, $S_1$ and $S_2$. As measurements of each MEK type are taken from $r$  
replicates, at 7 different times, 
for 3 phosphorylation states of substrate, we formally have $\mathcal X=\RR^{r\cdot 3\cdot 7}=\RR^{r\cdot21}$. We have $r=11$ for the wild-type, $r=6$ for SSDD, and $r=5$ for all other variants. 

To construct a statistical model on the mechanistic Linear ERK model, we set the prior distributions to
$$\kappa_1,\kappa_2 \sim \text{\textit{Unif}}(0\,(1/\mathrm{min}),10\,(1/\mathrm{min})),\quad \sigma \sim \text{\textit{Unif}}(0\,(\mu M),10\,(\mu M)),$$
a uniform distribution over values we deem 
biologically feasible for these parameters \cite{Yeung2020}, and $\pi\sim \text{\textit{Unif}}(0,1)$, as $\pi$ can only take values within this range by definition.

Given samples $S_0^*$, $S_1^*$ and $S_2^*$, we assume that
\begin{align*}
\left(S_0^*\right)_{t,i}\sim\mathcal N\left(S_0(\kappa_1,\kappa_2,\pi,t),\sigma\right),\\
\left(S_1^*\right)_{t,i}\sim\mathcal N\left(S_1(\kappa_1,\kappa_2,\pi,t),\sigma\right),\\
\left(S_2^*\right)_{t,i}\sim\mathcal N\left(S_2(\kappa_1,\kappa_2,\pi,t),\sigma\right),
\end{align*}
where $t$ denotes the respective measurement time and $i$ indexes the sample. Here, $S_j(\kappa_1,\kappa_2,\pi,t)$ is a solution to the ODE system at time $t$ for parameters $\kappa_1$, $\kappa_2$ and $\pi$. For the Linear ERK model, we can construct an analytic solution to the governing equations, but generally, a numerical solution suffices. Such ODE solutions give rise to an expression for the likelihood $p(x\vert\theta)$.

\glsxtrnewsymbol[description={Parameter space}]{Theta}{\ensuremath{\Theta}}
\glsxtrnewsymbol[description={Data space (all measurements from a set of experiments)}]{Xcal}{\ensuremath{\mathcal{X}}}
\glsxtrnewsymbol[description={Measurement of species concentration $j$ at time $t$ in trial $i$}]{Sstar}{\ensuremath{(S_j^*)_{t,i}}}

We note that in the above Bayesian model, some standard simplifying assumptions were made. First, in the given setup, negative values of measurements of $S_0$, $S_1$ and $S_2$ have strictly positive likelihoods, which is not true in reality.
Second, we assume that $(S_0^*)_{t,i}$, $(S_1^*)_{t,i}$ and $(S_2^*)_{t,i}$ are independent random variables for all $t$ and $i$ and that they have the same standard deviation. Despite of these assumptions, we obtained good fits to the data. For example, performing an inference with three different standard deviation parameters $\sigma_0$, $\sigma_1$ and $\sigma_2$ for $S_0$, $S_1$ and $S_2$ respectively did not significantly improve the fits to the data.

This Bayesian inference framework can also be applied to other ODE models describing the measurements, including the Rational ERK model (Equations (\ref{eq:q})) and the Full ERK model (Equations (\ref{eq:a})). In these cases, we employ numerical solutions and adapt priors to the larger parameter spaces. 

We note that for the Full ERK model and the Rational ERK model, the choice of prior distributions significantly changes both the location and prominence of modes of the posterior distributions.  In particular, they tend to be near the endpoints of the prior distributions. This is linked to the practical non-identifiability of these models and prevents us from interpreting parameter modes, and also from conducting a sensible topological comparison that is not highly dependent on the choice of prior distribution.

In order to compute posterior distributions of the involved parameters, we used PyStan, the Python version of the statistical software STAN \cite{STAN}. While analytical expressions for the posterior distributions are too complex to be feasible for interpretation, PyStan enables us to approximately sample from them via Hamiltonian MCMC. The resulting samples (visualised in Figure \ref{fig:posterior}) form the basis of our further analysis.

\begin{figure}
\centering
\subfloat[]{
\includegraphics[width=0.4\textwidth]{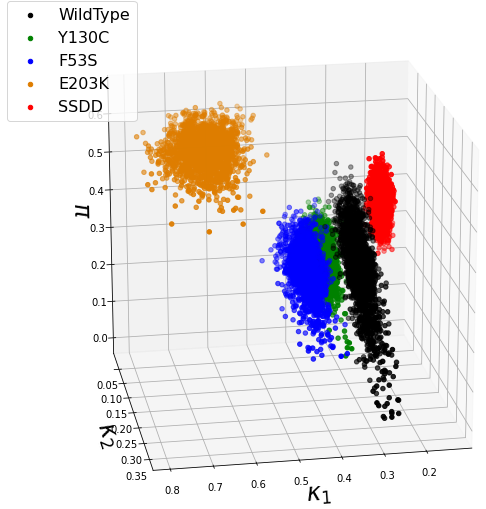}\label{fig:posterior_a}}%
\subfloat[]{
\includegraphics[width=0.4\textwidth]{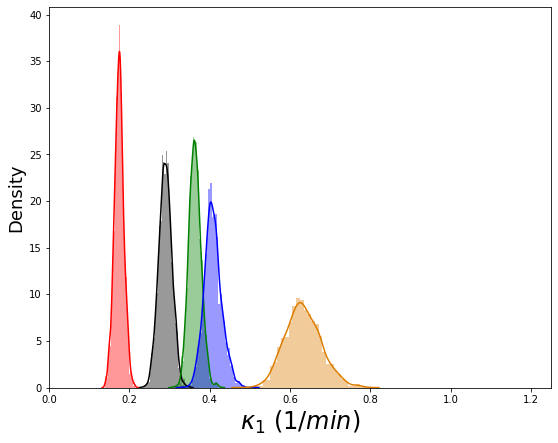}\label{fig:posterior_b}}
\\
\subfloat[]{
\includegraphics[width=0.4\textwidth]{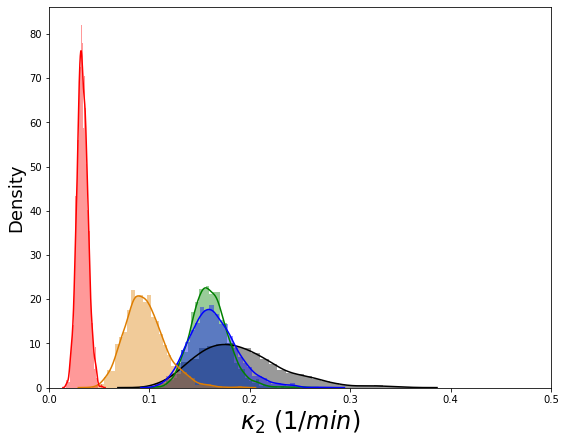}\label{fig:posterior_c}}%
\subfloat[]{
\includegraphics[width=0.4\textwidth]{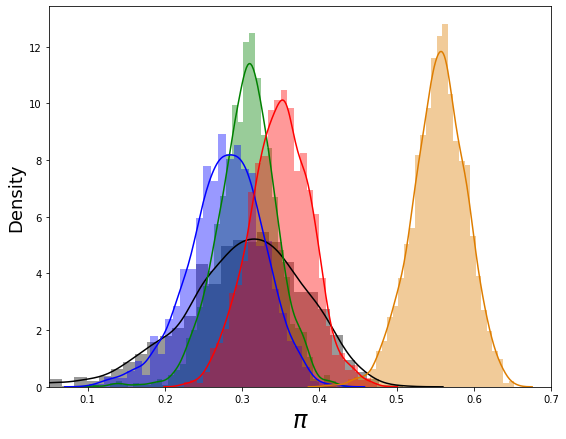}\label{fig:posterior_d}}
\caption{(\textsc{a}) Random  samples  from  the  posterior  distributions  for  the  WT  and  all mutants (2000 points each); Moreover, we display approximate marginal densities for $\kappa_1$ in (\textsc{b}), $\kappa_2$ in (\textsc{c}), and $\pi$ in (\textsc{d}) in the same colour scheme.}\label{fig:posterior}
\end{figure}

\subsection{Topological Analysis}

To analyse the topology of the samples of the resulting posterior distributions, we introduce notation and methodology from Topological Data Analysis (TDA). 

\begin{definition}
Let $\mathbf{v}$ be a finite set of vertices. A subset of the power-set of $\mathbf{v}$, $\KK\subseteq\mathcal P(\mathbf{v})$, is called a \emph{simplicial complex} if for any $\tau\in\KK$ the relation $\tau'\subseteq\tau$ implies $\tau'\in\KK$.

\glsxtrnewsymbol[description={A simplicial complex}]{K}{\ensuremath{\KK}}
\glsxtrnewsymbol[description={A simplex}]{tau}{\ensuremath{\tau}}
\glsxtrnewsymbol[description={A set of vertices}]{vcal}{\ensuremath{\mathbf{v}}}
\glsxtrnewsymbol[description={A simplicial map}]{h}{\ensuremath{h}}

We write $\KK_i=\{\tau\in\KK\,\vert\,\#\tau=i+1\}$ and call the elements of $\KK_i$ the $i$-simplices. A map $h:\mathbf{v}\to\mathbf{v}'$ which extends to a map $h:\mathcal{K}\to\mathcal{K}'$ by $h(\tau):=\{h(v)\,\vert\, v\in\tau\}$ for each $\tau\in\KK$ is called a \emph{simplicial map}. 
\end{definition}

\begin{figure}
\centering
\subfloat[]{
\includegraphics[width=0.65\textwidth]{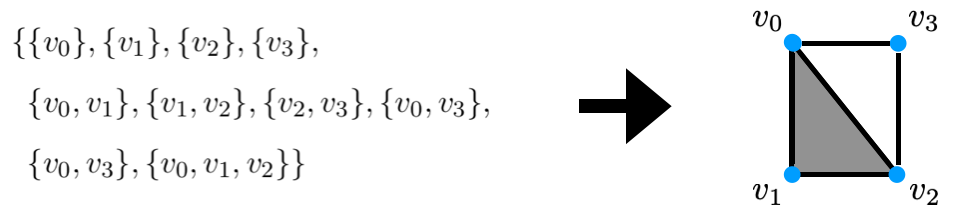}\label{fig:sc_a}}\\
\subfloat[]{
\includegraphics[width=0.7\textwidth]{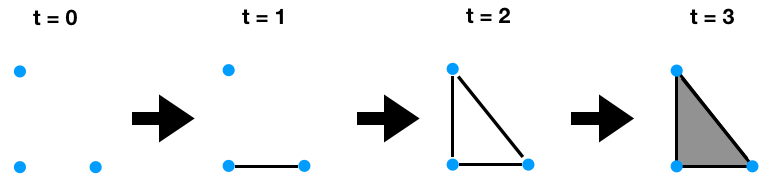}\label{fig:sc_b}}
\caption{(\textsc{a}) An example of a simplicial complex on vertices $v_0$, $v_1$, $v_2$, and $v_3$ (left) and its geometrical realisation (right); (\textsc{b}) An example of a filtration of a simplicial complex, visualised on geometric realisations.}
\end{figure}

We can view a simplicial complex as a combinatorial description of a topological space. Given a simplicial complex $\KK$, we can investigate its geometric realisation
$$\left|\KK\right|:=\bigcup_{\tau\in\KK}\mathrm{cvx}(\tau)\subseteq\RR\langle \mathbf{v}\rangle,$$
where $\mathrm{cvx}$ denotes the convex hull in the real free vector space generated by the vertices $V$. The realisation $|\KK|$ is endowed with the subspace topology in $\RR\langle \mathbf{v}\rangle$. An example of a simplicial complex and its geometric realisation can be found in Figure \ref{fig:sc_a}. Since $\KK$ is a discrete and combinatorial entity, one can compute meaningful topological information from  topological spaces (or datasets) described by simplicial complexes.

\subsubsection{Homology}
One topological invariant we can compute from simplicial complexes is homology. In each dimension $k$, the dimension of the $k$-th homology group can be thought of as the number of voids in a simplicial complex enclosed by a $k$-dimensional boundary. 
We restrict our definition of homology over the field of two elements, $\FF_2$, which is the setting for our computations. 
For a simplicial complex, the homology groups coincide with those of its geometric realisation (viewed as a topological space).

\begin{definition}
Let $\KK$ be a simplicial complex. We define its \emph{chain complex} $\mathcal{C}_\bullet(\KK)$ over $\FF_2$ to be the collection of vector spaces $\mathcal{C}_i=\FF_2\langle\KK_i\rangle$, together with the collection of linear maps $\partial_i:\mathcal{C}_i\to\mathcal{C}_{i-1}$ induced by
$$\partial_i: \tau\mapsto\sum_{v\in\tau}\tau\backslash\{v\}$$
for all $\tau\in\KK_i$.
\end{definition}

We observe that $\partial_i\circ\partial_{i+1}=0$ for all $i$. Furthermore, we note that any simplicial map $h:\KK\to\KK'$ induces a collection of maps on corresponding chain complexes $\mathcal{C}_\bullet$ and $\mathcal{C}'_\bullet$, denoted $\{\hat{h}_i:\mathcal{C}_i\to\mathcal{C}'_i\}_i$, which are defined as
$$\hat{h}_i(\tau):=\begin{cases}h(\tau) & \text{if } \dim h(\tau)=\dim\tau\\0 & \text{otherwise}\end{cases}.$$
We call such a collection of maps a \emph{chain map} from $\mathcal{C}_\bullet$ to $\mathcal{C}'_\bullet$. It satisfies $\partial'_i\circ \hat{h}_i=\hat{h}_{i-1}\circ\partial_i$ for all $i$.

\begin{definition}
Let $\KK$ be a simplicial complex and let $\mathcal{C}_\bullet(\KK)$ be its associated chain complex over $\FF_2$. Then the $k$-th homology group of $\KK$ is defined to be the quotient of vector spaces
$$H_k(\KK):=\frac{\ker\partial_i}{\mathrm{im}\,\partial_{i+1}}.$$
\end{definition}

Note that for $\hat{h}:\mathcal{C}_\bullet(\KK)\to \mathcal{C}_\bullet(\KK')$ the induced map $h^*:H_k(\KK)\to H_k(\KK')$ given by $h^*:[c]\mapsto [\hat{h}_k(c)]$, where $c\in\ker \partial_k$ and the brackets denote equivalence up to translation by $\mathrm{im}\,\partial_k$ and $\mathrm{im}\,\partial'_k$ respectively, is well defined for all $k$ \cite{roadmap}. Moreover, for simplicial maps $h:\KK\to\KK'$ and $h':\KK'\to\KK''$ we have $(h \circ h')^*= h^*\circ (h')^*$. This property is called the \emph{functorality of homology} and will be used when we introduce persistence.

\subsubsection{Persistence}

We view point clouds as a discrete subset of a continuous geometric object embedded in Euclidean space. The underlying continuous space is the primary subject of interest. In order to obtain information about this geometric object, we wish to inflate our discrete points to a continuous space, or to capture a relative offset between points in this space. In practice, we usually do not know the adequate inflation resolution. Persistence theory offers an elegant way to overcome this caveat by scaling the resolution from fine to coarse, and tracking how the homology of these spaces evolves by considering their canonical inclusion relations.

\begin{definition}
Let $\KK$ be a simplicial complex and let $g:\KK\to\RR$ be a function such that $\tau\subseteq\tau'$ implies $g(\tau)\leq g(\tau')$ for any $\tau,\tau'\in\mathcal{K}$. A \emph{filtration} of the simplicial complex $\KK$ by $g$ is then defined to be the sequence of simplicial complexes $\{\KK_L\}_{L\in\RR}$, where
$$\KK_L:=\{\tau\in\KK\,\vert\,g(\tau)\leq L\},$$
together with the canonical inclusions $\iota_L^{L'}:\KK_L\hookrightarrow\KK_{L'}$ whenever $L\leq L'$. An example of a filtration is visualised in Figure \ref{fig:sc_b}
\glsxtrnewsymbol[description={A map defining a filtration of a simplicial complex or topological space}]{g}{\ensuremath{g}}
\glsxtrnewsymbol[description={A topological space}]{Tcal}{\ensuremath{\mathcal{T}}}
In the same spirit, let $\mathcal{T}$ be a topological space and $g:\mathcal{T}\to\RR$ be a continuous function. A filtration of the topological space $\mathcal{T}$ is then defined to be the sequence of topological spaces $\{\mathcal{T}_L\}_{L\in\RR}$, where
$$\mathcal{T}_L:=\{x\in\mathcal{T}\,\vert\,g(x)\leq L\},$$
together with the canonical inclusions $\iota_L^{L'}:\mathcal{T}_L\hookrightarrow \mathcal{T}_{L'}$ whenever $L\leq L'$.
\end{definition}

A common way of constructing a filtration from a point cloud $\mathbf{v}\subset\RR^d$ is to set $\KK=\mathcal{P}(X)$ and $g(\tau)=\max\{d(x,y)\,\vert\,x,y\in \tau\}$. This is called the \emph{Vietoris-Rips filtration}, and $\KK_L$ is a good approximation to an inflation of $\mathbf{v}$ by placing balls of radius $L/2$ at each point \cite{oudot}. We will consider the following alternative filtration. For a fixed $L\in\RR$ and map $p:\RR^d\to\RR$, we set $\KK':=\KK_L$ in the Vietoris-Rips sense and consider the filtration by the map $g':\KK'\to\RR$ defined by $g'(\tau):=\max\{p(x)\,\vert\,x\in\tau\}$.

\begin{definition}
Let $\FF_2[t]$ be the ring of polynomials in the indeterminate $t$ with coefficients in $\FF_2$.
Let $\{\KK_L\}_{L\in\RR}$ be a filtration of a simplicial complex. Moreover, define $\mathrm{Crit}_L:=\{L\in\RR\vert\iota^L_{L-\varepsilon}\neq\mathrm{id}\,\forall \varepsilon>0\}$, the set of all $L$ at which $\KK_L$ changes (which is a finite set at $\KK$ is finite). Define the function $c:\mathbb{N}_0\to \mathrm{Crit}_L\cup\{\inf \mathrm{Crit}_L-1\}$ by mapping 0 to $\inf \mathrm{Crit}_L-1$ and $n>0$ to the $n$-th smallest element of $\mathrm{Crit}_L$ (without loss of generality, we map integers bigger than the cardinality of $\mathrm{Crit}_L$ to the largest element of $\mathrm{Crit}_L$).

For a fixed integer $k$, let $H_k(\,\cdot\,)$ denote the $k$-th simplicial homology with coefficients in $\FF_2$. Define
\begin{equation}
M_k:=\bigoplus_{n\in\NN_0}H_k\left(\KK_{c(n)}\right)\label{eq:pmod}\end{equation}
together with the action of $\FF_2[t]$ on $M_k$ induced by $t^a\cdot x = \iota_{c(n+a)}^{c(n)}(x)^*\in H_k(\KK_{c(n+a)})$ for $x\in H_k(\KK_{c(n)})$ and non-negative integer $a$. Then $M_k$ is a (graded) $\FF_2[t]$-module, called the \emph{persistence module} of the filtration.
\end{definition}
\glsxtrnewsymbol[description={The $k$-th homology functor}]{Hp}{\ensuremath{H_k}}
\glsxtrnewsymbol[description={A persistence module}]{Mc}{\ensuremath{M}}
\glsxtrnewsymbol[description={A matching of barcodes}]{m}{\ensuremath{m}}

The definition works analogously for a filtration of a topological space (assuming that the homology of the spaces changes at only finitely many filtration values). It can be shown that the operation of taking a persistence module of a filtration of a simplicial complex (or a topological space) is functorial. Hence, persistence modules are algebraic invariants of filtrations.

Since $\KK$ is finite, the persistence module $M_k$ is finitely generated as a $\FF_2[t]$-module. As $\FF_2[t]$ is a principal ideal domain, $M_k$  decomposes into summands generated by a single object uniquely up to (graded) isomorphism and permutation of summands. Hence, we can write
$$M_k\cong\left(\bigoplus_{a\in G_F}\FF_2[t]\right)\oplus\left(\bigoplus_{b\in G_T}\FF_2[t]/\langle t^{d_b}\rangle\right),$$
where $G_F$ is the subset of chosen generators that are free and $G_T$ is the subset of generators that are torsion. In particular, any element in $G_F$ or $G_T$ will have a non-zero entry in exactly one summand of the decomposition in Equation (\ref{eq:pmod}). We call the integer $n$ indexing this entry the \emph{degree} of that element.

\begin{definition}
Let $M_k$ be a persistence module that decomposes as above. Let $\mathrm{deg}:G_F\cup G_T\to\NN_0$ be the function mapping each element to its degree. The barcode of $M_k$ is defined to be the multiset
$$\mathcal B:=\{(c(\mathrm{deg}(a)),\infty)\,\vert\,a\in G_F\}\cup\{(c(\mathrm{deg}(a)),c(\mathrm{deg}(a)+d_a))\,\vert\,a\in G_T\}.$$
We call the elements of $\mathcal{B}$ \emph{bars}, the first coordinate of each bar its \emph{birth-value}, the latter coordinate its \emph{death-value} and the absolute difference of the coordinates its \emph{persistence}.

A \emph{matching of barcodes} $\mathcal B$ and $\mathcal B'$ is a partial injection $\varpi:\mathcal B\hookrightarrow\mathcal B'$.
The \emph{bottleneck distance} between $\mathcal B$ and $\mathcal B'$ is defined to be
$$d_{BD}\left(\mathcal B,\mathcal B'\right):=\inf_\varpi\,\max\left\{\max_{a\in\mathrm{dom}\,\varpi}\left\|a-\varpi(a)\right\|_\infty,\max_{(x,y)\not\in\mathrm{dom}\,\varpi}\frac{y-x}{ 2},\max_{(x,y)\not\in\mathrm{im}\,\varpi}\frac{y-x}{ 2}\right\},$$
where the infimum is taken over all possible matchings and elements of a barcode are viewed as elements of $\RR^2$ (we assume $\infty-\infty=0$). Here, $\mathrm{dom}\, \varpi$ is the domain of $\varpi$, i.e., the set of inputs at which $\varpi$ is defined.
\end{definition}

The bottleneck distance defines a metric on the space of barcodes \cite{oudot}. This metric is stable in the following sense:

\begin{theorem}[e.g. Corollary 3.6 in \cite{oudot}]
Let $\KK$ be a simplicial complex and let $g,g':\KK\to\RR$ be functions defining filtrations of $\KK$, and subsequently persistence modules $M_k$ and $M'_k$, and barcodes $\mathcal B$ and $\mathcal B'$. Then
$$d_{BD}\left(\mathcal B,\mathcal B'\right)\leq\left\|g-g'\right\|_\infty.$$
\end{theorem}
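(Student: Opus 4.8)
The plan is to reduce this geometric stability statement to the purely algebraic \emph{stability theorem for persistence modules}, using the notion of an interleaving. Write $\delta := \|g - g'\|_\infty$, so that $|g(\tau) - g'(\tau)| \leq \delta$ for every simplex $\tau \in \KK$. The strategy proceeds in three stages: transfer the closeness of $g$ and $g'$ to an interleaving of filtrations, push that interleaving through homology to obtain an interleaving of persistence modules, and finally convert the interleaving into a matching of barcodes.

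First I would translate the closeness of $g$ and $g'$ into a relationship between the two filtrations. With $\KK_L = \{\tau \in \KK \mid g(\tau) \leq L\}$ and $\KK'_L = \{\tau \in \KK \mid g'(\tau) \leq L\}$, the inequality $g'(\tau) \leq g(\tau) + \delta$ shows that $\tau \in \KK_L$ implies $\tau \in \KK'_{L+\delta}$, hence $\KK_L \subseteq \KK'_{L+\delta}$; by symmetry $\KK'_L \subseteq \KK_{L+\delta}$. These cross-inclusions are compatible with the canonical inclusions within each filtration, so together they exhibit the two filtrations as \emph{$\delta$-interleaved}.

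Second, I would push this interleaving through homology. Because $H_k(\,\cdot\,)$ is functorial (as recorded above), applying it to the two families of inclusions yields homomorphisms $\phi_L : H_k(\KK_L) \to H_k(\KK'_{L+\delta})$ and $\psi_L : H_k(\KK'_L) \to H_k(\KK_{L+\delta})$ which commute with the internal structure maps of each persistence module and whose composites $\psi_{L+\delta} \circ \phi_L$ and $\phi_{L+\delta} \circ \psi_L$ equal the transition maps advancing the filtration value by $2\delta$. Viewing $M_k$ and $M'_k$ as $\RR$-indexed persistence modules (the $\FF_2[t]$-grading being the discretisation recorded in Equation (\ref{eq:pmod})), this is precisely a \emph{$\delta$-interleaving} between $M_k$ and $M'_k$.

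The main obstacle, and the heart of the matter, is the algebraic stability theorem: a $\delta$-interleaving of persistence modules forces $d_{BD}(\mathcal B, \mathcal B') \leq \delta$. I would establish this using the structure theorem already invoked above, which decomposes $M_k$ and $M'_k$ into interval (bar) summands, and then produce an explicit matching realising the bound. The cleanest route is the induced-matching argument of Bauer--Lesnick: the interleaving morphisms $\phi$ and $\psi$ induce, via their images, a partial injection between the two barcodes in which every matched pair of bars has birth- and death-values differing by at most $\delta$, while every unmatched bar has persistence at most $2\delta$ and hence contributes at most $(y-x)/2 \leq \delta$ to the bottleneck cost; one could alternatively use the interpolation construction of Cohen--Steiner--Edelsbrunner--Harer. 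Taking the infimum over matchings gives $d_{BD}(\mathcal B, \mathcal B') \leq \delta$, and substituting $\delta = \|g - g'\|_\infty$ completes the proof. I expect the bookkeeping in this last stage --- verifying that the induced matching genuinely realises the claimed bound on both matched and unmatched bars --- to be the most delicate part, with the earlier interleaving steps being essentially formal.
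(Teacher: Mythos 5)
The paper does not prove this statement itself --- it imports it as Corollary 3.6 of the cited reference --- and your route (read $\|g-g'\|_\infty$-closeness as a $\delta$-interleaving of sublevel-set filtrations, push it through the homology functor to a $\delta$-interleaving of persistence modules, then invoke algebraic stability via induced matchings or interpolation) is exactly the standard argument by which that reference establishes the result. Your sketch is correct, and you rightly flag that the genuine content sits in the algebraic stability step rather than in the formal interleaving steps.
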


Henceforth, we write  $\mathrm{PH}_k(g)$ to denote the $k$-dimensional persistent homology (which can equivalently be summarised by a barcode or a persistence module) of a simplicial complex or a topological space filtered by a function $g$.

\subsubsection{Persistent homology of random data}
In this Subsection, we study the persistent homology of the posterior distributions of the parameter inferences of Subsection \ref{sec:inference}.
Note that simplicial complexes, filtrations and persistent homology can also be employed to compare biological models \emph{a priori} (i.e., with no dependence on measurement data) \cite{vittadello2020model}.

We demonstrate that filtering a Vietoris-Rips complex for a fixed value $L$ by a function $g'$, as described at the beginning of this Section, yields more discriminative power. Here, we pick $g'$ to be an estimated probability density function. These filtrations turn out to be highly discriminative between the mutants and offer novel insight at the biological level.
While a Vietoris-Rips filtration is entirely based on distances, the construction we employ, using a Vietoris-Rips complex at a fixed parameter and then filtering it by a probability density function (pdf), places an emphasis on density. The information encoded is directly related to the probability distribution and the resulting barcodes will stabilize as the sample size increases \cite[Theorem 3.5.1 in ][]{Rabadan2020}. Furthermore, the chosen construction is stable with respect to outliers. By contrast, in a Vietoris-Rips filtration, bars in the resulting barcodes will converge towards zero length when increasing the sample size and a single outlier, even in a large sample, can change a barcode drastically.

Initially, assume that we are given a probability density function $p:\RR^m\to\RR$. This defines a filtration of the graph by $-p$, $\mathcal{T}$ say, via $\mathcal{T}_L=\left\{x\in\RR^m\,\vert\, -p(x)\leq L\right\}$. For $L'\leq L$ we then have $\mathcal{T}_{L'}\subseteq \mathcal{T}_{L}$. Such a filtration is visualised for the case $m=1$ in Figure \ref{fig:pdffilt}. By analogy with filtrations of simplicial complexes, we can theoretically compute a barcode for each such topological filtration and investigate the resulting bottleneck distances.

For each (homological) dimension, these barcodes provide a topological signature of a posterior distribution. We point out that although this signature is \emph{not} a sufficient statistic, it is effective at distinguishing between posteriors corresponding to distinct mutants in our application. In particular, for any pdf $p_1:\RR^d\to\RR$, the pdf $p_2(x)=p_1(x-x_0)$ gives rise to the same topological signature for any constant $x_0\in\RR^d$. Thus, rather than comparing the location of probability density in parameter space, in the context of a Bayesian inference, this topological signature captures the quality of the certainty we have in parameter values, irrespective of their location.

For example, bars in the $H_0$-barcode encode the density (as negative of the birth-value) and the prominence (as the persistence) of the modes of a pdf. Similarly, Morse Theory tells us that for a (smooth) pdf on $\RR^d$, the $(d-1)$-th barcode captures local minima by their density (as death-value) and the depth of their basin of attraction (as persistence).

In order to conduct such a topological analysis, two questions must be addressed:
\begin{enumerate}
\item How can we approximate the topology of a graph of a probability density combinatorially (i.e., in a manner amenable to the application of discrete computational methods) if only point samples are available?
\item Can we test the statistical significance of the resulting bottleneck distances?
\end{enumerate}

To resolve the first question, we will employ a result from Bobrowksi et al. \cite{kerest} that relies on the concept of kernel density estimation (KDE). In order to test the significance of the resulting bottleneck distance, we will use an empirical p-value estimate.

\begin{definition}
Let $\mathbf{v}=\{v_1,\dots,v_N\}\subseteq\RR^m$ be a set of $N$  samples drawn independently from a probability distribution governed by the density function $p:\RR^m\to\RR$. Let $K:\RR^m\to\RR$ be smooth, unimodal, symmetric probability density function whose support is contained in the unit ball centred at $0$. Then
$$\hat{p}_b(x)=\frac{1}{ Nb^m}\sum_{i=1}^NK\left(\frac{x-v_i }{ b}\right)$$
is called a \emph{kernel density estimate (KDE)} of $p$ with \emph{bandwidth} $b$.
\end{definition}
\glsxtrnewsymbol[description={A kernel function}]{Kc}{\ensuremath{K}}%
\glsxtrnewsymbol[description={The bandwidth of a kernel}]{b}{\ensuremath{b}}%

On each sample $v_i$, we place a pdf and average it, where $b$ controls the width of each pdf, that is, how much of the probability mass is centred around $v_i$. Loosely speaking, if $b$ is too large, then the resulting function underfits a histogram given by the data, while if it is too small, then the bandwidth overfits the histograms (see Figure \ref{fig:kde_eg}). The bandwidth is negatively correlated with the sample size and there are standardised ways of picking optimal bandwidths for the case where $p$ is unknown \cite{Henderson2012}.

\begin{figure}
\centering
\includegraphics[width=0.6\textwidth]{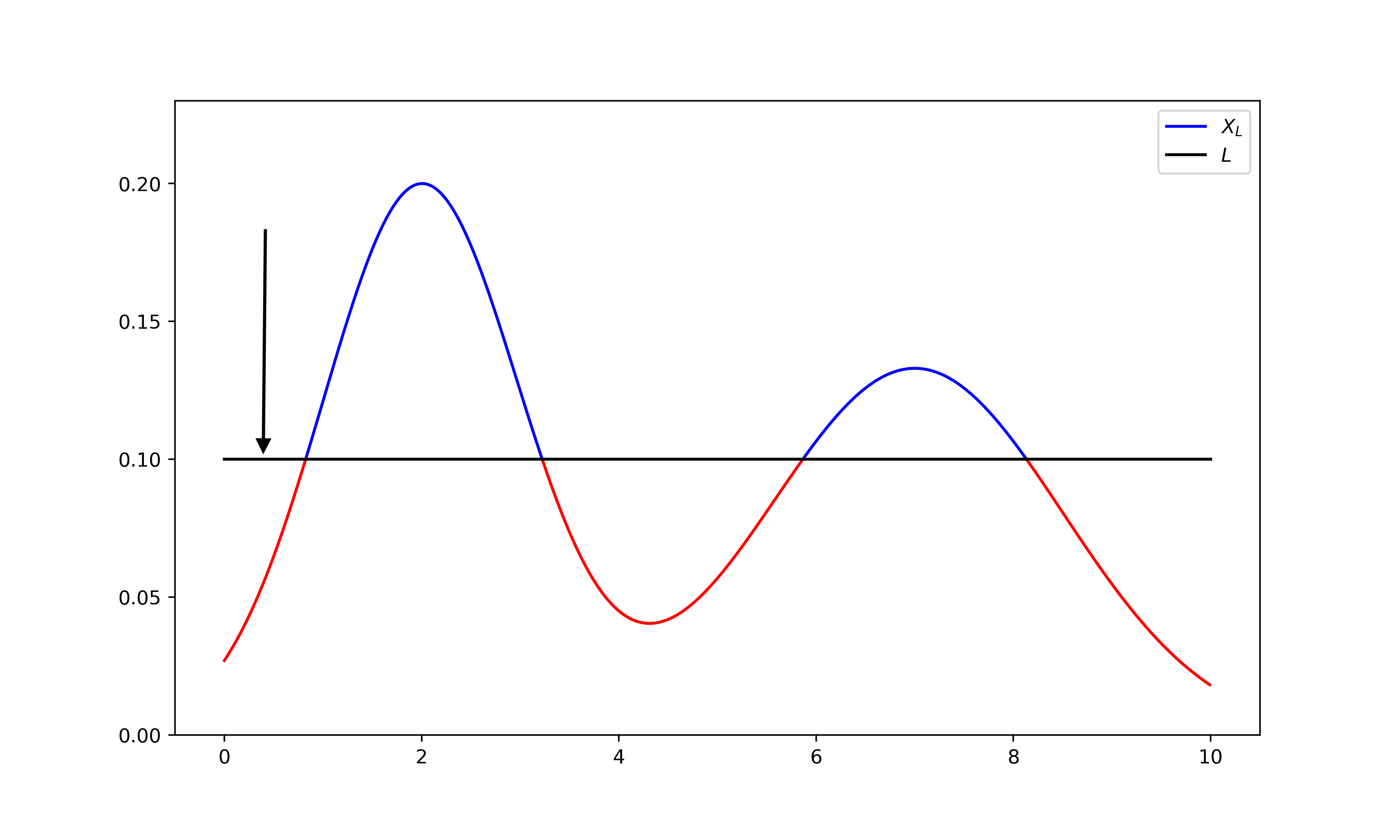}
\caption{An example of a super-level-set filtration of the graph of a density function $p:\RR^1\to\RR$. This is equivalent to a sub-level-set filtration of $-p$.}\label{fig:pdffilt}
\end{figure}

\begin{figure}
\centering
\includegraphics[width=10cm]{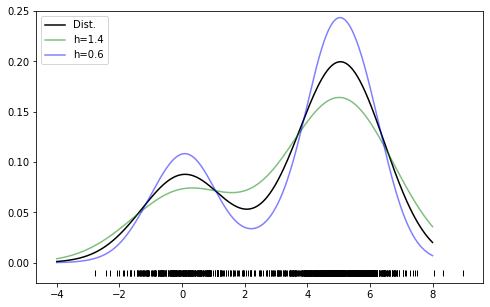}
\caption{A probability density function on $\RR^1$ (black line) with 1.000 samples (black dashes). We see kernel density estimations with bandwidth 0.6 (blue line) and 1.4 (green line). The ideal bandwidth is approximately 1.}\label{fig:kde_eg}
\end{figure}

Given such an i.i.d. sample $\mathbf{v}=\{v_1,\dots,v_N\}\subseteq\RR^m$ from our probability density function $p$ and an optimal bandwidth $b$, we can construct a Vietoris-Rips complex with fixed parameter $b$ (equalling the bandwidth)
$$\mathrm{VR}_b(\mathbf{v}):=\left\{\{v_0,\dots,v_k\}\subseteq \mathbf{v}\,\vert\,\|v_i-v_j\|\leq b\,\forall i,j\right\}.$$
For the sake of brevity, let \glsxtrnewsymbol[description={A Vietoris-Rips complex on vertices $\mathbf{v}$ at resolution $b$}]{phat}{\ensuremath{\mathrm{VR}_b(\mathbf{v})}}%
$\KK=\text{VR}_b(\mathbf{v})$. The KDE $\hat{p}_b$ of $p$ based on $\mathbf{v}$ then extends to a function on $\mathcal K$ via
$$\hat{p}_b(\{v_0,\dots,v_k\}):=\min\left\{\hat{p}_b(v_0),\dots,\hat{p}_b(v_k)\right\}.$$
In return, the extended function $\hat{p}_r$ defines a filtration $\{\KK_L\}_{L\in\RR}$ of $\KK$ by
$$\KK_L:=\left\{\{v_0,\dots,v_k\}\,\vert\,-\hat{p}_b(\{v_0,\dots,v_k\})\leq L\right\}.$$
We seek to relate the persistent homology of the filtration of simplicial complexes $\KK_L$ to the persistent homology of the filtration of topological spaces $\mathcal{T}_L$.

\glsxtrnewsymbol[description={A barcode}]{B}{\ensuremath{\mathcal{B}
}}

In order to use results from \cite{kerest}, we introduce some notation. For a function $f:\KK\to\RR$ and $\eta>0$ define
$f_{\lfloor\eta\rfloor}(\sigma):=2\eta\lfloor f(\sigma)/(2\eta)\rfloor$. Then

\begin{theorem}[Theorem 3.7 in \cite{kerest}]\label{thm:kerest}
Let $p:\RR^m\to\RR$ be a smooth bounded pdf with finitely many critical points. Let $\hat{p}$ be a KDE with bandwidth $b$ based on $n$ i.i.d samples of $p$ and $\KK$ be a simplicial complex as above. Assume $b\to0$ and $Nb^m\to\infty$. Then for any $0\leq k \leq m$, we have
$$\mathrm{Pr}\left(d_{BD}\left(\mathrm{PH}_k(p\right), \mathrm{PH}_k\left(\hat{p}_{\lfloor\eta\rfloor}\right)\leq 5\eta \right)\geq 1-3\eta^* Ne^{-C_\eta Nr^d},$$
where for $p_\mathrm{max}:=\sup_{x\in\RR^d}p(x)$ we define
$$\eta^*:=\left\lceil p_\mathrm{max}/2\eta\right\rceil\quad\text{and}\quad C_\eta:=\frac{(\eta/2)^2}{ 3p_\mathrm{max}+\eta/2}.$$
\end{theorem}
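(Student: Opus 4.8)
The plan is to prove the statement by realising both persistence modules as filtrations that can be compared through the bottleneck stability theorem stated above — namely, that when two functions filter the \emph{same} space the bottleneck distance between their barcodes is at most the sup-norm of their difference — and to split the total error budget of $5\eta$ into three pieces: a \emph{flooring} contribution, a \emph{statistical} contribution coming from the randomness of the samples, and a \emph{geometric} contribution coming from the passage between a continuous space and the finite complex $\KK=\mathrm{VR}_b(\mathbf v)$. Each piece is bounded by a small multiple of $\eta$, while the two sources of randomness (the concentration of the estimator and the density of the sample) are absorbed into the probability bound $1-3\eta^* N e^{-C_\eta Nb^m}$ (the exponent is written $Nr^d$ in the statement, under the transcription $r=b$, $d=m$).

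First I would dispose of the flooring term. By definition $f_{\lfloor\eta\rfloor}(\sigma)=2\eta\lfloor f(\sigma)/(2\eta)\rfloor$, so $0\le f-f_{\lfloor\eta\rfloor}\le 2\eta$ pointwise; applying stability to the two filtrations of the \emph{same} complex $\KK$ by $\hat p_b$ and by $(\hat p_b)_{\lfloor\eta\rfloor}$ gives $d_{BD}(\mathrm{PH}_k(\hat p_b),\mathrm{PH}_k((\hat p_b)_{\lfloor\eta\rfloor}))\le 2\eta$. This reduces everything to comparing $\mathrm{PH}_k(p)$ with the unfloored estimate. Next comes the statistical term, comparing the two continuous functions $p$ and $\hat p_b$ on $\RR^m$ through the bias-variance split $\hat p_b-p=(\hat p_b-\mathbb E\hat p_b)+(\mathbb E\hat p_b-p)$. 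Smoothness of $p$ together with $b\to 0$ makes the bias $\|\mathbb E\hat p_b-p\|_\infty$ negligible (at most $\eta$ for $b$ small), while the fluctuation $\hat p_b(x)-\mathbb E\hat p_b(x)$ is an average of $N$ i.i.d.\ bounded terms $\tfrac1{b^m}K((x-v_i)/b)$, each of variance $O(p_\mathrm{max}/b^m)$, so a Bernstein inequality with deviation $t=\eta/2$ produces exactly the factor $\exp(-C_\eta Nb^m)$ with $C_\eta=(\eta/2)^2/(3p_\mathrm{max}+\eta/2)$, the proxy $p_\mathrm{max}$ and the effective sample size $Nb^m$ explaining the precise form of the constant. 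To turn this pointwise bound into a persistence statement I would discretise the range $[0,p_\mathrm{max}]$ into the $\eta^*=\lceil p_\mathrm{max}/(2\eta)\rceil$ levels spaced by $2\eta$ and test the estimate only at the $N$ vertices of $\KK$ (the filtration value of a simplex being the extremum over its vertices); a union bound over these events, inflated to $3\eta^* N$ to cover both tails and the grid endpoints, yields the stated prefactor.

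The main obstacle, and the step I would attack most carefully, is the \emph{geometric} comparison that makes stability legitimate in the first place: $\mathrm{PH}_k(p)$ is the persistent homology of a super-level-set filtration of the continuous space $\RR^m$, whereas $\mathrm{PH}_k(\hat p_b)$ lives on the finite complex $\KK=\mathrm{VR}_b(\mathbf v)$, so the two are not a priori filtrations of the same object. Because the kernel $K$ is supported in the unit ball, the super-level sets of $\hat p_b$ are unions of regions within distance $b$ of the samples, and one must show that building the Vietoris–Rips complex at exactly the bandwidth scale $b$ recovers their homotopy type \emph{uniformly across all filtration levels}. I would establish this by a persistent Nerve-type argument: the hypothesis $Nb^m\to\infty$ guarantees, with high probability, that the samples are dense enough relative to the feature size controlled by the finitely many critical points of $p$, so that the union-of-balls cover is good and the Nerve lemma provides level-wise homotopy equivalences commuting with inclusions, hence a barcode isomorphism or a tight interleaving costing a further $O(\eta)$.

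Assembling the pieces, the flooring, bias and interleaving contributions can each be arranged to be a small multiple of $\eta$ summing to $5\eta$, while the concentration failure and the sampling-density failure are the only random events and together contribute the probability $3\eta^* N e^{-C_\eta Nb^m}$. The delicate points throughout are the uniformity of the density- and feature-size control over \emph{all} filtration values simultaneously (so that a single Nerve/interleaving argument applies across the whole filtration rather than at one scale), and the bookkeeping of constants so that the three $\eta$-budgets and the single Bernstein exponent reproduce the claimed inequality exactly.
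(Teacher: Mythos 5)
The first thing to note is that the paper does not prove this statement: it is quoted (as Theorem 3.7) from Bobrowski et al.\ \cite{kerest} and used as a black box, so there is no in-paper argument to compare yours against (indeed the paper even carries over some notational slippage from the source, e.g.\ $Nr^d$ versus $Nb^m$ and $\RR^d$ versus $\RR^m$). What you have written is therefore a reconstruction of the external proof, and it is a broadly sensible one: the three-way error split (flooring, statistical concentration of the KDE, geometric passage from $\RR^m$ to the finite complex), the Bernstein-type origin of $C_\eta$, and the union bound over the $\eta^*$ discretised levels and the $N$ vertices are all consistent with how the constants in the stated bound must arise.

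There are, however, two places where your sketch would not go through as written. First, you invoke a ``persistent Nerve-type argument'' to identify the homotopy type of the super-level sets of $\hat p_b$ with the subcomplexes of $\KK=\mathrm{VR}_b(\mathbf v)$; the Nerve lemma applies to the \v{C}ech complex of a good cover, not to the Vietoris--Rips complex, so at best you get the standard \v{C}ech--Rips interleaving in the \emph{spatial} parameter, and it is not automatic that this converts into a controlled perturbation in the \emph{density} filtration parameter --- the two scales are decoupled here, since the spatial scale is frozen at $b$ while the filtration runs over density values. Closing this gap is precisely the technical content of the cited result: one needs the finitely-many-critical-points hypothesis to control the feature size of each super-level set of $p$ uniformly, and the $Nb^m\to\infty$ hypothesis to guarantee sample density inside each such set, before any level-wise homotopy equivalence commuting with inclusions can be asserted. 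Second, your error budget is left as ``three small multiples of $\eta$ summing to $5\eta$''; since the flooring step alone already costs $2\eta$, the remaining $3\eta$ must be shared between the bias, the vertex-versus-supremum discrepancy, and the geometric comparison, and without explicit constants it is not clear the pieces fit. Neither objection means the strategy is wrong --- it is essentially the strategy of \cite{kerest} --- but as it stands the central geometric step is asserted rather than proved.
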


Theoretically, the above theorem can be exploited for testing the null hypothesis $\mathbf{H}_0:\mathrm{PH}_k\left(p\right)=\mathrm{PH}_k\left(p'\right)$ for two distributions $P$ and $P'$ with associated densities $p$ and $p'$, as the result enables us to establish a bound on how large  
a bottleneck distance can be explained by sampling noise at a given significance level. However, we estimate that to use this theorem for showing that the bottleneck distances between posterior distributions associated with the wild-type and the four mutants are significant, we must sample at least $1.5\times 10^7$ points per distribution. This makes persistent homology computation infeasible.

At the same time, we observe that there is little change in the bottleneck distances between the barcodes resulting from the wild-type's and the four mutants' posterior distributions when resampling point clouds containing as few as 200 points. This leads us to think that the true p-value associated with the null hypothesis $\mathbf{H}_0:\mathrm{PH}_k\left(p\right)=\mathrm{PH}_k\left(p'\right)$, where $p$ and $p'$ are posterior densities corresponding to the wild-type and a mutant is possibly much lower than the upper bound derived by appealing to Theorem \ref{thm:kerest}. 
One factor that may explain this discrepancy is that while our distributions are technically distributions on $\RR^3$, they have compact support. Similarly, major sources of instability for KDE, and subsequently for the filtration of density functions, are modes linked to outliers, while repeated simulations suggest that in our case all density functions are
unimodal. Together, these aspects imply that the computed barcodes could converge to the barcode obtained by filtering the unknown density function at a faster rate than in the general setting of Theorem \ref{thm:kerest}.

Henceforth, we use the method of constructing a filtration based on a point cloud proposed in \cite{kerest}, which is provably well-behaved asymptotically but uses a different approach to estimate significance. To do this we opt for a Monte Carlo p-value estimate, also known as the empirical p-value (e.g. see \cite{bsapp}). For each mutant, we sample $\beta$ additional point clouds of size $n$ from the posterior distribution. In this context, for the first mutant (or the wild-type) under investigation, call the original point cloud $\mathbf{v}$ and let $\mathbf{v}_i$ for $i=1,...,\beta$ denote $\beta$ additional point clouds of size $n$, obtained by repeated sampling. Define $\mathbf{v}'$ and $\mathbf{v}'_i$ analogously for a distinct mutant. Let $d_i=d_{BD}\left(\mathrm{PH}_k\left(\hat{p}\right),\mathrm{PH}_k\left(\hat{p}_i\right)\right)$, where $\hat{p}_i$ is the density estimate obtained from $\mathbf{v}_i$ and define $d_i'$ analogously. Assume $d=d_{BD}\left(\mathrm{PH}_k\left(\hat{p}\right),\mathrm{PH}_k\left(\hat{p}'\right)\right)$ is the $j$-th largest element in the multiset $\left\{d_i\right\}_{i=1}^\beta\cup\{d\}$ and the $j'$-th largest element in $\left\{d'_i\right\}_{i=1}^\beta\cup\{d\}$ for two distinct mutants, then
$$\hat{\pi}=\min\left\{\frac{\beta+1-j}{ \beta+1},\frac{\beta+1-j' }{ \beta+1}\right\}$$%
\glsxtrnewsymbol[description={A p-value estimate}]{varpi}{\ensuremath{\hat{\pi}}}%
is a p-value estimate for a hypothesis test $\mathbf{H}_0:\mathrm{PH}_1\left(p\right)=\mathrm{PH}_1\left(p'\right)$. The resulting p-value estimates, for each pair of mutants and wild-type, can be found in Table \ref{fig:pval}. It is likely that these p-value estimates over-estimate the actual value, but they allow us to reject all null hypotheses at a significance level of $0.05$ \cite{North2002}.

\begin{table}
\centering
\begin{tabular}{c||c|c|c|c|c}
$d_{BD}$ & wild-type & Y130C & F53S & E203K & SSDD\\\hline\hline
wild-type & 0.0000 & 401.5999 & 334.7258 & 186.3972 & 2162.7175 \\\hline
Y130C & 401.5999 & 0.0000 & 401.5999 & 401.5999 & 2124.4453\\\hline
F53S & 334.7258 & 401.5999 & 0.0000 & 334.7258 & 2162.7175\\\hline
E203K & 186.3972 & 401.5999 & 334.7258 & 0.0000 & 2162.7175 \\\hline
SSDD & 2162.7175 & 2124.4453 & 2162.7175 & 2162.7175 & 0.0000 \\\hline
\end{tabular}\\\vspace{\baselineskip}
\begin{tabular}{c||c|c|c|c|c}
$\hat{\pi}$ & wild-type & Y130C & F53S & E203K & SSDD\\\hline\hline
wild-type & 0 & 0.01 & 0.01 & 0.01 & 0.01 \\\hline
Y130C & 0.01 & 0 & 0.01 & 0.01 & 0.01 \\\hline
F53S & 0.01 & 0.01 & 0 & 0.01 & 0.01 \\\hline
E203K & 0.01 & 0.01 & 0.01 & 0 & 0.01 \\\hline
SSDD & 0.01 & 0.01 & 0.01 & 0.01 & 0 \\\hline
\end{tabular}
\caption{Bottleneck distance between Barcodes for $H_1$ obtained from super-level-set filtration of KDEs (top) and their respective p-value estimates (bottom).}\label{fig:pval}
\end{table}

The results (Table \ref{fig:pval}) of the topological data analysis quantifies the differences between the Linear ERK model parameter posteriors for WT and mutants, and find SSDD mutant kinetics are most different from WT and other mutants.
This biological result raises the suitability for using the SSDD variant as a replacement for wild-type MEK activated by Raf. We suggest this should be investigated with further experimental studies. The previous work by Yeung \etal \cite{Yeung2020} found that $\pi$, the processivity parameter, of E203K was differed the most from WT MEK. Here we extended and complemented their analysis by comparing the three parameters together as a point cloud.

It remains to address the practical computability of all the constructions involved. As mentioned in the previous Section, we use the statistical software STAN (in particular, PyStan) to sample from the posterior distributions. This sampling is approximate via Hamiltonian MCMC \cite{STAN}, but we can verify via output summaries and trace plots of the Markov chains involved that all chains have converged close to their stationary distribution during the warm-up phase. 

In order to construct the KDE, we used the \texttt{KernelDensity} method of the Python package \texttt{sklearn}. We use the Epanechnikov kernel, which satisfies the hypothesis of the kernel in Theorem \ref{thm:kerest}. As a guess for the bandwidth, this package uses a rule of thumb proportional to Silverman's method, which we then cross-validate and plot against a histogram of our samples for each marginal distribution. Given experimental data, we construct a Vietoris-Rips complex with a radius $b$, equalling the bandwidth from the KDE, using the Python package \texttt{dionysus} (Version 2). We compute the resulting bottleneck distances using the package \texttt{persim}.

\section{Conclusion}

We presented an exhaustive mathematical analysis that supports the three main findings presented in Yeung et al \cite{Yeung2020}: model reduction, analysis of the model parameters and comparing mutation kinetics. Yeung et al. observed that certain values of parameter combinations from the Full ERK model fit the data, which in turn motivated the creation of a reduced model, the Linear ERK model. We confirmed the derivation of the Linear ERK model using algebraic QSS and the validity of the QSS approximation using the QSS variety. 
We performed systematic identifiability analyses on all three models, which is a prerequisite for meaningful parameter estimation. We found the Full, Rational and Linear ERK models are structurally identifiable. We then improved a previous definition of practical identifiability and showed that the Linear ERK model is practically identifiability but Rational and Full ERK models are not, which is consistent with \cite{Yeung2020}. We reproduced the parameter inference for wild-type and mutant MEK experiments.  While Yeung et al visually inspected samples of the posteriors, here we quantified these point clouds with computational algebraic topology. In future, 
it would be interesting to further explore the relationship between topological analysis and practical identifiability and how they may be used to inform experimental design \cite{apgar,hagen}. 
Throughout we stated open problems, which showcase the potential role of algebra, geometry and topology in systems and synthetic biology. Complementary to the analysis here is an inference of models in systems and single-cell biology that relies on algebra and topology \cite{wang2019inferring, StumpfMC2021, rizvi2017single}.
We hope 
our analysis of this ERK case study will motivate other systems biologists to partner with algebraists and topologists to analyse dynamical systems together with their experimental setup and data.

\section*{Acknowledgements}
The authors thank Gleb Pogudin and Stas Shvartsman for helpful discussions.
HAH, LM and HMB are grateful to the support provided by the UK Centre for Topological Data Analysis EPSRC grant EP/R018472/1. HAH gratefully acknowledges funding from EPSRC EP/R005125/1 and EP/T001968/1, the Royal Society RGF$\backslash$EA$\backslash$201074 and UF150238, and Emerson Collective. LM gratefully acknowledges support by the EPSRC grant EP/R513295/1 and by the Ludwig Institute for Cancer Research.

\renewcommand\refname{References}
\cleardoublepage\phantomsection\addcontentsline{toc}{chapter}{Bibliography}
\printbibliography

\renewcommand\refname{APPENDIX}

\renewcommand\thesection{\Alph{section}}
\setcounter{section}{1}
\setcounter{subsection}{0}
\renewcommand*{\thesubsection}{\Alph{section}.\the\value{subsection}}

\pagebreak

\section*{Appendix}\addcontentsline{toc}{section}{Appendix}

\subsection{QSSA Model Reduction via the Classical Singular Perturbation Theory Approach}\label{sec:qssa_mr}

Recall the Full ERK model (Equations (\ref{eq:a_start}) to (\ref{eq:a_end})):

\begin{flalign}\notag
\frac{\mathrm dS_0}{\mathrm dt}&= -k_{f_1}E\cdot S_0+k_{r_1}C_1,\\\notag
\frac{\mathrm dC_1}{\mathrm dt}&= k_{f_1}E\cdot S_0 -(k_{r_1}+k_{c_1})C_1,\\\notag
\frac{\mathrm dC_2}{\mathrm dt}&= k_{c_1}C_1-(k_{r_2}+k_{c_2})C_2 + k_{f_2}E\cdot S_1,\\\notag
\frac{\mathrm dS_1}{\mathrm dt}&= -k_{f_2}E\cdot S_1+k_{r_2}C_2,\\\notag
\frac{\mathrm dS_2}{\mathrm dt}&=k_{c_2}C_2,\\\notag
\frac{\mathrm dE}{\mathrm dt}&= -k_{f_1}E\cdot S_0+k_{r_1}C_1-k_{f_2}E\cdot S_1 +(k_{r_2}+k_{c_2})C_2,
\end{flalign}
with initial conditions $S_0=S_{tot}$, $E=E_{tot}$, and $S_1=S_2=C_1=C_2=0$ at time $t=0$. Note that the ODE for $S_2$ decouples from the system.
It is straightforward to show that the system satisfies the following conservation laws:
\begin{flalign}\notag
S_{tot}&=S_0+S_1+S_2+C_1+C_2,\\\notag
E_{tot}&=E+C_1+C_2.
\end{flalign}

We then non-dimensionalise by substituting $\tilde{S}_i:=S_i/S_{tot}$ for $i=0,1,2$; $\tilde{C}_j:=C_j/E_{tot}$ for $j=1,2$; and $\tilde{E}=1-\tilde{C}_1-\tilde{C}_2$ which is analogous to Equation (\ref{eq:e1}) and removes one equation from the ODE system. We also observe that $S_2$ decouples, leaving us with a system of four differential equations. Then
\begin{flalign}\notag
\frac{\mathrm d\tilde{S}_0}{\mathrm dt}&= -k_{f_1}E_{tot}\tilde{E}\cdot \tilde{S}_0+k_{r_1}\frac{E_{tot}}{ S_{tot}}\tilde{C}_1,\\\notag
\frac{\mathrm d\tilde{C}_1}{\mathrm dt}&= k_{f_1}S_{tot}\tilde{E}\cdot \tilde{S}_0 -(k_{r_1}+k_{c_1})\tilde{C}_1,\\\notag
\frac{\mathrm d\tilde{C}_2}{\mathrm dt}&= k_{c_1}\tilde{C}_1-(k_{r_2}+k_{c_2})\tilde{C}_2 + k_{f_2}S_{tot}\tilde{E}\cdot \tilde{S}_1,\\\notag
\frac{\mathrm d\tilde{S}_1}{\mathrm dt}&= -k_{f_2}E_{tot}\tilde{E}\cdot \tilde{S}_1+k_{r_2}\frac{E_{tot}}{ S_{tot}}\tilde{C}_2.
\end{flalign}

Subsequently, substituting $T:=(E_{tot}k_{f_1})\cdot t$ for $t$:

\begin{flalign}\notag
\frac{\mathrm d\tilde{S}_0}{\mathrm dT}&= -\tilde{E}\cdot \tilde{S}_0+k_{r_1}\frac{1}{ S_{tot}k_{f_1}}\tilde{C}_1,\\\notag
\frac{\mathrm d\tilde{C}_1}{\mathrm dT}&= \frac{S_{tot}}{ E_{tot}}\tilde{E}\cdot \tilde{S}_0 -\frac{1}{ E_{tot}k_{f_1}}(k_{r_1}+k_{c_1})\tilde{C}_1,\\\notag
\frac{\mathrm d\tilde{C}_2}{\mathrm dT}&= \frac{k_{c_1}}{ E_{tot}k_{f_1}}\tilde{C}_1-\frac{k_{r_2}+k_{c_2}}{ E_{tot}k_{f_1}}\tilde{C}_2 + \frac{k_{f_2}}{ k_{f_1}}\frac{S_{tot}}{E_{tot}}\tilde{E}\cdot \tilde{S}_1,\\\notag
\frac{\mathrm d\tilde{S}_1}{\mathrm dT}&= -\frac{k_{f_2}}{ k_{f_1}}\frac{S_{tot}}{ E_{tot}}\tilde{E}\cdot \tilde{S}_1+\frac{k_{r_2}}{S_{tot}k_{f_1}}\tilde{C}_2.
\end{flalign}

Then
\begin{flalign}\notag
\varepsilon\frac{\mathrm d\tilde{C}_1}{\mathrm dT}&= \tilde{E}\cdot \tilde{S}_0 -\frac{1}{ S_{tot}k_{f_1}}(k_{r_1}+k_{c_1})\tilde{C}_1,\\\notag
\varepsilon\frac{\mathrm d\tilde{C}_2}{\mathrm dT}&= \frac{k_{c_1}}{ S_{tot}k_{f_1}}\tilde{C}_1-\frac{k_{r_2}+k_{c_2}}{ S_{tot}k_{f_1}}\tilde{C}_2 + \frac{k_{f_2}}{ k_{f_1}}\tilde{E}\cdot \tilde{S}_1,
\end{flalign}
where $\varepsilon:=E_{tot}/S_{tot}$. As $E_{tot}\ll S_{tot}$, we have $\varepsilon\approx0$ and assume $\varepsilon\to0$. This gives

\begin{flalign}\notag
\tilde{C}_1&=\frac{k_{f_1}S_{tot}K_2\tilde{S}_0}{ ((k_{f_1}\tilde{S}_0+k_{f_2}\tilde{S}_1)k_{c_1}+\tilde{S}_0k_{f_1}K_2+\tilde{S}_1k_{r_1}k_{f_2})S_{tot}+K_1K_2},\\\notag
\tilde{C}_2&=\frac{((k_{f_1}\tilde{S}_0+k_{f_2}\tilde{S}_1)k_{c_1}+k_{r_1}k_{f_2}\tilde{S}_1)S_{tot}}{ (S_{tot}(k_{f_1}\tilde{S}_0+k_{f_2}\tilde{S}_1)+k_{c_2}+k_{r_2})k_{c_1}+(k_{r_1}k_{f_2}\tilde{S}_1+k_{f_1}K_2\tilde{S}_0)S_{tot}+k_{r_1}K_2},
\end{flalign}
where $K_i:=k_{r_i}+k_{c_i}$.

Finally, by substituting the above expressions for $\tilde{C}_1$ and $\tilde{C}_2$:
\begin{flalign}\notag
\frac{\mathrm d\tilde{S}_0}{\mathrm dT}&= \frac{-k_{c_1}K_2\tilde{S}_0}{  S_{tot}(k_{c_1}+k_{c_2)}(k_{f_1}\tilde{S}_0+k_{f_2}\tilde{S}_1)+K_1K_2},\\\notag
\frac{\mathrm d\tilde{S}_1}{\mathrm dT}&=\frac{-(k_{f_2}/k_{f_1})k_{c_2}K_1\tilde{S}_1+k_{r_2}k_{c_1}\tilde{S}_0}{ S_{tot}(k_{c_1}+k_{c_2)}(k_{f_1}\tilde{S}_0+k_{f_2}\tilde{S}_1)+K_1K_2} .
\end{flalign}

Then, by substituting $S_i$ for $\tilde{S}_i$, and dividing both numerator and denominator by $K_1K_2$
\begin{flalign}\notag
\frac{\mathrm dS_0}{\mathrm dt}&= \frac{-\kappa_1S_0}{ \gamma_1 S_0+\gamma_2S_1+1},\\\notag
\frac{\mathrm dS_1}{\mathrm dt}&=\frac{-\kappa_2S_1+(1-\pi)\kappa_1S_0}{ \gamma_1S_0+\gamma_2S_1+1},
\end{flalign}
where $\kappa_i$, $\pi$, and $\gamma_i:=(k_{c_1}+k_{c_2})k_{f_i}/(K_1K_2)$ are as defined previously in this paper.

This system reduces to the linear case if the denominator is approximately constant.
In particular, this is the case if $0<\gamma_1S_0+\gamma_2S_1\ll1$. This is the case, for our setting, which can be seen by writing 
$$\gamma_1=\frac{k_{c_1}+k_{c_2}}{ k_{c_2}+k_{r_1}}\frac{1}{ k_{M_1}}$$
(similarly for $\theta_2$), as we expect the first factor to be approximately 1 \cite{BarEven11} or smaller and $k_{M_1}$ (similarly $k_{M_2}$), the Michaelis-Menten constant, to be approximately $25\mu M$.

\subsection{QSSA Model Reduction via the Algebraic Approach}\label{sec:algebraic_mr}
We use the algebraic approach to derive two reduced models.

\subsubsection{Deriving the Rational ERK Model}
We substitute $E$ using Equation (\ref{eq:e1}) into the Full ERK model (Equations (\ref{eq:a})). Using the algebraic reduction, we follow the steps of Appendix \ref{sec:qssa_mr}, except instead of taking a limit $\varepsilon\to0$, simply set  $\varepsilon(\mathrm d\tilde{C}_1/\mathrm dT)=\varepsilon(\mathrm d\tilde{C}_2/\mathrm dT)= 0$, which leads to the Rational ERK model.
In other words, we treat the assumption that  $\varepsilon(\mathrm d\tilde{C}_1/\mathrm dT)$ and $\varepsilon(\mathrm d\tilde{C}_2/\mathrm dT)$ converge to zero as an algebraic fact and without regard of any analytic implications the assumption $\varepsilon\to0$ could have on the right-hand-side of these differential equations. Note that we do not need to non-dimensionalise as we did in Appendix \ref{sec:qssa_mr}.
We justify the accuracy of such an approximation Appendix \ref{sec:accuracy}. 

\subsubsection{Deriving the Linear ERK Model}
We now substitute $E$ using Equation (\ref{eq:e2}),
$$ E=E_{tot}-S_{tot}+S_0+S_1+S_2=: E',$$
into the Full ERK model (Equations (\ref{eq:a})).
 Setting  $\mathrm dC_1/\mathrm dt=0$ and $\mathrm dC_2/\mathrm dt=0$, 
we get
\begin{flalign*}
C_1&=\frac{k_{f_1}}{ k_{r_1}+k_{c_1}}E'\cdot S_0 , \\
C_2&=\frac{1}{ k_{r_2}+k_{c_2}}\left(k_{f_2}E'\cdot S_1+k_{c_1}C_1\right)=\frac{E'}{ k_{r_2}+k_{c_2}}\left(k_{f_2}S_1+\frac{k_{f_1}k_{c_1}}{ k_{r_1}+k_{c_1}}S_0\right),
\end{flalign*}
which can be substituted into $\mathrm dS_0/\mathrm dt$, $\mathrm dS_1/\mathrm dt$ and $\mathrm dS_2/\mathrm dt$:
\begin{flalign*}
\frac{\mathrm dS_0}{\mathrm dt}&=-\frac{k_{f_1}k_{c_1}}{ k_{r_1}+k_{c_1}}E'\cdot S_0,\\
\frac{\mathrm dS_1}{\mathrm dt}&=-\frac{k_{f_2}k_{c_2}}{ k_{r_2}+k_{c_2}}E'\cdot S_1+\frac{k_{r_2}}{ k_{r_2}+k_{c_2}}\frac{k_{f_1}k_{c_1}}{ k_{r_1}+k_{c_1}}E'\cdot S_0,\\
\frac{\mathrm dS_2}{\mathrm dt}&=\frac{k_{f_2}k_{c_2}}{ k_{r_2}+k_{c_2}}E'\cdot S_1+\frac{k_{c_2}}{ k_{r_2}+k_{c_2}}\frac{k_{f_1}k_{c_1}}{ k_{r_1}+k_{c_1}}E'\cdot S_0.
\end{flalign*}
We rewrite the above equations in terms of the parameter combinations $\kappa_1$, $\kappa_2$, and $\pi$ (see Equation (\ref{eq:pi1pi2kappa})) then gives  the Linear ERK model (Equations (\ref{eq:l1})--(\ref{eq:l3})).

Observe that the Linear ERK model obeys the conservation law $S_0+S_1+S_2=\tilde{S}_{tot}$. Here, $\tilde{S}_{tot}$ is a constant close to $S_{tot}$. Hence, $E'=E_{tot}-(S_{tot}-\tilde{S}_{tot})$ is a constant close to $E_{tot}$. The implication that $E'=E_{tot}$ at all times $t$ is physically infeasible. Whether $\tilde{S}_{tot}=S_{tot}$ depends on whether one updates the initial conditions for $S_j$ through an inner solution, which is typically assumed in singular-perturbation-theory approaches. The algebraic approach does not require computing an inner solution.

It is important to reiterate at this point that this reduction is not the result of a singular perturbation analysis. Indeed, if we had followed a singular perturbation analysis, similar to the one in Appendix \ref{sec:qssa_mr}, but with substitution (\ref{eq:e2}), we could not factor out $E_{tot}/S_{tot}$ in the non-dimensionalised differential equations for $C_1$ and $C_2$ as nicely as we were able to in the previous subsection. In this instance, we would have to leave factors of $\varepsilon^{-1}$ in the algebraic equations, which would be ambiguous when taking a limit $\varepsilon\to0$.

\subsection{Accuracy of Algebraic QSSA}\label{sec:accuracy}

We start by providing the full statement of Proposition 2 of \cite{reduction} (i.e., the full version of Proposition \ref{prop:accuracy} of this manuscript):

Let $K^*\subset\RR_+^n\times\RR_+^m$ satisfy the following:
\begin{itemize}
\item There exists $(\hat{y},\hat{\theta})$ in the interior of $K^*$ such that $f^{[2]}(\hat{y},\hat{\theta}) = 0$.
\item $D_2f^{[2]}(x, \theta)$ is invertible for all $(x,\theta)\in K^*$.
\item There exist $y_0\in\RR^n$ and $r > 0$ with the following property:\\
Whenever $(x, \theta)\in K^*$ for some $x\in\RR^n$ and some $\theta\in\RR^m_+$ then $\overline{B_r(y_0)}\times\{\theta\}\subseteq K^*$.
\item There exists an $R > 0$ such that $\|f(x,\theta)\|\leq R$ and $\|f_\mathrm{red}(x,\theta)\|\leq R$ for all $(x,\theta)\in K^*$.
\item There exists an $L > 0$ such that $\|Df(x,\theta)\|\leq L$ and $\|Df_\mathrm{red}(x,\theta)\|\leq L$ for all
$(x,\theta)\in K^*$.
\end{itemize}
These conditions imply that every $V_{\theta^*}$, with ${\theta^*}$ near $\hat{\theta}$, is a submanifold. Note
that every $(y, \theta^*)$ with $y$ in the interior of $\RR^n_+$ is contained in some $K^*$ that satisfies the last three of the above conditions.

\begin{proposition}\label{prop:full_prop}
Assume that the above conditions are satisfied for $K^*$.

(a) Let $\theta^*$ be given such that $V_{\theta^*}\times\{\theta^*\}$ has non-empty intersection with $\mathrm{int}\, K^*$, let $(y, \theta^*)$ be a point in this intersection and $V'_{\theta^*}\subset \RR^n$ be some open neighborhood of $y$ such that $(V_{\theta^*}\cap V'_{\theta^*})\times\{\theta^*\}\subset K^*$.
Moreover let $t^* > 0$
such that the solution of (\ref{eq:ode}) with initial value $y$ exists and remains in $V'_{\theta^*}$ for all $t\in[0, t^*]$. Then there exists a compact neighbourhood $ A_{\theta^*}\subseteq V'_{\theta^*}$ of $y$ with the following properties: (i) For every $z\in A_{\theta^*}$ the solution of
(\ref{eq:ode}) with initial value $z$ exists and remains in $V'_{\theta^*}$ for all $t\in[0, t^*]$. (ii)
For every $\varepsilon' > 0$ there is a $\delta_1 > 0$ such that the solution of (\ref{eq:expred}) with initial
value $z\in A_{\theta^*}\cap V_{\theta^*}$ exists and remains in $V_{\theta^*}$ for $t\in[0,t^*]$ whenever
$\|f-f_\mathrm{red}\| < \delta_1$ on $V'_{\theta^*}$. (iii) For every $\varepsilon' > 0$ there is a $\delta\in (0, \delta_1]$ such that the difference of the solutions of (\ref{eq:ode}) resp. of (\ref{eq:expred}) with initial
value $z\in A_{\theta^*}\cap V_{\theta^*}$ has norm less than $\varepsilon'$ for all $t\in[0, t^*]$ whenever
$\|f-f_\mathrm{red}\| < \delta$ on $V'_{\theta^*}$.

(b) Let $y\in V_{\theta^*}$ and let $\rho_0 > 0$ such that
$$\overline{B_{\rho_0/2L}(y)} \times \{{\theta^*}\}\subseteq K^*.$$
Let $\rho\leq\rho_0$ such that $\|f(y,{\theta^*})-f_\mathrm{red}(y,{\theta^*})\| \geq 2\rho$. Then for $t':= \rho/(2LR)$
the solutions of (\ref{eq:ode}) resp. of (\ref{eq:expred}) with initial value $y$ exist and remain
in $\overline{B_{\rho_0/2L}(y)}$ for $0\leq t\leq t'$, and their difference has norm at least $\rho^2/(2LR)$ at $t = t'$.
\end{proposition}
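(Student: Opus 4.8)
The plan is to prove parts (a) and (b) separately; both rest on Gr\"onwall-type comparisons of the solutions of (\ref{eq:ode}) and (\ref{eq:expred}), but used in opposite directions — an upper bound for (a) and a lower bound for (b). I would first dispose of (a)(i) by a continuous-dependence argument. By hypothesis the solution $u_y$ of (\ref{eq:ode}) starting at $y$ is a continuous curve whose image lies in the open set $V'_{\theta^*}$ on $[0,t^*]$; this image is compact and hence has positive distance, say $2\eta$, from the complement of $V'_{\theta^*}$. Using the derivative bound $L$ valid on $K^*$, the standard Gr\"onwall estimate for dependence on initial data shows that any solution starting within distance $\eta e^{-Lt^*}$ of $y$ stays within $\eta$ of $u_y$, and therefore inside $V'_{\theta^*}$, for all $t\in[0,t^*]$; existence on the whole interval follows because the trajectory remains in the compact set $K^*$, where $f$ is bounded by $R$. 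Taking $A_{\theta^*}$ to be a closed ball of this radius around $y$ (intersected with $V'_{\theta^*}$) yields (i).

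For (ii) I would invoke that $V_{\theta^*}$ is an invariant set of (\ref{eq:expred}) by Proposition \ref{prop:red}, so a solution $v$ of (\ref{eq:expred}) with initial value in $V_{\theta^*}\cap A_{\theta^*}$ remains in $V_{\theta^*}$; combining this with the continuity argument above and the hypothesis $\|f-f_\mathrm{red}\|<\delta_1$, one chooses $\delta_1$ small enough to force $v$ to stay in $V'_{\theta^*}$ on $[0,t^*]$. The crucial step (iii) then compares the solutions directly. Writing $u,v$ for the solutions of (\ref{eq:ode}) and (\ref{eq:expred}) from a common $z\in V_{\theta^*}\cap A_{\theta^*}$ and $w=u-v$, we have $w(0)=0$ and
\[\dot w = f(u)-f_\mathrm{red}(v) = \bigl(f(u)-f(v)\bigr)+\bigl(f(v)-f_\mathrm{red}(v)\bigr),\]
so that $\tfrac{d}{dt}\|w\|\le L\|w\|+\delta$ as long as both trajectories lie in $V'_{\theta^*}$ and $\|f-f_\mathrm{red}\|<\delta$ there. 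Gr\"onwall's inequality gives $\|w(t)\|\le (\delta/L)(e^{Lt^*}-1)$ on $[0,t^*]$, and choosing $\delta\in(0,\delta_1]$ small enough makes this less than $\varepsilon'$.

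For part (b) I would run the same comparison but extract a lower bound. Setting $g(s):=f(u(s))-f_\mathrm{red}(v(s))$ and $g_0:=g(0)=f(y,\theta^*)-f_\mathrm{red}(y,\theta^*)$, which satisfies $\|g_0\|\ge 2\rho$ by hypothesis, integration gives $w(t')=t'g_0+\int_0^{t'}(g(s)-g_0)\,ds$. Since both solutions move at speed at most $R$, we have $\|u(s)-y\|\le Rs$ and $\|v(s)-y\|\le Rs$, and these stay in $\overline{B_{\rho_0/2L}(y)}$ for $s\le t'=\rho/(2LR)$ by the choice of $t'$; the derivative bound $L$ then yields $\|g(s)-g_0\|\le 2LRs$ on this ball (which lies in $K^*$ by the containment hypothesis, so the estimate is legitimate). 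Hence
\[\|w(t')\|\ge t'\|g_0\|-\int_0^{t'}2LRs\,ds\ge 2\rho t'-LR(t')^2=\frac{3\rho^2}{4LR}\ge\frac{\rho^2}{2LR},\]
which is the asserted lower bound.

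The main obstacle is the interdependence in part (a): the Gr\"onwall estimate in (iii) is only valid while both $u$ and $v$ remain in $V'_{\theta^*}$ (where the Lipschitz constant $L$ controls $f$ and $f_\mathrm{red}$), yet guaranteeing that $v$ stays there — step (ii) — itself requires the perturbation $\|f-f_\mathrm{red}\|$ to be small. Handling this cleanly demands ordering the quantifiers carefully: first fix $A_{\theta^*}$ and $t^*$ from the full system alone, then choose $\delta_1$ to trap the reduced trajectory in $V'_{\theta^*}$, and finally shrink $\delta\le\delta_1$ to achieve the $\varepsilon'$ bound. The invariance of $V_{\theta^*}$ from Proposition \ref{prop:red} is what keeps the reduced solution on the variety where all the estimates are controlled, and the convexity built into the conditions on $K^*$ is what lets the mean-value bound with constant $L$ be applied along the segments joining $u(s)$ and $v(s)$.
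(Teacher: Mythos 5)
This proposition is not proved in the paper at all: it is quoted as the full statement of Proposition~2 of Goeke, Walcher and Zerz \cite{reduction}, so there is no in-paper argument to compare yours against --- the authors simply import the result. Judged on its own merits, your proof is correct and is essentially the standard perturbation argument that the cited source uses. Part (a) is continuous dependence plus Gr\"onwall: (i) follows from the positive distance $2\eta$ of the compact trajectory image to the complement of $V'_{\theta^*}$ together with the exponential estimate $\|u_z(t)-u_y(t)\|\leq\|z-y\|e^{Lt}$; (ii) uses the invariance of $V_{\theta^*}$ under (\ref{eq:expred}) from Proposition~\ref{prop:red} to keep the reduced solution on the variety, with $\delta_1$ chosen so that the Gr\"onwall tube of radius $(\delta_1/L)(e^{Lt^*}-1)$ around the full-system trajectory stays inside $V'_{\theta^*}$; (iii) is the inhomogeneous Gr\"onwall bound $\|w(t)\|\leq(\delta/L)(e^{Lt^*}-1)$. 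Part (b) is the matching lower bound and your arithmetic checks out: with $t'=\rho/(2LR)$ one gets $t'\|g_0\|-LR(t')^2\geq \rho^2/(LR)-\rho^2/(4LR)=3\rho^2/(4LR)\geq\rho^2/(2LR)$. Two points you rightly flag in passing but would need to make explicit in a full write-up: the mean-value estimates $\|f(u)-f(v)\|\leq L\|u-v\|$ require the connecting segments to lie where $\|Df\|\leq L$ holds, which is exactly what the ball condition on $K^*$ supplies; and the quantifier order in (a) --- fix $A_{\theta^*}$ and $t^*$ from the full system alone, then choose $\delta_1$ to confine the reduced trajectory, then shrink $\delta\leq\delta_1$ --- is the only way to break the apparent circularity between (ii) and (iii). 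Neither is a gap, just bookkeeping that the hypotheses on $K^*$ were designed to make possible.
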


Returning to our two model reductions of interest, we use the notation Sections \ref{sec:model} and \ref{sec:algred} and label the two possible substitutions for the variable $E$ as (\ref{eq:e1}) and (\ref{eq:e2}), as before.
Moreover, we write $\dot{S_0}$ and $\dot{S_1}$ as shorthand for the algebraic expressions for the time derivatives of $S_0$ and $S_1$ respectively. Recall that $K_i:=k_{c_i}+k_{r_i}$. Throughout, we will view our set of polynomials $f$, which govern the ODE system, as a smooth function from $\RR^5$ to $\RR$. Any norm in this subsection will refer to the $\|\,\cdot\,\|_\infty$-norm on the respective space.

Define a domain $K^*$ such that $k_{f_i}\ll K_i$ for $i=1,2$ and such that it satisfies all hypotheses in the list at the start of this section (i.e., from  Proposition 2 of \cite{reduction}).

Note that for both substitutions, $E$ is bounded above by $E_{tot}\approx 0.65\mu E$. Hence, the norms on the slow variables are going to be approximately the same for both substitutions.

For the case (\ref{eq:e1}), we get that $\|f_\mathrm{red}\|$ (showing only the components that have changed compared to $f$) is given by
$$
\begin{Vmatrix}
k_{f_1}E(S_0-\det^{-1}\dot{S_0}(k_{f_2}S_1+K_2)+\dot{S_1}k_{f_2}S_0)-K_1C_1\\
k_{f_2}E(S_1-\det^{-1}(k_{f_1}(\dot{S_1}S_0+\dot{S_0}S_1))+\det^{-1}E(\dot{S_0}k_{f_1}k_{c_2}-\dot{S_1}k_{f_2})+k_{c_1}C_1-K_2C_2
\end{Vmatrix}
$$
on variables $C_1$ and $C_2$. Here, the quantity $\det$ is given by
$$\det=k_{f_1}(k_{c_1}+K_2)S_0+K_1(k_{f_2}S_1+K_2)=\mathcal{O}(K_1K_2).$$

Similarly, for (\ref{eq:e2}), we get
$$
\begin{Vmatrix}
k_{f_1}\left(ES_0-\frac{E+S_0}{ K_1}\dot{S_0}\right) - K_1C_1+\frac{k_{c_1}k_{f_2}}{ K_1K_2}(E+S_1)\dot{S_1}\\
k_{f_2}\left(ES_1-\frac{E+S_1}{ K_2}\dot{S_1}\right)+k_{c_1}C_1-K_2C_2
\end{Vmatrix}.
$$
In both cases, applying the triangle inequality 
shows that the upper entry is of order $|K_1C_1|$ while the lower entry is of order $|k_{c_1}C_1-K_2C_2|$. It follows that the quantity $R$, as in the assumptions of Proposition \ref{prop:full_prop}, is of a similar order for both substitutions.

For $\|Df_\mathrm{red}\|$, we get
$$
\begin{Vmatrix}
k_{f_1}E\left(1+\frac{k_{f_1}(\det +\dot{S_0}(k_{c_1}+K_2)}{\det^2}(k_{f_2}S_1+K_2)+k_{f_2}\dot{S_1}\right)\\
-k_{f_2}^2S_0 \\
-k_{f_1}\left(S_0-\frac{\dot{S_0}(k_{f_2}S_1+K_2)}{\det}+\dot{S_1}k_{f_2}S_0-E\frac{k_{r_1}(k_{f_2}S_1+K_2)}{\det}\right)-K_1 \\
0
\end{Vmatrix}^T
$$
for the row of the Jacobian corresponding to the entry of $f_\mathrm{red}$ giving the rate of change of $C_1$ and
$$
\begin{Vmatrix}
Ek_{f_1}\frac{k_{f_2}((k_{f_1}S_1-\dot{S_1})+(\dot{S_1}S_0+\dot{S_0}S_1)k_{f_1}(k_{c_1}+K_2))+k_{f_1}k_{c_2}\det+(\hat{S_1}k_{f_2}-k_{f_1}k_{c_2}\hat{S_0})(k_{c_1}+K_2)}{\det^2}\\
Ek_{f_2}\frac{\det^2+k_{f_1}k_{f_2}S_0\det+k_{f_1}k_{f_2}(\dot{S_1}S_0+\dot{S_0}S_1)K_1+k_{f_2}\det-(\dot{S_0}k_{f_1}k_{c_2}-\dot{S_1}k_{f_2})K_1}{\det^2}\\
-\frac{k_{f_2}k_{r_1}ES_1}{\det}+\frac{k_{f_1}k_{c_2}k_{r_1}E}{\det}+k_{c_1}\\
-\frac{k_{f_2}k_{r_2}ES_0}{\det}+E\frac{k_{f_2}k_{r_2}}{\det}+K_2
\end{Vmatrix}^T
$$
for the row of the Jacobian corresponding to the entry of $f_\mathrm{red}$ giving the rate of change of $C_2$ for the substitution (\ref{eq:e1}). Here, the four columns (or rows in the given transposed presentation) correspond to partial derivatives with respect to $S_0$, $S_1$, $C_1$ and $C_2$ (in that order). Note that the rows of $Df_\mathrm{red}$ corresponding to the rates of change of $S_0$, $S_1$, and $S_2$ are identical for both substitutions. Thus, we omit their calculations.

Similarly, for (\ref{eq:e2})
$$
\begin{Vmatrix}
k_{f_1}\left(S_0+E-\frac{2\dot{S_0}-k_{f_1}(E+S_0)}{ K_1}\right)+\frac{k_{c_1}k_{f_2}}{ K_1K_2}\dot{S_1} &
k_{f_1}\left(S_1-\frac{\dot{S_1}}{ K_2}\right) \\
k_{f_1}\left(S_0-\frac{\dot{S_0}}{ K_1}\right)-2k_{f_2}\frac{k_{c_1}k_{f_2}}{ K_1K_2} &
k_{f_1}\left(E+S_1-\frac{2\dot{S_1}-k_{f_1}(E+S_1)}{ K_2}\right) \\
-k_{f_1}\frac{\dot{S_0}+k_{r_1}(E+S_0)}{ K_1}-K_1 &
-k_{f_2}\frac{\dot{S_1}+k_{r_2}(E+S_1)}{ K_2}+k_{c_1} \\
0 &
K_2
\end{Vmatrix}^T.
$$

We note that the dominant terms are exactly $K_1$, $k_{c_1}$, and $K_2$ for both substitutions. We conclude that the quantity $L$, as in the assumptions of Proposition \ref{prop:full_prop}, is of similar a order for both substitutions.

For the quantity $\|f-f_\mathrm{red}\|$, for substitution (\ref{eq:e1}), we get
$$
\begin{Vmatrix}
k_{f_1}E(\dot{S_1}k_{f_2}S_0-\det^{-1}\dot{S_0}(k_{f_2}S_1+K_2))\\
\det^{-1}k_{f_1}k_{c_2}E\dot{S_0}-\det^{-1}k_{f_1}k_{f_2}E(\dot{S_0}S_1+\dot{S_1}S_0)-\det^{-1}k_{f_2}K_1E\dot{S_1}
\end{Vmatrix}.
$$

Similarly, for (\ref{eq:e2}) we get
$$
\begin{Vmatrix}
\frac{k_{c_1}k_{f_1}}{ K_1K_2}(E+S_1)\dot{S_1}-k_{f_1}\frac{E+S_0}{ K_1}\dot{S_0}\\
-k_{f_2}\frac{E+S_1}{ K_2}\dot{S_1}
\end{Vmatrix}.
$$
Both of the above norms are $\mathcal{O}(1)$ (as $\dot{S_0}$ and $\dot{S_1}$ contain terms $K_1$ and $K_2$ respectively).

Hence, all quantities use to bound accuracy in Proposition \ref{prop:full_prop} (above, by part (a), or below, by part (b)) are of a similar order. We conclude that the accuracy of both reductions is approximately the same when assuming $0\leq k_{f_i}\ll K_i$. This follows from the measurements $k_{M_i}\approx 25\,\mu M$ \cite{BarEven11}. Moreover, we have seen previously (in the main text) that $K_i>0$ for $i=1,2$ is a sufficient condition for a parameter value $\theta$ to be a QSS-parameter value in the sense of \cite{reduction}.

\subsection{Deriving Equality of ideals}\label{an:InitialConditions}

In this section of the appendix, we derive that $I(\mathcal{V}_0)=I_\Sigma$ for the ERK models with given initial condition.
In Saccomani et al. \cite{Saccomani2003}, $\mathcal{V}_0$ is defined as all trajectories of a given ODE system with initial condition $x_0$. That is, if $x(t)$ satisfies $S(0)=x_0$ and $\dot{S}=f(S, \theta)$, then 
$\mathcal{V}_0=\{x(t)\,\vert\,t\geq 0\}\subset\CC(\theta)^n$. Then $I(\mathcal{V}_0)$ is defined to be the ideal of all polynomials in $\CC(\theta)[S]$ vanishing on $\mathcal{V}_0$.

\begin{proposition}
Assume we are given a complex-valued ODE model, including an initial condition. Let $\theta$ denote the set of its parameters and $S$ the set of its variables. Consider the affine space $\mathcal{A}$ given by its variables and their derivative terms (of all orders), viewed as an affine space over the fraction field $\mathbb{C}(\theta)$. Define $\mathcal{V}_0$ to be the manifold in $\mathcal{A}$ given by the ODE trajectories under the given initial condition.

Then for the Full, Rational and Linear ODE model considered in this manuscript, we have that $I(\mathcal{V}_0)$, the differential ideal of all polynomials in the differential ring $\mathbb{C}(\theta)[S]$ vanishing on $\mathcal{V}_0$ (c.f. \cite{Saccomani2003}), is exactly $I_\Sigma$.
\end{proposition}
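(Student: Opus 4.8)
The plan is to prove the two inclusions $I_\Sigma \subseteq I(\mathcal{V}_0)$ and $I(\mathcal{V}_0) \subseteq I_\Sigma$ separately; the first is essentially immediate, while the second reduces to a genericity statement about the prescribed initial data. For $I_\Sigma \subseteq I(\mathcal{V}_0)$, I would observe that every trajectory in $\mathcal{V}_0$ is by construction a solution of the ODE system, and so annihilates each generator $(Q\dot{x}_i - F_i)^{(j)}$, $(Q\dot{y}_k - G_k)^{(j)}$ of $I_\Sigma$ (and its saturation by $Q^\infty$); hence every element of $I_\Sigma$ vanishes on $\mathcal{V}_0$.

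For the reverse inclusion, I would use that $I_\Sigma$ is prime, so $\mathcal{V}(I_\Sigma)$ is irreducible in the Kolchin topology and, by the differential Nullstellensatz, $I(\mathcal{V}(I_\Sigma)) = I_\Sigma$. Since $\mathcal{V}_0 \subseteq \mathcal{V}(I_\Sigma)$, we have $I(\mathcal{V}_0) \supseteq I_\Sigma$, with equality precisely when $\mathcal{V}_0$ is Kolchin-dense in $\mathcal{V}(I_\Sigma)$; that is, when the prescribed trajectory is a generic zero of $I_\Sigma$. Following Saccomani et al. \cite{Saccomani2003}, I would certify this by computing a characteristic set $\mathcal{C}$ of $I_\Sigma$ for a ranking eliminating the state variables in favour of the observable $y$, and then checking that the separants and initials of $\mathcal{C}$ do not vanish identically along the prescribed trajectory; wherever they are nonzero, the trajectory meets the dense open locus on which $I(\mathcal{V}_0)$ cannot be strictly larger than $I_\Sigma$.

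I would then dispatch the two cases by this criterion. For the Linear ERK model the analytic solution of Subsection \ref{subsec:PILinear} is available, so the criterion can be checked directly: the functions $e^{-\kappa_1 t}$, $e^{-\kappa_2 t}$ (and $t e^{-\kappa_1 t}$ in the degenerate case $\kappa_1 = \kappa_2$) generate over $\CC(\theta)$ a differential field of exactly the transcendence degree forced by $I_\Sigma$, so no further differential-algebraic relation can hold on $\mathcal{V}_0$ and equality follows. For the Full and Rational ERK models, which admit no closed-form solution, I would instead evaluate the separants and initials of the characteristic set along the trajectory and confirm that they are not identically zero.

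The hard part will be the apparent degeneracy of the prescribed initial data, namely $S_1(0) = S_2(0) = 0$ together with $C_1(0) = C_2(0) = 0$, $E(0) = 0.65$, $S_0(0) = 5$ for the Full model: the many vanishing coordinates make it look as though the trajectory might lie on a proper subvariety. The resolution I would give is that $\mathcal{V}_0$ is the entire trajectory for $t \geq 0$, not merely its value at $t = 0$; for $t > 0$ the species $S_1, S_2$ (and $C_1, C_2$) become nonzero as the reaction proceeds, so the trajectory is contained in none of the subvarieties cut out by the vanishing of a separant or initial. Hence $\mathcal{V}_0$ escapes every proper Kolchin-closed subset of $\mathcal{V}(I_\Sigma)$, is dense in it, and $I(\mathcal{V}_0) = I_\Sigma$ as claimed.
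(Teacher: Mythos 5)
Your high-level architecture is sound and matches the paper's: one inclusion ($I_\Sigma\subseteq I(\mathcal{V}_0)$) is immediate, and the content of the proposition is that the specific trajectory is a generic zero of the prime ideal $I_\Sigma$, i.e.\ that no additional differential-algebraic relation holds along it. You also correctly identify the crux, namely the apparently degenerate initial data. The problem is that the decisive verification is never performed. For the Full and Rational models you write that you ``would evaluate the separants and initials of the characteristic set along the trajectory and confirm that they are not identically zero'' --- but this confirmation is the entire theorem; without computing the characteristic set and actually checking, nothing has been proved. Even for the Linear model, the transcendence-degree claim (that $e^{-\kappa_1 t}$ and $e^{-\kappa_2 t}$ are algebraically independent over $\CC(\theta)$ for generic $\kappa_1\neq\kappa_2$) is plausible and would suffice, but it is asserted rather than argued, and the Rational model, whose solutions are not exponentials, is silently lumped with the Full model.

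The ``resolution'' you offer for the hard part is moreover a non sequitur. That $S_1, S_2, C_1, C_2$ become nonzero for $t>0$ only shows the trajectory avoids the coordinate hyperplanes; the loci cut out by separants, initials, or by an arbitrary extra element of $I(\mathcal{V}_0)$ are not coordinate hyperplanes. A trajectory all of whose coordinates are nonconstant can perfectly well lie on a proper Kolchin-closed subset --- for instance on the level set of a conserved quantity, or on any accidental invariant algebraic relation among the states. This is precisely what the paper's proof must exclude: it explicitly invokes the absence of conserved quantities after eliminating $E$ and $S_2$, and then runs a concrete contradiction argument. For the Linear and Rational models it assumes an irreducible relation $q(X_0,X_1)=0$ of minimal degree in diagonalising coordinates, differentiates along the flow to obtain $q'=\sum a_{ij}(i\kappa_1+j\kappa_2)X_0^iX_1^j=0$, and observes that $q'$ cannot be a scalar multiple of $q$ at generic parameters; for the Full model it performs a leading-term degree analysis on the differentiated relation. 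Some argument of this explicit kind (or an actual characteristic-set computation) is required to close your proof; as written, the step that makes the proposition true is missing.
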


\begin{proof}
By definition, $I(\mathcal{V}_0)\supset I_\Sigma$. Hence, we need only show the inclusion $I_\Sigma \subset I(\mathcal{V}_0)$. by way of contradiction, suppose there are elements of $I(\mathcal{V}_0)$ which are not elements of $I_\Sigma$. Without loss of generality, we may assume that such polynomials contain no derivatives of variables. Indeed, starting with any polynomial in $I(\mathcal{V}_0)$, we can use the differential equations in $I_\Sigma$ to cancel out all terms containing derivatives of variables and so obtain an element of $I(\mathcal{V}_0)$ without such terms.

We need to take some extra care for the Rational ERK model: strictly speaking, the polynomial $A(\gamma_1S_0+\gamma_2S_1+1)-1$ is in $I(\mathcal{V}_0)$ but not in $I_\Sigma$. However, we know this relation \textit{a priori} and it is merely a result of us converting a rational ODE model into a polynomial one. In particular, it is independent of an initial condition. Hence, in the context of this section, we will assume that the Rational ERK model contains an output variable $y_3:=A(\gamma_1S_0+\gamma_2S_1+1)$ (in addition to $y_1=S_0$ and $y_2=S_1$). Whilst $y_3$ adds polynomials to $I_\Sigma$, we will omit it from any polynomials we study henceforth as it is equivalent to the constant 1 (given the definition of $A$).

For the Rational ERK model, assume there exists a non-zero $p\in\mathbb{C}(\theta)[S_0, S_1, A]$ such that $p(S_0, S_1, A)=0$ for all $t\geq0$ given our initial condition. By using $A(\gamma_1S_0+\gamma_2S_1+1)=1$, we can turn $p$ into a rational function in variables $S_0, S_1$ which vanishes on the ODE trajectories. Hence, there must exist a polynomial $q\in\mathbb{C}(\theta)[S_0, S_1]$ such that $q(S_0, S_1)=0$ for all $t\geq0$ given our initial condition. Any properties we will use in the following argument are satisfied by both the Rational and Linear ERK models unless stated otherwise. As we need to disprove the existence of such polynomial $q$ when studying the Linear ERK model the following conclusion will hold for both the Linear and Rational ERK models.

Without loss of generality, assume that $q$ is a polynomial of the smallest non-zero degree satisfying $q(S_0, S_1)=0$ on the ODE trajectories and that $q$ is irreducible. If it were reducible, simply replace $q$ with an irreducible factor of $q$ such that $q(S_0,S_1)=0$ for all $0\leq t<t_1$ for some $t_1>0$. The following argument will still hold.

The Linear and Rational ERK models have thus far been presented as an ODE in $\RR^2$ in the basis $S_0$, $S_1$. At generic parameter values (i.e.,if $\kappa_1\neq\kappa_2$), we may change our basis to $X_0=S_1$, $X_1=\kappa_1(1-\pi)/(\kappa_1-\kappa_2)S_0+S_1$. We then find $\dot{X_0}=-\kappa_1X_0$ and $\dot{X_1}=-\kappa_2X_1$ for the Linear ERK model and $\dot{X_0}=-\kappa_1AX_0$ and $\dot{X_1}=-\kappa_2AX_1$ for the Rational ERK model.

As this change of basis is an invertible affine transformation, we may assume that $q$ is a polynomial in variables $X_0$ and $X_1$. We note that this diagonalisation of the ODE models would not have been possible if we had not removed $S_2$ from our systems, as $S_2$ decouples.

We will write
$$q(X_0, X_1)=\sum_{i,j\geq0}a_{ij}X_0^iX_1^j=0.$$
Taking the derivative of this equation with respect to $t$ gives
$$\sum_{i,j\geq0}a_{ij}(i\dot{X_0}X_0^{i-1}X_1^j+j\dot{X_1}X_0^iX_1^{j-1})=0,$$
which, after substitution for the derivative variables, yields 
$$q'(X_0,X_1):=\sum_{i,j\geq0}a_{ij}(i\kappa_1+j\kappa_2)X_0^iX_1^j=0$$
(for the Rational ERK model, we need to divide by $A$ to obtain such $q'$).

Note that, given our initial condition, both $X_0$ and $X_1$ vary, hence the intersection of $V(q)$ and $V(q')$ must contain a smooth point, as both contain the ODE trajectory. This would imply that $q'$ is a constant multiple of $q$, which is not true at generic parameter values. Hence, no such $q$ can exist.

For the Full ERK model, assume that there is a non-zero polynomial $p\in\mathbb{C}(\theta)[S_0, S_1, C_1, C_2]$ such that $p(S_0, S_1, C_1, C_2)=0$ for all $t\geq0$ given our initial condition. Let $p$ have degree $n$. Again, we may assume without loss of generality that $n$ is minimal. We know that there are no conserved quantities in the Full ERK model (after removing $E$ and $S_2$) and thus $n\geq2$. Let
$$p(S_0, S_1,C_1,C_2)=\sum_{i,j,k,l\geq0}a_{ijkl}S_0^iS_1^jC_1^kC_2^l=0.$$
By taking the derivative with respect to $t$, we get
$$\sum_{i,j,k,l\geq0}a_{ijkl}(i\dot{S_0}S_0^{i-1}S_1^jC_1^kC_2^l+j\dot{S_1}S_0^iS_1^{j-1}C_1^kC_2^l+k\dot{C_1}S_0^iS_1^{j}C_1^{k-1}C_2^l+l\dot{C_2}S_0^iS_1^{j}C_1^kC_2^{l-1})=0.$$

For $S_0$, this rearranges to
$$\dot{S_0}=-\frac{\sum_{i,j,k,l\geq0}a_{ijkl}(j\dot{S_1}S_0^iS_1^{j-1}C_1^kC_2^l+k\dot{C_1}S_0^iS_1^{j}C_1^{k-1}C_2^{l}+l\dot{C_2}S_0^iS_1^{j}C_1^kC_2^{l-1})}{\sum_{i,j,k,l\geq0}a_{ijkl}iS_0^{i-1}S_1^{j}C_1^{k}C_2^l},$$
(assuming there are $S_0$ terms in $p$) and we can derive similar expressions for the other variables. Note that the denominator is of a smaller degree than $p$, hence it will be non-zero at almost all points of our ODE trajectory. In addition, for our given initial condition, all of our variables vary and hence the numerator must be a non-zero polynomial and must not vanish on the ODE trajectory. Denote $i^*$ the highest power of $S_0$ in $p$. Then $a_{i^*jkl}\neq0$ implies $k=0$, as otherwise the highest power of $S_0$ in the numerator is $i^*+1$ and $i^*-1$ in the denominator, implying that we can find an $S_0^2$-term in $\dot{S_0}$, a contradiction. We can derive similar statements for $S_1$, $C_1$, and $C_2$. Then, in the above equation, the highest power of $C_1$ in the numerator is $k^*+1$, while the highest power in the denominator is $k^*-1$ (at generic parameter values), implying that $\dot{S_0}$ is quadratic in $C_1$, a contradiction. If $p$ does not contain terms in $S_0$, we can apply a similar argument to $S_1$, $C_1$ or $C_2$.

In conclusion, no such $p$ can exist.
\end{proof}

\subsection{The Linear ERK model is globally structurally identifiable}\label{ap:LinearSIdirect}
\begin{proposition}
For any choice of three time points $0<t_1<t_2<t_3$ the model prediction map $\phi_{t_1,t_2,t_3}$ is injective and so the Linear ERK model is globally structurally identifiable.
\end{proposition}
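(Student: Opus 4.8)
The plan is to use the explicit analytic solution to turn injectivity of $\phi_{t_1,t_2,t_3}$ into the problem of recovering the three scalars $(\kappa_1,\kappa_2,\pi)$ from the sampled values. Since the prediction map merely repeats the triple $(S_0,S_1,S_2)$ across the $r$ replicates, and since $S_2(t)=5-S_0(t)-S_1(t)$ is determined by $S_0$ and $S_1$, it suffices to prove that the reduced map $(\kappa_1,\kappa_2,\pi)\mapsto (S_0(t_i),S_1(t_i))_{i=1,2,3}$ is injective. I would therefore take two parameter vectors $\theta=(\kappa_1,\kappa_2,\pi)$ and $\theta'=(\kappa_1',\kappa_2',\pi')$ with equal images and deduce $\theta=\theta'$. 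The first parameter is immediate: from $S_0(t)=5e^{-\kappa_1 t}$, equality at any one time point $t_i>0$ gives $e^{-\kappa_1 t_i}=e^{-\kappa_1' t_i}$ and hence $\kappa_1=\kappa_1'$.

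The substance lies in recovering $(\kappa_2,\pi)$ from the common value of $\kappa_1$ and the three numbers $S_1(t_i)$. I would study the difference $D(t):=S_1(t;\theta)-S_1(t;\theta')$, which vanishes at $t_1,t_2,t_3$ and also at $t=0$ because both solutions satisfy $S_1(0)=0$; thus $D$ has at least four distinct real zeros. The point is that $D$ is a generalized exponential polynomial: $S_1(\,\cdot\,;\theta)$ lies in $\mathrm{span}\{e^{-\kappa_1 t},e^{-\kappa_2 t}\}$ when $\kappa_2\neq\kappa_1$ and in $\mathrm{span}\{e^{-\kappa_1 t},t\,e^{-\kappa_1 t}\}$ in the confluent case $\kappa_2=\kappa_1$, and similarly for $\theta'$. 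Consequently $D$ lies in a space involving only the exponents $-\kappa_1,-\kappa_2,-\kappa_2'$ (with a multiplicity-two contribution at $-\kappa_1$ in a confluent case), which has dimension at most three. By the standard bound on the number of real zeros of an exponential sum, a nonzero element of such a space has at most two zeros; since $D$ has four, I conclude $D\equiv 0$. Linear independence of the surviving distinct exponentials then forces the coefficients to vanish, and reading off the coefficient attached to $\kappa_2$ (respectively the coefficient of $t\,e^{-\kappa_1 t}$) yields $\kappa_2=\kappa_2'$ and then $\pi=\pi'$.

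The main obstacle is organising the degeneracies correctly. I would split into the subcases $\kappa_2=\kappa_1$, $\kappa_2'=\kappa_1$, $\kappa_2=\kappa_2'$ and $\kappa_2\neq\kappa_2'$, and check in each that the relevant coefficient of $D$ is nonzero, so that $D\not\equiv 0$ would contradict the four-zero count; the confluent form of $S_1$ must be tracked carefully since it alters both the spanning functions and the coefficient to be read off. The more delicate degeneracy is the boundary $\pi=1$, where $S_1\equiv 0$ identically and $\kappa_2$ leaves the dynamics entirely, so that $\kappa_2$ is genuinely unrecoverable; I would therefore run the argument on the biologically meaningful region where $\pi\in(0,1)$ and $\kappa_1,\kappa_2>0$ (equivalently $k_{f_i},k_{c_i},k_{r_i}>0$), on which the coefficient $5\kappa_1(1-\pi)/(\kappa_1-\kappa_2)$, or $5\kappa_1(1-\pi)$ in the confluent case, is nonzero---precisely the nonvanishing that powers every contradiction above. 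As a more computational alternative that sidesteps the exponential-sum lemma, one can pass to the ratios $R_i:=S_1(t_i)/S_0(t_i)=c\,(e^{(\kappa_1-\kappa_2)t_i}-1)$, eliminate $c$ via $(R_2-R_1)/(R_3-R_1)$, and verify that the resulting expression is a strictly monotone function of $\kappa_1-\kappa_2$; this reduces all of the second step to a single one-variable monotonicity check, with $\kappa_1=\kappa_2$ recovered as the continuous limit.
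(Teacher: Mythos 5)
Your proof is correct and follows essentially the same route as the paper: after recovering $\kappa_1$ from $S_0$, both arguments bound the number of positive zeros of $S_1(\,\cdot\,;\theta)-S_1(\,\cdot\,;\theta')$ by two via Rolle's theorem applied to exponential sums (the paper carries out the divide-by-$e^{-\kappa_1 t}$-and-differentiate step by hand in each confluent/non-confluent case, while you invoke the general zero bound for a three-term exponential sum and then match coefficients by linear independence). Your explicit restriction to $\pi<1$ and $\kappa_1>0$ is a worthwhile refinement rather than a deviation: on the boundary $\pi=1$ (where $S_1\equiv 0$) or $\kappa_1=0$ the parameter $\kappa_2$ genuinely drops out of the dynamics and injectivity fails, a degeneracy the paper's case analysis passes over silently.
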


\begin{proof}
For a parameter $(\kappa_1,\kappa_2,\pi)$, we denote the analytic solution in the Linear ERK model for species $i$ at time $t$ by ${S}_i(t)$ as in Subsection \ref{subsec:PILinear}. Then the model prediction map $\phi_{t_1,t_2,t_3}$ is given by
\[(\kappa_1,\kappa_2,\pi)\mapsto ({S}_0(t_1),{S}_1(t_1),{S}_2(t_1),{S}_0(t_2),{S}_1(t_2),{S}_2(t_2),{S}_0(t_3),{S}_1(t_3),{S}_2(t_3))
\]

Suppose that $(\kappa_1,\kappa_2,\pi)$ and $(\kappa_1',\kappa_2',\pi')$ are two parameters such that 
$$\phi_{t_1,t_2,t_3}(\kappa_1,\kappa_2,\pi)=\phi_{t_1,t_2,t_3}(\kappa_1',\kappa_2',\pi').$$ 
Looking at the first component we find that 
$5e^{-\kappa_1t_1}=5e^{-\kappa_1't_1}$, and since $t_1\neq0$ it follows that $\kappa_1=\kappa_1'$.

There are three cases to consider. The first case is if $\kappa_2=\kappa_1=\kappa_2'$. In this case, looking at the second component, we get that $5\kappa_1(1-\pi)t_1e^{-\kappa_1t_1}=5\kappa_1(1-\pi')t_1e^{-\kappa_1t_1}$, and so $\pi=\pi'$, since $t_1\neq0$. 

Without loss of generality, the second case is
if $\kappa_2=\kappa_1\neq\kappa_2'$. Suppose for a contradiction that $(\kappa_1,\kappa_1,\pi)\neq(\kappa_1,\kappa_2',\pi')$. Let ${S}_1(t)$ be the analytic solution of the Linear ERK model for species 1 at time $t$ with parameter $(\kappa_1, \kappa_1,\pi)$ and ${S}'_1(t)$ be the analytic solution at time $t$ with parameter $(\kappa_1,\kappa'_2,\pi')$.
For the moment, consider $t>0$ to be a variable. Then ${S}_1(t)={S'}_1(t)$ is equivalent to
\begin{equation}\label{eq:LinSIcase2}
(1-\pi)te^{-\kappa_1t}=\frac{1-\pi'}{\kappa_1-\kappa_2'}\left(e^{-\kappa'_2t}-e^{-\kappa_1t}\right).
\end{equation}
Dividing both sides by $e^{-\kappa_1t}$ and rearranging the above gives
$$(1-\pi)t-\frac{1-\pi'}{\kappa_1-\kappa_2'}e^{-(\kappa_2'-\kappa_1)t}=-\frac{1-\pi'}{\kappa_1-\kappa_2'} $$
Taking the derivative with respect to $t$ yields
$$(1-\pi)=(1-\pi') e^{-(\kappa_2'-\kappa_1)t}.$$
Rearranging and taking a $\log$ then gives
$$t=\frac{1}{\kappa_1-\kappa_2'}\log \left(\frac{1-\pi}{1-\pi'}\right) .$$
As the derivative of Equation (\ref{eq:LinSIcase2}) has exactly one solution in $t$, Equation (\ref{eq:LinSIcase2}) has at most two solutions in $t$ by Rolle's theorem. Therefore, the second, fifth and eighth components of $\phi_{t_1,t_2,t_3}$ cannot take the same value. This is a contradiction, and so we should have $(\kappa_1,\kappa_1,\pi)=(\kappa_1,\kappa_2',\pi')$, meaning that this case is simply not possible.

We now consider the third and final case, $\kappa_2\neq\kappa_1\neq\kappa_2'$. Suppose for a contradiction that $(\kappa_1,\kappa_2,\pi)\neq(\kappa_1,\kappa_2',\pi')$. Let ${S}_1(t)$ be the analytic solution of the Linear ERK model for species 1 at time $t$ with parameter $(\kappa_1, \kappa_2,\pi)$ and ${S'}_1(t)$ be the analytic solution at time $t$ with parameter $(\kappa_1,\kappa'_2,\pi')$. 
For the moment, consider $t>0$ to be a variable. Then ${S}_1(t)={S}'_1(t)$ is equivalent to
\begin{equation}
\frac{1-\pi}{\kappa_1-\kappa_2}\left(e^{-\kappa_2t}-e^{-\kappa_1t}\right)=\frac{1-\pi'}{\kappa_1-\kappa_2'}\left(e^{-\kappa'_2t}-e^{-\kappa_1t}\right).\label{eq:app}
\end{equation}
Dividing both sides by $e^{-\kappa_1t}$ and rearranging the above gives
$$\frac{1-\pi}{\kappa_1-\kappa_2}e^{-(\kappa_2-\kappa_1)t}-\frac{1-\pi'}{\kappa_1-\kappa'_2}e^{-(\kappa'_2-\kappa_1)t}=\frac{1-\pi}{\kappa_1-\kappa_2}-\frac{1-\pi'}{\kappa_1-\kappa'_2}.$$
Taking the derivative with respect to $t$ yields
$$(1-\pi)e^{-(\kappa_2-\kappa_1)t}-(1-\pi')e^{-(\kappa'_2-\kappa_1)t}=0.$$
Taking a $\log$ and rearranging gives
$$t=\frac{1}{\kappa_1-\kappa_2'}\log \left(\frac{1-\pi}{1-\pi'}\right) .$$

As the derivative of Equation (\ref{eq:app}) has exactly one solution in $t$, Equation (\ref{eq:app}) has at most two solutions in $t$ by Rolle's theorem. Therefore, the second, fifth and eighth components of $\phi_{t_1,t_2,t_3}$ cannot take the same value. This is a contradiction, and so we must have $(\kappa_1,\kappa_2,\pi)=(\kappa_1,\kappa_2',\pi')$.

\end{proof}

\end{document}